\pgfplotsset{width=8cm,compat=newest}
\def\colorful{0}
\newcommand{\violet}[1]{{\color{violet}{#1}}}
\newcommand{\gray}[1]{{\color{gray}{#1}}}
\newcommand{\violet}[1]{{{#1}}}
\newcommand{\gray}[1]{{{}}}
\newcommand{\Sens}{\mathrm{Sens}}
\newcommand{\maxinf}{\mathrm{MaxInf}}
\newcommand{\balinf}{\textsc{BalInf}}
\newcommand{\bal}{\mathrm{bal}}
\def\mmww{\textsc{MMWW}}
\def\gmmww{\textsc{Gapped-MMWW}}
\newlist{enumprop}{enumerate}{1} 
\setlist[enumprop]{label=\arabic*.,ref=\theproposition.\arabic*}
\newtheorem*{rep@theorem}{\rep@title}
\newcommand{\newreptheorem}[2]{
\newenvironment{rep#1}[1]{
 \def\rep@title{#2 \ref{##1}}
 \begin{rep@theorem}\itshape}
 {\end{rep@theorem}}}
\newcommand{\Cert}{\mathrm{Cert}} 
\newcommand{\NP}{\mathsf{NP}} 
\renewcommand{\P}{\mathsf{P}} 
\newcommand{\coNP}{\mathsf{coNP}}
\begin{document}

\title{
Certification with an NP oracle
\vspace*{20pt} 
}

\author{ 
\hspace{15pt} Guy Blanc \vspace{6pt} \\
\hspace{15pt} {\small {\sl Stanford}} \and 
 Caleb Koch \vspace{6pt} \\ 
\hspace{-5pt} {\small {\sl Stanford}} \and 
\hspace{5pt} Jane Lange \vspace{6pt} \\ 
\hspace{5pt} {\small {\sl MIT}} \and 
\hspace{-5pt}  Carmen Strassle \vspace{6pt} \\
\hspace{-15pt}  {\small {\sl Stanford}} \and 
\hspace{-10pt} Li-Yang Tan \vspace{6pt}  \\
\hspace{-15pt} {\small {\sl Stanford}}
}

\date{\vspace{20pt}\small{\today}}

\maketitle

\begin{abstract}
In the {\sl certification problem}, the algorithm is given a function $f$ with certificate complexity~$k$ and an input $x^\star$, and the goal is to find a certificate of size $\le \poly(k)$ for $f$'s value at $x^\star$.  This problem is in $\mathsf{NP}^{\mathsf{NP}}$, and assuming $\mathsf{P} \ne \mathsf{NP}$, is not in $\mathsf{P}$.  Prior works, dating back to Valiant in 1984, have therefore sought to design efficient algorithms by imposing assumptions on $f$ such as monotonicity. 

Our first result is a  $\mathsf{BPP}^{\mathsf{NP}}$ algorithm for the general problem.  The key ingredient is a new notion of the {\sl balanced} influence of variables, a natural variant of influence that corrects for the bias of the function.  Balanced influences can be accurately estimated via uniform generation, and classic $\mathsf{BPP}^{\mathsf{NP}}$ algorithms are known for the latter task.

We then consider certification with stricter {\sl instance-wise} guarantees: for each $x^\star$, find a certificate whose size scales with that of the smallest certificate {\sl for $x^\star$}.   In sharp contrast with our first result, we show that this problem is $\mathsf{NP}^{\mathsf{NP}}$-hard even to approximate. We obtain an optimal inapproximability ratio, adding to a small handful of problems in the higher levels of the polynomial hierarchy for which optimal inapproximability is known.  Our proof involves the novel use of bit-fixing dispersers for gap amplification. 
\end{abstract}

\thispagestyle{empty}
\newpage 
\setcounter{page}{1}

\section{Introduction} 

For a function $f : \zo^n \to \zo$ and an input $x^\star \in \zo$, a {\sl certificate} for $f$'s value at $x^\star$ is a set $S\sse [n]$ of coordinates such that: 
\[ f(x^\star) = f(y) \quad \text{for all $y$ such that $y_S = x^\star_S$}. \] 
This is a set of coordinates that fully determines $f$'s value on $x^\star$, and coordinates outside this set are irrelevant in the sense that changing any of them in any way cannot change $f$'s value.   We write $\Cert(f,x^\star)$ to denote the size of a smallest certificate for $f$'s value at $x^\star$, and $\Cert(f)$ to denote  $\max_{x\in \zo^n}\{ \Cert(f,x)\}$, the certificate complexity of $f$.  Certificate complexity is among the most basic and well-studied measures of boolean function complexity~\cite{BdW02,Jukna12}.  

With the notion of certificates in mind, an algorithmic question suggests itself: can we design efficient algorithms for finding small certificates?   This leads us to the certification problem: 

\begin{quote}
{\bf Certification Problem:} Given the succinct description of a function $f$ and an input $x^\star\in \zo^n$, find a certificate of size $\le \poly(\Cert(f))$ for $f$'s value at $x^\star$. 
\end{quote}


\paragraph{The intractability of certification.} 
 Valiant was the first to consider the certification problem~\cite{Val84}. He observed that it is likely intractable in its full generality, since even the task of {\sl verifying} the output of a certification algorithm, i.e.~checking that a purported certificate is indeed a certificate, is {\sf coNP}-complete. 
 
Furthermore, the {\sf NP}-hardness of {\sc Sat} easily implies the following: 
\begin{fact}
\label{fact:intractable}
Assuming $\mathsf{P}\ne \mathsf{NP}$, there is no efficient algorithm for the certification problem, even if the algorithm only has to return a certificate of size $\Phi(\Cert(f))$ for any growth function $\Phi: \N \to \N$.  Similarly, we can rule out randomized algorithms under the assumption that $\mathsf{NP}\not\sse \mathsf{BPP}$. 
\end{fact}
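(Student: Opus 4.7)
The plan is to reduce SAT to the certification problem, so that a polynomial-time certification algorithm $A$ would yield a polynomial-time SAT algorithm. Given a CNF $\varphi$ on $n$ variables, I first check whether $\varphi(0^n) = 1$ (in which case $\varphi$ is trivially satisfiable); otherwise I will construct a new function $f$ such that calling $A$ on $f$ at the all-zeros input produces a certificate whose \emph{size} distinguishes the SAT case from the UNSAT case.

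My construction is the $m$-fold disjunction of $\varphi$ on disjoint blocks of variables,
\[ f(x^{(1)}, \ldots, x^{(m)}) \;:=\; \bigvee_{j=1}^{m} \varphi(x^{(j)}), \qquad m := \Phi(0) + 1, \]
represented succinctly as a polynomial-size circuit (since $m$ is a constant depending only on $\Phi$). I then run $A$ on $(f, 0^{nm})$ and output SAT iff the returned set $S$ has $|S| \geq m$. In the UNSAT case, $f \equiv 0$ so $\Cert(f) = 0$, and $A$'s guarantee pins $|S| \leq \Phi(0) = m - 1$. In the SAT case, fix any $z \in \zo^n$ with $\varphi(z) = 1$; if some block $j \in [m]$ contains no coordinate of $S$, then the point $y$ obtained by setting $y^{(j)} = z$ and $y^{(j')} = 0^n$ for $j' \neq j$ would satisfy $y_S = 0^S$ yet $f(y) = 1$, contradicting that $S$ certifies $f(0^{nm}) = 0$. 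Hence every block contributes at least one coordinate, so $|S| \geq m$.

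The step I plan to verify most carefully is the SAT-case lower bound above; the UNSAT upper bound falls out directly from $A$'s size guarantee. There is no single deep obstacle---the tension to balance is that the algorithm is only required to match $\Phi(\Cert(f))$, not $\Cert(f)$ itself, so the instance must amplify the gap between ``$f$ constant'' and ``$f$ non-constant'' past the slack $\Phi(0)$. The $m$-fold disjunction with $m := \Phi(0)+1$ is precisely this amplification. For the randomized version, I would apply the identical reduction to an amplified $A$ (boost success probability via majority vote on independent runs), yielding a $\mathsf{BPP}$ algorithm for SAT and contradicting $\mathsf{NP} \not\subseteq \mathsf{BPP}$.
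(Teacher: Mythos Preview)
Your proof is correct. The key observation---that a $0$-certificate for the $m$-fold OR at the all-zeros input must touch every block---is sound, and since $m = \Phi(0)+1$ is a fixed constant, the reduction is polynomial-time.

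The paper takes a somewhat different route: instead of the $m$-fold OR, it composes $\varphi$ with $\mathrm{BlockwisePar}:(\zo^\ell)^n\to\zo^n$ (each input bit of $\varphi$ becomes the parity of $\ell$ fresh bits). This forces every certificate of $f = \varphi\circ\mathrm{BlockwisePar}$ to have size $\ge \ell$ whenever $\varphi$ is non-constant, and with $\ell = n$ one gets that the intermediate problem $\textsc{GappedCert}(0,\sqrt{N})$ (on $N = n^2$ bits) is $\NP$-hard; a second short lemma then shows that any certification algorithm solves $\textsc{GappedCert}(0,\Psi)$ for every growth function $\Psi$. Your argument is more direct and elementary---it avoids the intermediate promise problem entirely and needs only the constant blow-up $m=\Phi(0)+1$. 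What the paper's route buys is a quantitatively stronger intermediate statement (an $\NP$-hard gap that grows with the instance size, later sharpened to $\Omega(N)$ using dispersers), which is of independent interest and foreshadows the disperser-based gap amplification used in the $\mathsf{NP}^{\mathsf{NP}}$-hardness proof. For the purpose of establishing the Fact itself, your approach is arguably cleaner.
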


For completeness, we include the proofs of Valiant's observation and~\Cref{fact:intractable} in~\Cref{sec:folklore}.

\paragraph{Prior certification algorithms.}  In light of these intractability results, prior certification algorithms have relied on assumptions on $f$.  Valiant gave a simple and efficient  algorithm for {\sl monotone} functions, which he used as a subroutine for PAC learning monotone DNF formulas (see also~\cite{Angluin88}). Recent works~\cite{BKLT22,GM22} give new certification algorithms for monotone functions that are furthermore highly query efficient, with the latter paper by Gupta and Manoj achieving the optimal query complexity.  In a separate line of work, Barcel\'o, Monet, P\'erez, and Subercaseaux~\cite{BMPS20} gave an efficient certification algorithm for {\sl halfspaces}. 

While the focus of these works and ours is theoretical in nature, there has also been a recent surge of interest in certification algorithms from an applied perspective.  Motivation here comes from the growing field of explainable machine learning, where one thinks of $f$ as a complicated model (e.g.~a neural net) and small certificates as succinct explanations for its decisions. In this literature, certificates are more commonly referred to as ``sufficient reasons"~\cite{SCD18} and ``anchors"~\cite{RSG18}; see~\cite{BMPS20,BLT-Explanations,BKLT22} for further discussions and references regarding this.

\subsection{Our results} 

Departing from the theme of prior works, we consider the certification problem in its full generality, without any assumptions on $f$, but allow algorithms that do not necessarily run in polynomial time.  There is a simple ${\sf NP}^{\mathsf{NP}}$ algorithm for general problem: first guess a certificate for $f$'s value at $x^\star$, and then use the {\sf NP} oracle to verify that it is indeed a certificate.  Therefore, the certification problem lies within (the function version of\footnote{Throughout this paper we conflate the decision and function versions of complexity classes, except in instances where there is a possibility of confusion.}) $\mathsf{NP}^{\mathsf{NP}}$, and assuming $\mathsf{P} \ne \mathsf{NP}$, lies outside of $\mathsf{P}$.  This leaves a rather wide gap between $\mathsf{P}$ and $\mathsf{NP}^{\mathsf{NP}}$.

At first glance, one may suspect that the true complexity of the certification problem is $\mathsf{NP}^{\mathsf{NP}}$. Afterall, the definition of certificate complexity inherently involves two quantifiers: {\sl there exists} a set $S$ such that $f(x^\star) = f(y)$ {\sl for all} $y$ that agrees with $x^\star$ on $S$.   Our first main result counters this intuition with a $\mathsf{BPP}^{\mathsf{NP}}$ algorithm:

\begin{theorem} 
\label{thm:upper bound intro} 
There is a randomized polynomial-time algorithm with $\NP$ oracle access that takes a polynomial-size circuit representation of $f:\zo^n\to\zo$ and a string $x^\star\in\zo^n$, and outputs w.h.p.~a certificate $S\sse [n]$ for $f$'s value on $x^\star$ satisfying $|S|= O(\Cert(f)^5)$. 
\end{theorem}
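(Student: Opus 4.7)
My approach hinges on a new notion of \emph{balanced influence} $\bal_i(f)$ that rescales the ordinary influence so that heavily biased functions still exhibit large influence on their relevant variables. The standard $\mathrm{Inf}_i(f) = \Pr_x[f(x) \neq f(x^{\oplus i})]$ is at most $2\min\{\Pr[f=0],\Pr[f=1]\}$, so it vanishes under bias even when certificate complexity is small. A natural corrective is
\[
\bal_i(f) \;=\; \frac{\mathrm{Inf}_i(f)}{\min\{\Pr[f=0],\Pr[f=1]\}},
\]
with $\bal_i(f) := 0$ for constant $f$. As a sanity check, on the $\mathrm{AND}$ of $k$ variables each relevant $i$ has $\bal_i = \Omega(1)$ whereas $\mathrm{Inf}_i = 2^{-(k-1)}$.

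The first step is a combinatorial lemma: any non-constant $f$ with $\Cert(f) \leq k$ admits some coordinate with $\bal_i(f) \geq 1/\poly(k)$. I would analyze a random boundary edge of the minority class: pick $y \in f^{-1}(b^\star)$ uniformly, consider a smallest certificate $S^\star(y)$ of size $\le k$, and argue that a uniformly random extension $x$ consistent with $x_{S^\star(y)} = y_{S^\star(y)}$ has at least one bichromatic edge in the $S^\star(y)$-direction with substantial probability (by minimality of $S^\star(y)$, removing any coordinate destroys the certification property). Averaging over $y$ and over the $\leq k$ coordinates of $S^\star(y)$ then extracts a single $i$ for which $\Pr_x[f(x) = b^\star \wedge f(x^{\oplus i}) = 1 - b^\star]$ is $\Omega(\Pr[f=b^\star]/\poly(k))$, giving the desired $\bal_i$ bound.

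The second step is algorithmic: $\bal_i(f)$ is computable in $\mathsf{BPP}^{\mathsf{NP}}$ via the classical approximate uniform generator and approximate counter of Jerrum--Valiant--Vazirani, which, given a circuit for $f$ and an $\mathsf{NP}$ oracle, samples approximately uniformly from $f^{-1}(b)$ and estimates $|f^{-1}(b)|/2^n$ to high multiplicative accuracy. Chernoff-style averaging over these samples yields simultaneous $1/\poly(k)$-accurate estimates of $\bal_i$ for every $i$. The certification algorithm is then a greedy restriction: initialize $S = \emptyset$; if $f|_{x^\star_S}$ is constant, which is checkable with a single $\mathsf{NP}$ query, output $S$; otherwise invoke the estimator on $f|_{x^\star_S}$ to find $i \notin S$ with $\bal_i(f|_{x^\star_S}) \geq 1/\poly(k)$, which is guaranteed by the combinatorial lemma since restriction cannot increase certificate complexity, and add $i$ to $S$.

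The main obstacle, besides the combinatorial lemma, is bounding the iteration count by $O(k^5)$ independently of the ambient dimension $n$. A naive potential tracking the minority-class probability decreases only multiplicatively per round, giving a bound that depends on $n$; a sharper argument must exploit the relation $D(f) \le \Cert(f)^2$ between decision-tree depth and certificate complexity to re-amortize progress (e.g.\ by showing that $\poly(k)$ rounds suffice to drive the \emph{residual} certificate complexity strictly down, and iterating that structural invariant). I expect the exponent $5$ to compound polynomial slack from the combinatorial lemma, the estimation error from sampling noise, and the potential-based iteration bound; sharpening any of these would improve the bound.
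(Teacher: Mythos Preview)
Your definition of balanced influence is essentially the paper's (they normalize by $4\Var(f)$, which is within a factor of two of your $\min\{\Pr[f=0],\Pr[f=1]\}$), and you correctly identify that it can be estimated in $\mathsf{BPP}^{\mathsf{NP}}$ via uniform generation. However, there are two genuine gaps.

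\medskip
\noindent\textbf{The combinatorial lemma.} Your proposed argument does not work as stated. Minimality of $S^\star(y)$ only guarantees, for each $j\in S^\star(y)$, the \emph{existence} of one point $z$ in the subcube $\{x:x_{S^\star(y)\setminus\{j\}}=y_{S^\star(y)\setminus\{j\}}\}$ with $f(z)\neq f(y)$; it does not lower bound the \emph{fraction} of such points, so the averaging step that is supposed to extract $\Pr[f(\bx)=b^\star\wedge f(\bx^{\oplus j})\neq b^\star]\ge \Omega(\Pr[f=b^\star]/\poly(k))$ does not follow. The paper instead invokes the OSSS inequality, which gives $\maxinf(f)\ge 4\Var(f)/\Depth(f)$ and hence $\bal_i(f)\ge 1/\Depth(f)\ge 1/\Cert(f)^2$ directly.

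\medskip
\noindent\textbf{The iteration bound.} The more serious gap is that your algorithm restricts each chosen variable to $x^\star$'s value, and you admit you do not have a termination argument. The paper's key structural lemma bounds the depth via the potential $\phi(d)=\E_{\bell}[\Inf(f_{\bell})]$, using the identity $\Inf(f)-\Inf_i(f)=\tfrac12\bigl(\Inf(f_{x_i=0})+\Inf(f_{x_i=1})\bigr)$ to show $\phi(d)\le k(1-1/k^3)^d$. This identity is an \emph{average} over both children and says nothing about the specific child $f_{x_i=x^\star_i}$; for a fixed branch the total influence can increase. The paper's resolution is a two-level decomposition that you are missing: the inner routine restricts the high-balanced-influence variable to a \emph{uniformly random} bit (not to $x^\star_i$), so the potential argument goes through and after $O(k^4)$ steps one obtains a restriction $\pi$ making $f_\pi$ constant---i.e., a certificate for \emph{some} input. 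The outer routine then takes the variable set of $\pi$, restricts those variables to $x^\star$'s values, and recurses; since every $0$-certificate and every $1$-certificate share a variable, each outer iteration drops $\Cert_0+\Cert_1$ by at least one, bounding the number of outer iterations by $2k$ and the total certificate size by $O(k^5)$. Your vague suggestion to ``drive the residual certificate complexity strictly down'' is in the right spirit, but the concrete mechanism---decouple into a random-restriction inner loop and an $x^\star$-driven outer loop linked by the $0/1$-certificate intersection property---is the idea you are missing.
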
 

Our algorithm does not need to be given the value of $\Cert(f)$ as an input.  {\color{black}{
As an example setting of parameters, for a function $f$ with certificate complexity $O(n^\eps)$ our algorithm always returns a certificate of size $O(n^{5\eps}) \ll n$. We note that the class of functions with certificate complexity $O(n^\eps)$ is very expressive and includes, among other functions, all decision trees of depth $O(n^{\eps})$. 
}}

Our algorithm uses its $\mathsf{NP}$ oracle in two ways: 
\begin{enumerate}
    \item {\sl Uniform generation.} Our algorithm uses as a key subroutine the ability to sample a uniform random satisfying assignment $\bx\sim f^{-1}(1)$ of $f$.  Early and influential results of theoretical computer science show that this can be done efficiently with an $\mathsf{NP}$ oracle~\cite{JVV86,BGP00}. 
    \item {\sl Verification.}  Our randomized algorithm returns a set that is a certificate for $f$'s value at $x^\star$ with high probability.  As mentioned, the task of verifying that a set is indeed a certificate is {\sf coNP}-complete, and so we use the $\mathsf{NP}$ oracle for this purpose.  
\end{enumerate}

Regarding the $\mathsf{BPP}^{\mathsf{NP}}$ algorithms for uniform generation, while they are not efficient in the traditional sense of worst-case complexity, there is an active line of research that seeks to make them as practical as possible: by replacing the $\mathsf{NP}$ oracle with state-of-the-art SAT solvers; optimizing the number of calls to these solvers; etc. See~\cite{Var,DM22} and the references therein.

\paragraph{Barriers to instance-wise guarantees.} In the spirit of beyond worst-case analysis, it is natural to ask if our algorithm can be strengthened so its guarantees hold {\sl instance-wise}:  can it always return a certificate of size $\poly(\Cert(f,x^\star))$ instead of $\poly(\Cert(f))$?  There are two known lower bounds  against such algorithms, both in the strictest setting where the algorithm has to return a certificate of size exactly $\Cert(f,x^\star)$.  Gupta and Manoj~\cite{GM22} showed that any such algorithm for monotone functions must make $n^{\Omega(\Cert(f,x^\star))}$ many queries to $f$, in the setting where the algorithm only gets black-box queries to $f$ rather than an explicit description.   Barcel\'o, Monet, P\'erez, and Subercaseaux~\cite{BMPS20} showed that the decision version of the problem, where the goal is to decide if $\Cert(f,x^\star)\le k$ for a given parameter $k$, is $\mathsf{NP}^{\mathsf{NP}}$-hard.  

Neither of these results rule out algorithms that return a certificate of size $\le \poly(\Cert(f,x^\star))$ or even $O(\Cert(f,x^\star))$.  Our second main result does so by showing that $\Cert(f,x^\star)$ is optimally  $\mathsf{NP}^{\mathsf{NP}}$-hard to approximate: 

\begin{theorem} 
\label{thm:hardness of approximation}
The following holds for every constant $\eps>0$. Given a circuit representation of $f:\zo^n\to\zo, k\in \N,$ and $x^\star\in \zo^n$, it is $\NP^{\NP}$-hard to distinguish between
\begin{itemize}
    \item[$\circ$] \textnormal{YES:} there exists a certificate of size $\le k$ for $f$'s value on $x^\star$;
    \item[$\circ$] \textnormal{NO:} Every certificate for $f$'s value on $x^\star$ has size $>k\cdot n^{1-\eps}$.
\end{itemize}
\end{theorem}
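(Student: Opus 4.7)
The plan is to reduce from the exact decision version of the minimum certificate problem, shown $\NP^{\NP}$-hard by Barcel\'o, Monet, P\'erez, and Subercaseaux~\cite{BMPS20}. That reduction only produces the trivial gap of one between YES and NO instances, so the main task is gap amplification, which I will achieve by composing the base function with a bit-fixing disperser.

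Concretely, given an instance $(g: \zo^{n_0}\to\zo,\, y^\star,\, k_0)$ of the exact problem, I set $N := n_0^{O(1/\eps)}$ and let $D: \zo^N \to \zo^{n_0}$ be a bit-fixing disperser with two properties: (i) for every subcube $C \subseteq \zo^N$ of dimension at least $t := N^{\eps/2}$, one has $D(C) = \zo^{n_0}$; and (ii) there is a point $x^\star \in D^{-1}(y^\star)$ at which each coordinate function $D_j$ has certificate complexity at most $d := N^{\eps/2}/k_0$. The reduction outputs $(f := g\circ D,\, x^\star,\, k := k_0\cdot d)$. Completeness: if $S \sse [n_0]$ is a certificate for $g$ at $y^\star$ with $|S|\le k_0$, and $T_j$ is a minimum $x^\star$-certificate of $D_j$ (of size at most $d$) for each $j\in S$, then $T := \bigcup_{j\in S} T_j$ is a certificate for $f$ at $x^\star$ of size at most $k_0 d = k$. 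Soundness: if some $T$ with $|T|\le k\cdot N^{1-\eps}$ certifies $f$ at $x^\star$, then $|T| \le N^{1-\eps/2} < N-t$ by the parameter choice, so the subcube $C_T$ has dimension at least $t$, and property (i) yields $D(C_T) = \zo^{n_0}$; but then $g$ would have to be globally constant, contradicting $\Cert(g,y^\star)\ge 1$ in the NO case.

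The principal obstacle is constructing $D$ satisfying both (i) and (ii) together. A $d$-junta output is ruled out by (i), since fixing its $d$ relevant coordinates leaves an $(N-d)$-dimensional subcube on which that output is constant, and $N-d \gg t$. A fully generic (e.g., random) output has certificate complexity $\Omega(N)$ at every point and so violates (ii). I would aim for a middle ground in which each $D_j$ is globally nonlocal but has a short certificate at the specific chosen $x^\star$; a natural form is a trigger-based construction $D_j(x) = x_{\pi(j)} \odot \tilde{D}_j(x)$, in which a single \emph{pivot} bit at $x^\star$ certifies the value of $D_j$, while a core map $\tilde{D}$ supplies the dispersion on subcubes where the pivots are free. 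Subcubes that partially fix the pivots are the delicate case, and I would handle them by equipping each output with several redundant pivots so that no small bit-fixing can silence all of them simultaneously. With $d, t = \Theta(N^{\eps/2})$, the existence of such a $D$ should follow from a probabilistic argument in the spirit of classical bit-fixing disperser constructions.
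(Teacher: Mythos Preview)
Your approach has a genuine and fatal gap: properties (i) and (ii) that you require of $D$ are mutually contradictory for the parameters you choose. Suppose $D_j$ has a certificate $T_j$ of size $\le d$ at $x^\star$. Then on the subcube $C$ obtained by fixing the $T_j$-coordinates to $x^\star$, the function $D_j$ is constant, so $D(C)\sse\{y:y_j=y^\star_j\}\subsetneq\zo^{n_0}$. But $\dim(C)\ge N-d\ge N-N^{\eps/2}\gg t=N^{\eps/2}$, so property (i) demands $D(C)=\zo^{n_0}$. You noticed this for juntas, but the argument uses only that $D_j$ has a short \emph{certificate} at $x^\star$, which is exactly what (ii) posits. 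Your trigger construction does not escape this: if a single pivot bit certifies $D_j$ at $x^\star$, then fixing that one bit yields a subcube of dimension $N-1$ on which $D_j$ is constant, again contradicting (i). Redundant pivots do not help either, since any one of them already gives a size-$1$ certificate. The contradiction is inherent: a zero-error bit-fixing disperser with threshold $t$ cannot have any output coordinate with certificate complexity below $N-t$ at any point.

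The paper sidesteps this tension entirely by \emph{not} attempting gap amplification from the exact problem. Instead it reduces from \textsc{Gapped-MMWW}, which is already $\NP^{\NP}$-hard with an $n^{1-\eps}$ gap by~\cite{TSUZ01}. An \textsc{MMWW} instance is a co-nondeterministic circuit $C(x,y)$ accepting a monotone set in $x$; the paper sets $f(x,z)\coloneqq C(x,D(z))$ and certifies the all-ones input. The key structural point is that the variables split into two groups: the $x$-variables, which carry the certificate in the YES case (by monotonicity, the $1$-coordinates of a low-weight accepted $x$ suffice), and the $z$-variables, which are passed through the disperser so that including few of them in a certificate buys nothing. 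Because the disperser is applied only to the variables we \emph{do not} want in the certificate, there is no conflict between dispersion and short certificates. Your reduction passes the \emph{entire} input through $D$, so the same object must simultaneously hide information (for soundness) and reveal it cheaply (for completeness); that is why it cannot work.
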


There is by now a sizeable number of problems that are known to be hard for higher levels of the polynomial hierarchy~\cite{SU02a,SU02b}.  However, relatively few of them are known to be hard to {\sl approximate}, and yet fewer for which optimal inapproximability ratios have been established.  The papers~\cite{Uma99,Uma99b,Uma00,Uma01,TSUZ01,MU02} cover many of the existing results; see also the survey~\cite{Uma06}. \Cref{thm:hardness of approximation} thus adds an additional entry into this catalogue of natural problems known to be inapproximable for higher levels of the polynomial hierarchy.

\section{Proof overviews}
\label{sec:techniques}

\subsection{Overview of the proof of~\Cref{thm:upper bound intro}}

\paragraph{Minimality no longer suffices.}  As mentioned, many prior certification algorithms have focused on the class of {\sl monotone} functions~\cite{Val84,Angluin88,BKLT22,GM22}.  An especially nice feature of this setting is that it suffices to find {\sl minimal} certificates: a certificate $S$ such that no strict subset $S'\subsetneq S$ is a certificate:

\begin{fact}[Minimality suffices for monotone functions] 
\label{fact:monotone-minimal} 
If $f$ is monotone, for {\sl every} $x^\star$, {\sl every} minimal certificate $S$ for $f$'s value at $x^\star$ has size $|S| \le \Cert(f)$.
\end{fact}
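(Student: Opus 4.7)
The plan is to reduce to a cleaner setup where the minimality of $S$ literally forces $S$ to be the smallest possible certificate. WLOG assume $f(x^\star) = 1$; the case $f(x^\star) = 0$ follows by applying the argument to the dual monotone function $g(x) := 1 - f(1 - x)$, for which $\Cert(g) = \Cert(f)$ and $0$-certificates of $f$ at $x^\star$ correspond to $1$-certificates of $g$ at $1 - x^\star$.

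The first step is a ``pruning'' lemma: every minimal $1$-certificate $S$ for $x^\star$ satisfies $x^\star_i = 1$ for every $i \in S$. Indeed, if some $i \in S$ had $x^\star_i = 0$, I would show $S \setminus \{i\}$ is already a certificate, contradicting minimality. For any $y$ with $y_{S \setminus \{i\}} = x^\star_{S \setminus \{i\}}$, the input $y'$ obtained from $y$ by setting its $i$th coordinate to $0$ agrees with $x^\star$ on all of $S$ and hence has $f(y') = 1$; monotonicity then gives $f(y) \ge f(y') = 1$, so $S \setminus \{i\}$ certifies $f$'s value at $x^\star$.

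The second and final step introduces a canonical input: let $y^\star \in \zo^n$ have $1$-coordinates exactly $S$. By the pruning lemma, $y^\star$ and $x^\star$ agree on $S$, so $S$ is also a $1$-certificate for $y^\star$. The key claim is $\Cert(f, y^\star) = |S|$. To prove it, take any $1$-certificate $S'$ for $y^\star$. Re-running the pruning lemma with $y^\star$ in place of $x^\star$ lets me assume $S' \sse S$ (the $1$-coordinates of $y^\star$). But any $S' \sse S$ that certifies $y^\star$ also certifies $x^\star$, since the two inputs agree on $S \supseteq S'$. Minimality of $S$ as a certificate for $x^\star$ therefore forces $S' = S$, so $\Cert(f, y^\star) = |S|$ and hence $|S| \le \Cert(f)$. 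The main subtlety I anticipate is keeping the roles of $x^\star$ and $y^\star$ straight and remembering to invoke the pruning lemma for both; the conceptual point is that monotonicity is exactly what collapses ``minimal'' to ``smallest'' at the canonical input $y^\star$, which is precisely what fails in the non-monotone world.
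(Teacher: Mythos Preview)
Your proof is correct and follows essentially the same approach as the paper: both establish the ``pruning'' observation that a minimal $1$-certificate contains only $1$-coordinates, then pass to the canonical input $y^\star$ (the paper's $x'$) whose $1$-coordinates are exactly $S$, and argue that every minimal certificate for $y^\star$ is a subset of $S$ and hence equals $S$ by minimality. The only minor imprecision is that your pruning lemma applies to \emph{minimal} certificates, so when you ``take any $1$-certificate $S'$ for $y^\star$'' you should first pass to a minimal sub-certificate (or simply take $S'$ of size $\Cert(f,y^\star)$, which is automatically minimal); the paper handles this the same way.
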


This simple fact is crucially used in all prior certification algorithms for monotone functions.  While $\Cert(f)$ is a global property of $f$, the minimality of a specific certificate can be recognized locally: if at any point we have a candidate certificate $S$ such that dropping any of its coordinates results in it no longer being a certificate,~\Cref{fact:monotone-minimal} tells us that $|S| \le \Cert(f)$ and we are done.   

Unfortunately, \Cref{fact:monotone-minimal} is false for general functions:  the size of a minimal certificate could be exponentially larger than its certificate complexity:  

\begin{fact}[{\cite[Exercise 1.7]{Jukna12}}]
\label{fact:minimal-certificate-vs-c(f)-gap}
For each $n\in\N$, there is a function $f:\zo^n\to\zo$ and $x^\star\in\zo^n$ such that $x^\star$ has a minimal certificate $S\sse [n]$ of size $|S| \ge \Omega(n)$ even though $\Cert(f) \le O(\log n)$. 
\end{fact}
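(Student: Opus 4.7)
The plan is to witness the claimed gap with the standard \emph{address function}. Choose $k$ and $m = 2^k$ so that $k + m \le n$ and $m \ge n/2$ (take $k = \lfloor \log_2(n/2) \rfloor$), and partition the $n$ coordinates into $k$ address bits $a_1,\ldots,a_k$, $m$ data bits $y_0,\ldots,y_{m-1}$, and $n - k - m$ dummy bits that $f$ ignores. Reading $a\in\zo^k$ as an integer in $\{0,\ldots,m-1\}$, define
\[
f(a,y) \;=\; y_a.
\]

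For the upper bound on $\Cert(f)$, I would fix an arbitrary point $(a^\star, y^\star)$ and use the candidate certificate $\{a_1,\ldots,a_k\} \cup \{y_{a^\star}\}$: pinning the $k$ address bits forces any completion to have address $a^\star$, so $f$ collapses to $y_{a^\star}$, which is also pinned. Hence $\Cert(f) \le k+1 = O(\log n)$, as required.

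For the large minimal certificate, I would take $x^\star = 0^n$ and propose $T := \{y_0,\ldots,y_{m-1}\}$, which has size $m \ge n/2 = \Omega(n)$. The set $T$ is a certificate because fixing every data bit to $0$ makes $y_a = 0$ for every choice of $a$, so $f$ is identically $0$ on any completion. And $T$ is minimal because, for any $j$, removing $y_j$ admits the completion $a = j$, $y_j = 1$, $y_i = 0$ for $i \ne j$, on which $f$ evaluates to $1 \neq 0 = f(x^\star)$.

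The only substantive step here is identifying the right construction; the address function is essentially canonical for this kind of gap, since its short certificate exploits a pointer to compress information about which data bit is relevant, while at an all-zeros input there is no pointer to follow, forcing any ``greedy'' minimal certificate to spell out every data bit. Once the construction is in place, both bounds are routine one-line verifications, so there is no real technical obstacle beyond picking the function.
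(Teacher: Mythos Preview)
Your proposal is correct and is exactly the paper's approach: they also use the addressing function and observe that fixing all data bits gives a minimal certificate of linear size while $\Cert(f) = \log r + 1$. One inconsequential slip: with $k = \lfloor \log_2(n/2) \rfloor$ you get $m = 2^k \in (n/4,\, n/2]$, so $m \ge n/2$ should read $m > n/4$, but this is still $\Omega(n)$.
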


The implication of~\Cref{fact:minimal-certificate-vs-c(f)-gap} for the certification problem is that algorithms for general functions can get stuck at (very bad) local minima, which necessitates a more global understanding of the structure of functions with low certificate complexity.  We include the proofs of~\Cref{fact:monotone-minimal} and~\Cref{fact:minimal-certificate-vs-c(f)-gap} in~\Cref{sec:folklore}.

\paragraph{Our algorithm and its main subroutine.}  We now describe our algorithm.  Its main subroutine is a $\mathsf{BPP}^{\mathsf{NP}}$ algorithm for finding a restriction to a small number of variables under which $f$ becomes constant: 

\begin{lemma}
\label{lem:certify arbitrary input}
Given a succinct representation of $f : \zo^n\to \zo$ and an $\mathsf{NP}$ oracle, there is a randomized polynomial-time algorithm that w.h.p.~finds a restriction $\pi$ to $O(\Cert(f)^4)$ variables such that $f_\pi$ is a constant function. 
\end{lemma}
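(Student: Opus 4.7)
The plan is a greedy iterative algorithm. Starting from $\pi = \emptyset$, the algorithm first uses the $\mathsf{NP}$ oracle to check whether $f_\pi$ is constant (two queries: does $f_\pi$ have a $0$-input? a $1$-input?); if so, return $\pi$. Otherwise, it estimates a score $\mathrm{BalInf}_i(f_\pi)$ for each free variable $i$ by uniform sampling from $f_\pi^{-1}(0)$ and $f_\pi^{-1}(1)$ via the classical $\mathsf{BPP}^{\mathsf{NP}}$ uniform generators~\cite{JVV86,BGP00}. It then restricts the highest-scoring variable to the value that most reduces a progress measure $\Phi_\pi$, and iterates.

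The termination threshold rests on the following structural observation: any non-constant $g$ with $\Cert(g) \leq k$ satisfies $\min(|g^{-1}(0)|, |g^{-1}(1)|) \geq 2^{n-k}$, because a size-$\leq k$ certificate for a minority-class witness yields a sub-cube of size $2^{n-k}$ entirely contained in the minority class. A natural progress measure is $\Phi_\pi := \log_2 \min(|f_\pi^{-1}(0)|, |f_\pi^{-1}(1)|) - (n - |\pi| - k)$, which (i) is nonnegative whenever $f_\pi$ is non-constant, (ii) starts at $\leq k - 1$, and (iii) must drop below zero precisely when $f_\pi$ becomes constant. If one can show that each round decreases $\Phi_\pi$ by $\Omega(1/k^3)$ (after absorbing sampling error, and possibly charging progress amortizedly across rounds that merely halve the minority count), then the $O(k^4)$ iteration bound follows.

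The crucial structural lemma is: for every non-constant $f_\pi$ with $\Cert(f_\pi) \leq k$, some $(i, b)$ yields $\Phi_{\pi \cup \{x_i = b\}} \leq \Phi_\pi - \Omega(1/k^3)$; $\mathrm{BalInf}_i$ is defined exactly as this per-step potential drop maximized over $b \in \{0,1\}$, which is a quantity the sampling estimator approximates additively to $\pm \Theta(1/k^4)$ using $\mathrm{poly}(n, k)$ samples. I would prove the lemma by telescoping over a minimum certificate $S^\star$ of size $\leq k$ for a minority-class witness $x^\star$: applying the restrictions $x_i = x^\star_i$ for $i \in S^\star$ sequentially drives the minority-class count to $0$, so a multiplicative pigeonhole forces some step to beat the trivial halving by a factor $\Omega(1/k)$. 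An averaging argument over the order in which $S^\star$ is applied extracts the guaranteed progress from an intermediate step into a claim about the first step on the current $f_\pi$.

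The main obstacle I anticipate is exactly this extraction: the telescoping argument most naturally produces ``some intermediate restriction beats halving,'' and converting this to a first-step guarantee on the current function --- while also handling the case where restricting $x_i$ flips the minority class between $0$ and $1$ --- requires the symmetrization over $b \in \{0,1\}$ that gives balanced influence its ``balanced'' character. This subtlety, together with the slack needed to tolerate estimation error in the uniform generators and to handle rounds of trivial halving, likely accounts for the gap between a cleaner $O(k^2)$ analysis and the claimed $O(k^4)$ bound.
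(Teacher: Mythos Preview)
Your structural lemma---that some $(i,b)$ always yields $\Phi_{\pi \cup \{x_i=b\}} \leq \Phi_\pi - \Omega(1/k^3)$---is false, and the counterexample shows your potential is too coarse to drive the algorithm. Take $f(x) = x_1 \oplus \cdots \oplus x_m$ over $\zo^n$ with $n \gg m^4$, so $k = \Cert(f) = m$. By the symmetry of parity, restricting \emph{any} variable (relevant or not) to \emph{either} bit exactly halves both $|f_\pi^{-1}(0)|$ and $|f_\pi^{-1}(1)|$, so $\Phi$ is unchanged and your score satisfies $\mathrm{BalInf}_i = 0$ for every $i$. The algorithm therefore cannot distinguish the $m$ relevant variables from the $n-m$ irrelevant ones; with ties broken by noisy estimates it needs $\Theta(n) \gg k^4$ rounds. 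Your telescoping-plus-averaging sketch collapses for the same reason: applying a size-$m$ certificate one coordinate at a time gives zero $\Phi$-drop at every intermediate step (the entire drop to $-\infty$ happens only at the final restriction), so no reordering produces positive first-step progress.

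The paper's route differs at each point. Its balanced influence is $\balinf_i(f) \coloneqq \Prx_{\bx \sim \mathcal{D}_{\bal}^{(f)}}[f(\bx) \neq f(\bx^{\oplus i})] = \Inf_i(f)/(4\Var(f))$, a rescaled \emph{influence} rather than a drop in a minority-count potential; it vanishes exactly on irrelevant coordinates and equals $1$ on each parity coordinate in the example above. The potential is expected total influence $\phi(d) = \Ex_{\bell}[\Inf(f_{\bell})]$ along a uniformly random root-to-leaf path, and the selected variable is restricted to a \emph{uniformly random} bit, not the greedy one. The engine is the OSSS inequality $\maxinf(g) \geq 4\Var(g)/\Depth(g) \geq 4\Var(g)/\Cert(g)^2$, which combined with $\Inf(g) \leq 4\Cert(g)\Var(g)$ gives $\maxinf(g) \geq \Inf(g)/k^3$ and hence the multiplicative decay $\phi(d) \leq \phi(d-1)(1 - 1/k^3)$; a granularity fact (if $\Cert(g)\le k$ and $\Var(g) < 2^{-k} - 2^{-2k}$ then $g$ is constant) finishes after $O(k^4)$ steps. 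An OSSS-type lower bound on $\maxinf$ is precisely the global ingredient your plan lacks: a minority-count potential cannot detect it, because restricting a maximally influential coordinate of a balanced function need not skew the class counts at all.
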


Such an algorithm can be viewed as one that finds a certificate for {\sl some} input of $f$ (any of the inputs that are consistent with $\pi$), but not necessarily the specific input $x^\star$ that we are interested in certifying.  We then convert such an algorithm into an actual certification algorithm by calling it $2\,\Cert(f)$ times, resulting in a certificate for $x^\star$ of size $O(\Cert(f)^5)$.  This conversion algorithm is a fairly straightforward consequence of a basic property of certificates, that every $1$-certificate and every $0$-certificate must share at least one variable.  


\paragraph{Balanced influences.} To describe our algorithm for~\Cref{lem:certify arbitrary input}, we need a new notion of the {\sl balanced influence} of variables. Recall that the {\sl influence} of a variable $i$ on $f$ is the quantity $\Inf_i(f) \coloneqq  \Prx_{\bx\sim\zo^n}[f(\bx)\ne f(\bx^{\oplus i})]$, where $\bx$ is uniform random and $\bx^{\oplus i}$ denotes $\bx$ with its $i$-th bit flipped. It is easy to see that $\Inf_i(f) \le \Var(f)$, and so the more unbalanced $f$ is, the smaller its influences are.  Balanced influence corrects for this by measuring influence with respect to a distribution $\mathcal{D}_{\bal}^{(f)}$ that places equal weight on $f^{-1}(1)$ and $f^{-1}(0)$:

\begin{center} 
To sample $\bx \sim \mathcal{D}_{\bal}^{(f)}$: first sample $\bb \sim \zo$ uniformly and then $\bx \sim f^{-1}(\bb)$ uniformly.
\end{center} 

\begin{definition}[Balanced influences]
    \label{def:balanced-influences}
     The \emph{balanced influence} of a variable $i \in [n]$ on $f : \zo^n \to \zo$ is the quantity: 
    \begin{equation*}
        \balinf_i(f) \coloneqq \Prx_{\bx \sim \mathcal{D}_{\bal}^{(f)}}[f(\bx) \neq f(\bx^{\oplus i})].
    \end{equation*}
\end{definition}

Though simple, this variant of influence does not appear to have been explicitly considered before.  This definition syncs up nicely with the notion of uniform generation: an algorithm for uniform generation can be used to efficiently estimate balanced influences to high accuracy.  Given the classic $\mathsf{BPP}^{\mathsf{NP}}$ algorithms for uniform generation~\cite{JVV86,BGP00}, we therefore get: 

\begin{lemma} 
\label{lem:est-bal-inf}
There is a randomized algorithm that, given a succinct representation of $f : \zo^n \to \zo$, an $\mathsf{NP}$ oracle, $i\in [n]$ and $\eps > 0$, runs in $\poly(n,1/\eps)$ time and w.h.p.~outputs an estimate $\boldsymbol{\eta}_i = \balinf_i(f) \pm \eps$.  
\end{lemma}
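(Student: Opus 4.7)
The plan is to estimate $\balinf_i(f)$ by Monte Carlo sampling from the distribution $\mathcal{D}_{\bal}^{(f)}$, using the $\mathsf{BPP}^{\mathsf{NP}}$ algorithms for almost-uniform generation as the workhorse subroutine. First, the algorithm uses the $\mathsf{NP}$ oracle to check whether $f^{-1}(1)$ and $f^{-1}(0)$ are each non-empty (one query each); if either is empty then $f$ is constant, $\balinf_i(f) = 0$, and it outputs $0$. Otherwise, it draws $m = O(\eps^{-2}\log(1/\delta))$ independent samples $\bx^{(1)},\ldots,\bx^{(m)}$ from an approximation to $\mathcal{D}_{\bal}^{(f)}$, evaluates the indicator $Y_j \coloneqq \mathbf{1}[f(\bx^{(j)}) \neq f((\bx^{(j)})^{\oplus i})]$ at each sample using the given circuit for $f$, and outputs the empirical mean $\boldsymbol{\eta}_i \coloneqq \frac{1}{m}\sum_{j=1}^m Y_j$.

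To draw one sample from (an approximation to) $\mathcal{D}_{\bal}^{(f)}$, flip a fair coin $\bb \in \zo$, construct the circuit $C_{\bb}$ computing $\mathbf{1}[f = \bb]$ (by negating the output gate if $\bb = 0$), and invoke the classical almost-uniform generation algorithm of~\cite{JVV86,BGP00} on $C_{\bb}$ with total-variation parameter $\delta' \coloneqq \eps/4$. Since the satisfying-assignment relation of a circuit is self-reducible, this algorithm outputs a sample whose distribution is within total variation distance $\delta'$ of the uniform distribution on $C_{\bb}^{-1}(1) = f^{-1}(\bb)$, in time $\poly(n,\log(1/\delta'))$. Marginalizing over $\bb$, the resulting sampling distribution $\tilde{\mathcal{D}}$ is within total variation distance $\delta'$ of $\mathcal{D}_{\bal}^{(f)}$, so the mean $\tilde{\mu} \coloneqq \Prx_{\bx \sim \tilde{\mathcal{D}}}[f(\bx) \neq f(\bx^{\oplus i})]$ satisfies $|\tilde{\mu} - \balinf_i(f)| \le \delta' \le \eps/2$.

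The final error analysis is a routine concentration argument. The variables $Y_1,\ldots,Y_m$ are independent (each sample uses fresh randomness) and bounded in $[0,1]$ with common mean $\tilde{\mu}$, so Hoeffding's inequality gives $\Pr[|\boldsymbol{\eta}_i - \tilde{\mu}| > \eps/2] \le 2\exp(-m\eps^2/2) \le \delta$ for the chosen $m$; combining with the bias bound above via the triangle inequality yields $|\boldsymbol{\eta}_i - \balinf_i(f)| \le \eps$ with probability at least $1 - \delta$, and taking $\delta = 1/\poly(n)$ gives the stated w.h.p.\ guarantee within $\poly(n,1/\eps)$ total time. The only point requiring a moment's thought is how the TV-distance slack introduced by almost-uniform generation interacts with the empirical average; fortunately, because the indicator is $[0,1]$-valued, expectations under distributions that are $\delta'$-close in TV differ by at most $\delta'$, so no multi-sample hybrid argument is needed.
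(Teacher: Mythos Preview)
Your proposal is correct and follows essentially the same approach as the paper: Monte Carlo estimation of $\balinf_i(f)$ by sampling from $\mathcal{D}_{\bal}^{(f)}$ via uniform generation with an $\mathsf{NP}$ oracle, followed by a Hoeffding bound. The only minor difference is that the paper invokes the \emph{exact} uniform generation algorithm of~\cite{BGP00} (which outputs a truly uniform sample or $\perp$), thereby sidestepping the TV-distance bias analysis you carry out; the paper itself notes in a footnote that the approximate generator of~\cite{JVV86} would also suffice, which is exactly what your argument verifies.
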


We can now state our algorithm for~\Cref{lem:certify arbitrary input}. It is simple and proceeds by iteratively and randomly restricting the variables of $f$ with the largest balanced influence:

\begin{enumerate} 
\item Estimate $\balinf_i(f)$ for each $i\in [n]$ to within $\pm \frac1{4n}$ and let $i^\star$ be the index with the largest estimate. \item Randomly restrict $x_{i^\star} \leftarrow \bb$ where $\bb\sim \zo$ is uniform random. 
\item Recurse on $f_{x_{i^\star} = \bb}$. 
\end{enumerate} 

We prove that $f$ becomes constant w.h.p.~within ${O}(\Cert(f)^5)$ many iterations of this algorithm.   It is crucial that we work with {\sl balanced} influence here: this same algorithm fails (i.e.~it requires $\Omega(n)$ many iterations before $f$ becomes constant) if it is instead run on estimates of the usual notion of influence.\footnote{Indeed, since the usual notion of influence can be estimated to high accuracy without an $\mathsf{NP}$ oracle, we have by~\Cref{fact:intractable} that {\sl no} algorithm that is based only on estimates of usual influences can succeed unless $\mathsf{NP}\sse \mathsf{BPP}$.}

\subsection{Overview of the proof of~\Cref{thm:hardness of approximation}}

\paragraph{Monotone Minimum Weight Word and its inapproximability.} We prove~\Cref{thm:hardness of approximation} by reducing from the {\sc Monotone Minimum Weight Word} problem.  To state this problem we first define {\sl co-nondeterministic} circuits:

\begin{definition}
[Co-nondeterministic circuit] \label{def:nondet-circuit} 
A {\sl co-nondeterministic circuit} $C : \zo^n \times \zo^n \to \zo$ accepts an input $x\in \zo^n$ iff for all $y\in \zo^n$ $C(x,y)=1$. 
\end{definition}

We say that a co-nondeterministic circuit $C$ {\sl accepts a non-empty monotone set} if it accepts at least one $x$, and for every $x$ that
it accepts, it also accepts every $x'$ such that $x'\succeq x$.  (Note that $C : \zo^n \times\zo^n\to \zo$ itself need not be a monotone function.)   

\begin{definition} [{\sc Monotone Minimum Weight Word}] \label{def:MMWW} 
The {\sc Monotone Minimum Weight Word} $(\mmww)$ problem is the following: given a co-nondeterministic circuit $C : \zo^n \times \zo^n \to \zo$ that accepts a non-empty monotone set and an integer $k \in \N$, does $C$ accept an input $x\in \zo^n$ with at most $k$ ones?
\end{definition}

Umans~\cite{Uma99} showed that the $\mmww$ problem is $\mathsf{NP}^{\mathsf{NP}}$-hard to approximate within $n^{\frac1{5}-\eps}$. This inapproximability ratio was subsequently improved to the optimal $n^{1-\eps}$ by Ta-Shma, Umans, and Zuckerman~\cite{TSUZ01}: 

\begin{theorem}
[{$\gmmww$} is   $\mathsf{NP}^{\mathsf{NP}}$-hard] \label{thm:umanslb} 
The following holds for every constant $\eps > 0$. Given a co-nondeterministic circuit $C : \zo^n \times \zo^n \to \zo$ that accepts a non-empty monotone set and an integer $k\in \N$, it is $\mathsf{NP}^{\mathsf{NP}}$-hard to distinguish between: 
\begin{itemize} 
\item[$\circ$] \textnormal{YES:} $C$ accepts an $x$ with $\leq k$ ones; 
\item[$\circ$] \textnormal{NO:} Every $x$ that $C$ accepts has $>  k\cdot n^{1-\eps}$ ones. 
\end{itemize} 
\end{theorem}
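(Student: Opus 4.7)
The plan is to reduce from a canonical $\mathsf{NP}^{\mathsf{NP}}$-complete problem such as $\exists\forall$-\textsc{Sat}, first obtaining a gapless version of \mmww{} and then amplifying the gap by composing with an explicit disperser. The base reduction is direct: given a formula $\phi(u,v)$, I would build a co-nondeterministic circuit $C(x,y)$ whose $x$-variables double-encode each existential bit $u_i$ by a pair of coordinates in a monotone fashion (so that adding ones to $x$ never falsifies membership), whose $y$-variables correspond to the universal $v$, and which accepts iff $\phi(\mathrm{decode}(x), y) = 1$. The accepted set of $C$ is then monotone and non-empty, and a minimum-weight accepting $x$ tracks a minimum-weight satisfying assignment of the existential block, giving a gap-preserving correspondence with the base problem.

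For the gap amplification step, I would embed the base instance $C$ into a much larger variable space $[N]$ using an explicit disperser $D : [N] \times [D] \to [n]$. The construction defines $C'$ on $N$ variables so that (i) any low-weight accepted $x$ in $C$ lifts to a low-weight accepted $x'$ in $C'$, via a carefully chosen pre-image of $\mathrm{supp}(x)$ under $D$; and (ii) any accepted $x'$ in $C'$ of weight below $K \coloneqq n^{1-\eps} \cdot k$ must, when projected through the seeds of $D$, induce an accepted $x$ for $C$ of weight below $k$, contradicting the NO hypothesis for $C$. The disperser guarantee---that any subset of $[N]$ of size below $K$ fails to cover all of $[n]$ through at least one seed---is what drives (ii), while the monotonicity of the base encoding propagates to $C'$ by construction.

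The main obstacle is constructing a disperser with parameters strong enough to deliver the optimal $n^{1-\eps}$ gap: near-linear output length ($n = N^{\Omega(1)}$), vanishing error, and a seed length $\log D$ small enough that $N$ remains polynomial in $n$. Umans' original reduction used weaker pseudorandom objects and obtained only $n^{1/5 - \eps}$; the optimal ratio of Ta-Shma--Umans--Zuckerman rests on their explicit extractor/disperser construction achieving these near-tight parameters, which is really the technical heart of the improvement over the base reduction.
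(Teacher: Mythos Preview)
The paper does not prove this theorem. \Cref{thm:umanslb} is quoted from Ta-Shma, Umans, and Zuckerman~\cite{TSUZ01} (building on Umans~\cite{Uma99}) and is used as a black box in the reduction to the certification problem. There is therefore no ``paper's own proof'' to compare your proposal against.

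Your sketch is a reasonable high-level account of how the original \cite{Uma99,TSUZ01} proofs go: a direct reduction from a $\Sigma_2$-complete problem to obtain the gapless version of \mmww{}, followed by disperser-based gap amplification to push the ratio to $n^{1-\eps}$. If you were asked to reconstruct that proof, this is broadly the right shape, though the details of the monotone encoding and the exact disperser parameters would need to be filled in carefully.

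One caution: do not conflate the dispersers appearing in \emph{this} paper with the ones you invoke. The paper uses \emph{bit-fixing} dispersers (\Cref{thm:dispersers}, from~\cite{GS12}) for a different purpose entirely --- to reduce \emph{from} $\gmmww$ \emph{to} the certification problem (\Cref{lem:reduction-from-mmww-to-certification}), by making it expensive for a certificate to touch the nondeterministic $y$-variables. That is orthogonal to the seeded-disperser gap amplification inside \cite{TSUZ01} that you are describing. Both arguments happen to use dispersers, but the object, the parameters, and the role in the reduction are all different.
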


\paragraph{First attempt at a reduction.}  To describe the intuition behind our reduction, we first consider what happens when we take an instance of $\gmmww$, which is specified by a co-nondeterministic circuit $C : \zo^n \times \zo^n \to \zo^n$ and an integer $k$, and map it to the certification problem with $f(x,y) = C(x,y)$ and the input to be certified being $(x^\star,y^\star) = (1^n,1^n)$.  We consider the two possible cases: 

\begin{itemize} 
\item[$\circ$] If $(C,k)$ is a YES instance of $\gmmww$, it accepts an $x \in \zo^n$ with $\le k$ ones. Let $S\sse [n]$ be the size-$k$ set of $1$-coordinates of this input $x$.  It is straightforward to verify that $S$ is a certificate for $f$'s value on $(x^\star,y^\star)$, and so $\Cert(f,(x^\star,y^\star))\le k$. 
\item[$\circ$] If $(C,k)$ is a NO instance of $\gmmww$, we would like it to be the case that $\Cert(f,(x^\star,y^\star)) > k\cdot n^{1-\eps}$.  Let $S \sse [n] \times [n]$ be a smallest certificate for $f$'s value on $(x^\star,y^\star) $.  If $S$ only contains coordinates of $x^\star$, it is again straightforward to verify that $C$ accepts the input $x\in\zo^n$ that is the indicator vector of $S$, and so $|S| > k\cdot n^{1-\eps}$ as desired. However, $S$ may contain one of more coordinates of $y^\star$, and in that case we do not have a lower bound on its size. 
\end{itemize}

\paragraph{The actual reduction.}  Intuitively, we would like to fix this issue by modifying $C$ so that it becomes disproportionately more ``expensive" to include $y^\star$-coordinates in the certificate compared to $x^\star$-coordinates, in the sense that including even a single $y^\star$-coordinate contributes $> k\cdot n^{1-\eps}$ to the size of the certificate, compared to just one in the case of single $x^\star$-coordinate.  We accomplish this with this use of {\sl bit-fixing dispersers}, a well-studied construct in the pseudorandomness literature. 

Given an instance $C : \zo^n \times \zo^n\to \zo$ of $\gmmww$, we map it to an instance of the certification problem where the function $f : \zo^n\times \zo^m \to \zo^n$ is:  
\[ f(x,z) \coloneqq C(x,D(z)), \] 
and $D : \zo^m \to \zo^n$ satisfies the following properties: 
\begin{enumerate} 
\item {\bf $D$ is explicit.}  There is an efficient algorithm that, given $m$ and $n$, produces the circuit description of $D : \zo^m\to\zo^n$ in $\poly(m,n)$ time. This is so that our reduction is efficient.   
\item {\bf $D$ retains full image under restrictions.}  For any set $S \sse [m]$ of size $\le k\cdot n^{1-\eps}$ and any assignment $u\in \zo^{|S|}$, the image of $D_{S\leftarrow u}$ is all of $\zo^n$: for any $y\in \zo^n$ there is an $z\in \zo^{m-|S|}$ such that $D_{S\leftarrow u}(z) = y$.  This ensures that one must fix $> k\cdot n^{1-\eps}$ many $z$-variables of $f$ in order to fix even a single $y$-variable of $C$. 
\item {\bf Small $m$.}  For the second property to hold $m$ certainly has to be at least $n$.  Since $f$ is a function over $m+n$ variables, would like $m$ to be as close to $n$ as possible in order to preserve the $n^{1-\eps}$ ratio of $\gmmww$. 
\end{enumerate} 

A simple construction of a function satisfying the first two properties is the blockwise parity function: 
\[ \mathrm{BlockwisePar} : (\zo^\ell)^n \to \zo, \] 
\[ \mathrm{BlockwisePar}(z^{(1)},\ldots,z^{(n)})^n = (\oplus_{j=1}^\ell z^{(1)}_j,\cdots,\oplus_{j=1}^\ell z^{(n)}_j)\] 
where $\ell = k\cdot n^{1-\eps}+1$. In this case $m = \ell\cdot n = \Theta(n^{2-\eps}),$ and so the $n^{1-\eps}$ ratio of $\gmmww$ translates into a gap of $n^{\frac1{2}-\eps}$ for the certification problem. 

Functions satisfying the first two properties are known as zero-error bit-fixing dispersers in the pseudorandomness literature.  The current best  construction, due to Gabizon and Shaltiel~\cite{GS12}, gives $m = O(n)$ for our setting of parameters, and so using it in place of BlockwisePar enables us to achieve the optimal inapproximability ratio of $n^{1-\eps}$, thereby yielding~\Cref{thm:hardness of approximation}. 

\section{Discussion and future work}

Since the certification problem is intractable unless $\mathsf{P} =  \mathsf{NP}$, to further understand it we must either rely on assumptions about $f$ or allow algorithms that are not necessarily efficient in the traditional sense of running in polynomial time.  Complementing previous works that take the former route, in this work we consider the latter option and study the complexity of certification {\sl relative to an $\mathsf{NP}$ oracle}, giving new algorithmic ($\mathsf{BPP}^{\mathsf{NP}}$) and hardness ($\mathsf{NP}^{\mathsf{NP}}$) results.  Our motivation is twofold.  First, given how natural the problem is, we believe that it is of independent interest to understand its inherent complexity.  Second, given the empirical success of SAT solvers and their increasing adoption in a variety of real-world algorithmic tasks, the distinction between between problems in $\mathsf{BPP}^{\mathsf{NP}}$ (``exists an efficient algorithm assuming all calls to the SAT solver run quickly") versus those that are hard for  $\mathsf{NP}^{\mathsf{NP}}$ (``intractable even if all calls to the SAT solver run in unit time") is especially relevant in this context.   

{\color{black}{
We list a couple of concrete open problems suggested by our work. A natural one is to improve on the bound of $O(\Cert(f)^5)$ given by~\Cref{thm:upper bound intro}:  

\begin{openproblem} 
Is there a $\mathsf{BPP}^{\mathsf{NP}}$ certification algorithm that returns certificates of length $O(\Cert(f))$? 
\end{openproblem} 
}}

Next,~\Cref{thm:hardness of approximation} shows that given $f$ and $x^\star$, it is optimally $\mathsf{NP}^{\mathsf{NP}}$-hard to approximate the size of the smallest certificate for $f$'s value at $x^\star$, i.e.~to approximate the quantity $\Cert(f,x^\star)$.  Since there is an $\mathsf{NP}^{\mathsf{NP}}$ algorithm that computes it exactly, this settles the complexity of the problem.  It is equally natural to consider the problem of computing/approximating $\Cert(f)$.  There is a simple $\Pi_3$ algorithm that computes it exactly. However, little is known in terms of lower bounds: 

{\color{black}{ 
\begin{openproblem}
Is certificate complexity $\Pi_3$-hard to compute and what is the complexity of approximating this quantity?  
\end{openproblem}
}}

Concluding on a speculative note, in the spirit of {\sl interactive proofs}, it would be interesting to extend our $\mathsf{BPP}^{\mathsf{NP}}$ algorithm to the setting where the algorithm interacts with a powerful but {\sl untrusted} oracle. More broadly, there should be much to be gained by bringing techniques from interactive proofs to bear on problems, such as certification, that are motivated by explainable machine learning. Such an ``interactive theory of explanations" was listed as a specific direction in the 2020 TCS Visioning Report~\cite{CNUW21} and aligns well with ongoing efforts in machine learning~\cite{WB19}, but to our knowledge has thus far not been explored much.



\section{Basic results regarding the certification problem}
\label{sec:folklore}

\subsection{The necessity of an $\NP$ oracle}

\paragraph{Verifying a certificate is $\coNP$-complete.}  Given a candidate certificate for an input, verifying the certificate is computationally hard. Indeed, we observe that such a problem is $\coNP$-complete. This lemma shows that we at least need access to a \textit{$\coNP$} oracle or equivalently an $\NP$ oracle.


\begin{definition}[{\sc VerifyCert}]
    Given a polynomial-size circuit for $f: \zo^n \to \zo$, input $x \in \zo^n$, and candidate certificate $S \subseteq [n]$, decide whether $S$ is indeed a certificate for $f$'s classification of $x$.
\end{definition}

\begin{lemma}[Observed in~\cite{Val84}]
    \label{lem:verifycert-conp-complete}
    {\sc VerifyCert} is $\coNP$-complete.
\end{lemma}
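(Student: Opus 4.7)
The plan is to establish membership in $\coNP$ and then $\coNP$-hardness via a polynomial-time many-one reduction from {\sc Unsat}.

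For the upper bound, I would observe that $S$ \emph{fails} to certify $f$'s value at $x$ exactly when there exists $y \in \zo^n$ with $y_S = x_S$ and $f(y) \neq f(x)$. Such a $y$ is a polynomial-size witness that can be checked in polynomial time using the given circuit for $f$, so the complement of {\sc VerifyCert} lies in $\NP$ and hence {\sc VerifyCert} itself lies in $\coNP$.

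For hardness, I would reduce from {\sc Unsat}. Given a circuit $\varphi$ on $n$ variables, the reduction first evaluates $\varphi(0^n)$ in polynomial time. If $\varphi(0^n) = 1$, then $\varphi$ is satisfiable, and the reduction outputs a fixed NO instance of {\sc VerifyCert}, e.g., $f(y) := y_1$, $x := 0^n$, $S := \emptyset$, which is witnessed as a non-certificate by $y = (1, 0, \ldots, 0)$. Otherwise $\varphi(0^n) = 0$, and the reduction outputs $(f := \varphi,\, x := 0^n,\, S := \emptyset)$. In this branch $f(x) = 0$ is known, and $\emptyset$ is a certificate for $f$'s value at $x$ precisely when $\varphi(y) = 0$ for every $y \in \zo^n$, i.e., precisely when $\varphi$ is unsatisfiable. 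A straightforward case check confirms correctness of both branches of the reduction.

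The only mildly subtle point, which is the step I would flag as the crux, is that the definition of a certificate implicitly requires knowing $f(x)$; arranging the reduction so that this value at the chosen query point is known is exactly what forces the preliminary evaluation of $\varphi(0^n)$. Beyond that, the construction is direct and the case analysis routine, yielding both membership in $\coNP$ and $\coNP$-hardness.
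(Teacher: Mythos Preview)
Your proof is correct and mirrors the paper's argument almost exactly: the paper shows membership in $\coNP$ via the same witness $y$, and proves hardness by reducing from {\sc Tautology} rather than {\sc Unsat}, evaluating $\varphi$ at an arbitrary point, branching on the result, and in the main branch outputting $(f=\varphi,\ x,\ S=\varnothing)$ so that $\varnothing$ is a certificate iff $\varphi$ is a tautology. The two reductions are duals of one another, and your treatment of the preliminary-evaluation branch as producing a fixed NO instance is, if anything, slightly more explicit about the many-one form than the paper's presentation.
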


To prove this lemma, we give a reduction from the canonical $\coNP$-complete problem \textsc{Tautology}. 

\begin{definition}[{\sc Tautology} \cite{Cook71}]
    Given a Boolean formula $\varphi:\zo^n\to\zo$, decide whether $\varphi$ is a tautology: $\varphi(x)=1$ for all $x\in\zo^n$.
\end{definition}

\begin{proof}[Proof of \Cref{lem:verifycert-conp-complete}]
First, we show that $\textsc{VerifyCert}\in\coNP$. Specifically, we observe that $\overline{\textsc{VerifyCert}}\in\NP$. Given $f:\zo^n\to\zo$, an input $x\in\zo^n$, and a candidate $S\sse [n]$, one can nondeterministically guess a string $y$ such that $x_S=y_S$ and $f(x)\neq f(y)$. This string $y$ exists if and only if $S$ is \textit{not} a certificate for $x$.

Next, we show $\textsc{VerifyCert}$ is $\coNP$-hard via a reduction from \textsc{Tautology}. Let $\varphi:\zo^n\to\zo$ be a Boolean formula. We show how to determine if $\varphi$ is a tautology using $\textsc{VerifyCert}$. Let $x\in\zo^n$ be arbitrary. If $\varphi(x)=0$ output ``not a tautology''. Otherwise, call $\textsc{VerifyCert}$ on the instance $f=\varphi, S=\varnothing$ and $x$. If $\varnothing$ is a certificate for $x$ then $\varphi$ is a tautology: all $y\in\zo^n$ satisfy $\varphi(y)=\varphi(x)=1$. Otherwise, there is some input $y$ for which $\varphi(y)=0$. 
\end{proof}


\paragraph{Solving the certificate finding problem efficiently without an $\NP$ oracle would prove $\P = \NP$ (\Cref{fact:intractable}).}  Next, we show that any efficient algorithm for certification can be used to solve {{\sc GappedCert}}$(0,\Psi)$: the promise problem of distinguishing whether $\Cert(f)$ is $0$ or every input requires size-$\Psi(n)$ certificates. Perhaps surprisingly, {\sc GappedCert}$(0,\Omega(n))$ turns out to be $\NP$-hard. As a result, we should not expect certification to be efficiently solvable without access to an $\NP$ oracle.

\begin{definition}[$\textsc{GappedCert}(0,\Psi)$]
\label{def:gapped-cert}
Given a polynomial-size circuit for $f:\zo^n\to\zo$ and a growth function $\Psi:\N\to\N$, the $\textsc{GappedCert}(0,\Psi)$ problem is to distinguish between
\begin{itemize}
    \item[$\circ$] \textnormal{YES:} $\Cert(f,x)\ge \Psi(n)$ for all $x\in\zo^n$;
    \item[$\circ$] \textnormal{NO:} $\Cert(f)=0$.
\end{itemize}
\end{definition}

As a promise decision problem, any algorithm for $\textsc{GappedCert}(0,\Psi)$ is allowed to answer arbitrarily on input instances which do not fall into the two cases. Note that SAT reduces to $\textsc{GappedCert}(0,1)$ since $\textsc{GappedCert}(0,1)$ is equivalent to determining whether $f$ is constant. To obtain hardness against the certification problem, we require hardness for \textsc{GappedCert} where the growth function is nonconstant. 

\begin{lemma}
    \label{lem:sqrt-gapped-cert-is-np-complete}
    $\textsc{GappedCert}(0,\sqrt{n})$ is $\NP$-hard.
\end{lemma}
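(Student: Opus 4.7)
The plan is to reduce from \textsc{Sat}. Given a formula $\varphi$ on $k$ variables, first preprocess to avoid the possibility that $\varphi$ might be a tautology: define $\varphi'(x_1,\ldots,x_k,x_{k+1}) \coloneqq \varphi(x_1,\ldots,x_k) \wedge x_{k+1}$. Then $\varphi'$ is satisfiable iff $\varphi$ is, and whenever $\varphi$ is satisfiable $\varphi'$ is non-constant: it equals $0$ whenever $x_{k+1}=0$ and equals $1$ at $(x^\star,1)$ for any satisfying assignment $x^\star$ of $\varphi$. In particular, $\Cert(\varphi',z)\geq 1$ for every $z \in \zo^{k+1}$.

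Next, set $T\coloneqq k+1$ and build the $T$-fold XOR composition
\[
f(x^{(1)},\ldots,x^{(T)}) \coloneqq \bigoplus_{i=1}^{T} \varphi'(x^{(i)}),
\]
a function on $n = T(k+1) = (k+1)^2$ variables, so $\sqrt{n}=T$. A polynomial-size circuit for $f$ is immediate from one for $\varphi$, making this a polynomial-time reduction.

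If $\varphi$ is unsatisfiable then $\varphi'\equiv 0$ and hence $f\equiv 0$, landing us in the NO case $\Cert(f)=0$. If $\varphi$ is satisfiable, the claim is that we land in the YES case: $\Cert(f,w)\geq T = \sqrt{n}$ for every input $w=(w^{(1)},\ldots,w^{(T)})$. The heart of the argument is a standard XOR-composition fact: any certificate $S\subseteq [n]$ for $f$ at $w$ must restrict to a certificate within each of the $T$ blocks. Indeed, partition $S = S_1\cup\cdots\cup S_T$ by blocks and suppose some $S_j$ fails to certify $\varphi'(w^{(j)})$; then there is an alternative $w^{(j)\prime}$ agreeing with $w^{(j)}$ on $S_j$ with $\varphi'(w^{(j)\prime})\neq \varphi'(w^{(j)})$, and the full input $(w^{(1)},\ldots,w^{(j)\prime},\ldots,w^{(T)})$ still agrees with $w$ on $S$ yet flips exactly one term of the XOR, contradicting that $S$ certifies $f(w)$. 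Hence $|S_j|\geq \Cert(\varphi',w^{(j)})\geq 1$ for each $j$, giving $|S|\geq T$.

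The only subtle point is the tautology edge case for $\varphi$: had we tried the naive $f = \bigoplus_i \varphi(x^{(i)})$ without conjoining $x_{k+1}$, a tautology $\varphi$ would collapse $f$ to a constant and wreck the YES side even though $\varphi$ is satisfiable. The preprocessing step resolves this in one line, after which everything is the direct XOR-composition argument above.
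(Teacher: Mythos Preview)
Your proof is correct, and it parallels the paper's in spirit---both amplify the trivial bound $\Cert(g,z)\ge 1$ for a nonconstant function $g$ via a parity composition---but the two compositions go in opposite directions. The paper composes \emph{inside}: it sets $f=\varphi\circ\mathrm{BlockwisePar}$, replacing each input bit of $\varphi$ by the parity of a length-$\ell$ block, so that certifying even a single input bit of $\varphi$ requires fixing all $\ell$ bits of its block; with $\ell=n$ this gives a function on $n^2$ variables with $\Cert(f,z)\ge n$. You compose \emph{outside}: you XOR together $T$ disjoint copies of $\varphi'$ and argue that any certificate for the XOR must restrict to a certificate within each copy. Your explicit handling of the tautology case via the conjunct $x_{k+1}$ is a point the paper's proof glosses over (the same one-line fix works there too). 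One thing the paper's inner-composition viewpoint buys is a direct path to the stronger $\Omega(n)$ bound of \Cref{lem:n-gapped-cert-is-np-complete}: swapping $\mathrm{BlockwisePar}$ for a bit-fixing disperser keeps the number of input variables $O(n)$ while still forcing certificates of size $\Omega(n)$; your outer-XOR construction does not obviously extend in the same way.
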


\begin{proof}
Let $\varphi:\zo^n\to\zo$ be a Boolean formula. Recall the definition of $\mathrm{BlockwisePar}:(\zo^\ell)^n\to\zo^n$
$$
\mathrm{BlockwisePar}(z^{(1)},\ldots,z^{(n)})=(\oplus_{j=1}^\ell z_j^{(1)},\ldots,\oplus_{j=1}^\ell z_j^{(n)}).
$$
We let $f=\varphi\circ\mathrm{BlockwisePar}$. For all $z\in (\zo^\ell)^n$, we have 
$$
\Cert(f,z)\ge \Cert(\varphi,\mathrm{BlockwisePar}(z))\cdot\ell
$$
since $\ell$ coordinates of $f$ need to be fixed in order to fix one coordinate of $\varphi$. In particular, if $\varphi$ is nonconstant, then $\Cert(\varphi,\mathrm{BlockwisePar}(z))\ge 1$ and so $\Cert(f,z)\ge \ell$ for all $z\in (\zo^\ell)^n$. 

Therefore, to determine satisfiability of $\varphi$ it is sufficient to solve $\textsc{GappedCert}(0,\ell)$ for the function $f:(\zo^\ell)^n\to\zo$. Choosing $\ell=n$ yields the hardness in the lemma statement.
\end{proof}

\Cref{lem:sqrt-gapped-cert-is-np-complete} is sufficient to establish \Cref{fact:intractable}. However, we also note that a slightly more technical proof yields near-optimal gapped hardness.

\begin{lemma}[Near-optimal gapped hardness]
    \label{lem:n-gapped-cert-is-np-complete}
    $\textsc{GappedCert}(0,\Omega(n))$ is $\NP$-hard.
\end{lemma}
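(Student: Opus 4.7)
The plan is to modify the reduction in the proof of \Cref{lem:sqrt-gapped-cert-is-np-complete} by replacing $\mathrm{BlockwisePar}$, which suffers a quadratic input blow-up, with a linear-rate gadget that still hides every output coordinate from any sub-linear subset of the inputs. I would reduce from the $\NP$-hard problem of deciding whether a formula $\varphi:\zo^n\to\zo$ is non-constant (which reduces from \textsc{Sat} via $\varphi'(x,b)\coloneqq\varphi(x)\wedge b$), and compose $\varphi$ with a well-chosen linear map $D:\zo^m\to\zo^n$ with $m=O(n)$.

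For $D$ I would fix any explicit family of binary linear codes $C\sse\zo^m$ of dimension $n=m/c$ and constant relative minimum distance $\alpha>0$ (e.g.\ Justesen or concatenated Reed--Solomon codes); let $M\in\zo^{n\times m}$ be a generator matrix of $C$ and set $D(z)\coloneqq Mz$ (over $\mathrm{GF}(2)$). The key property to establish is that for every $S\sse[m]$ with $|S|<\alpha m$ and every $u\in\zo^{|S|}$, the restricted map $D_{S\leftarrow u}$ still has full image $\zo^n$. This follows from a standard rank argument: the image fails to cover $\zo^n$ only if the columns of $M$ indexed by $[m]\setminus S$ fail to span $\zo^n$, equivalently only if some nonzero linear combination $v^\top M$ of the rows of $M$ is supported on $S$; but such a combination is a nonzero codeword of weight $\leq |S|<\alpha m$, contradicting the distance of $C$. (Alternatively, the zero-error bit-fixing disperser of Gabizon and Shaltiel~\cite{GS12} cited in \Cref{sec:techniques} supplies a $D$ with the same parameters.)

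With $f\coloneqq\varphi\circ D$, the reduction closes exactly as in \Cref{lem:sqrt-gapped-cert-is-np-complete}. If $\varphi$ is constant then so is $f$, so $\Cert(f)=0$ and $f$ is a NO instance. If $\varphi$ is non-constant, then for any $z\in\zo^m$ and any candidate certificate $S\sse[m]$ with $|S|<\alpha m$, the full-image property lets me pick $z^{(0)},z^{(1)}$ agreeing with $z$ on $S$ whose images under $D$ realize the two values of $\varphi$; then $f(z^{(0)})\ne f(z^{(1)})$ and $S$ is not a certificate. Hence $\Cert(f,z)\ge\alpha m=\Omega(m)$ for every $z\in\zo^m$, giving a YES instance of $\textsc{GappedCert}(0,\Omega(m))$ on the $m=O(n)$ variables of $f$, as desired.

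The main obstacle is the construction of $D$: a product of parity blocks, as in \Cref{lem:sqrt-gapped-cert-is-np-complete}, buys the ``restriction has full image'' property only at the cost of $m=\Theta(n^2)$ and hence only the $\sqrt{n}$ gap. Pushing the gap all the way to $\Omega(n)$ genuinely requires a constant-rate gadget, which is why a code-theoretic (or disperser-theoretic) ingredient must enter; once that gadget is in hand, the remainder of the argument is an immediate verification essentially identical to the $\sqrt{n}$ case.
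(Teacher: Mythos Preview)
Your proposal is correct and follows the same strategy as the paper: compose $\varphi$ with a gadget $D:\zo^m\to\zo^n$, $m=O(n)$, whose image remains all of $\zo^n$ after fixing any $o(m)$ input bits, so that a non-constant $\varphi$ forces every certificate of $f=\varphi\circ D$ to have size $\Omega(m)$. The paper instantiates $D$ with the Gabizon--Shaltiel zero-error bit-fixing disperser (\Cref{thm:dispersers}); you instead take $D(z)=Mz$ for $M$ the generator matrix of an explicit constant-rate, constant-relative-distance binary code, and your rank argument (a nonzero $v^\top M$ supported on $S$ would be a sub-distance codeword) correctly shows this $D$ has the required full-image-under-restriction property. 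This is a more elementary and self-contained choice of gadget than the paper's, and it even avoids the $\log^c m$ slack in the entropy threshold of \Cref{thm:dispersers}; otherwise the two proofs are identical.
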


We provide a proof of \Cref{lem:n-gapped-cert-is-np-complete} in \Cref{appendix:gappedcert-hardness}.

\begin{lemma}
\label{lem:certification-yields-gapped-cert-algo}
    If there is a polynomial-time algorithm for the certification problem which returns certificates of size $\Phi(\Cert(f))$ for some growth function $\Phi:\N\to\N$, then there is a polynomial-time algorithm for $\textsc{GappedCert}(0,\Psi)$ for all growth functions $\Psi:\N\to\N$. 
\end{lemma}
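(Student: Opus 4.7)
My plan is to reduce $\textsc{GappedCert}(0,\Psi)$ to a single invocation of the hypothetical certification algorithm, applied to a gap-amplified version of the input $f$, exactly in the spirit of the proof of~\Cref{lem:sqrt-gapped-cert-is-np-complete}. The key observation is that $\Phi(0)$ is a fixed constant determined solely by the certification algorithm: it is an upper bound on the size of the certificate the algorithm produces when run on any constant function. Given an instance $f:\zo^n \to \zo$ of $\textsc{GappedCert}(0,\Psi)$, I will set $\ell \coloneqq \Phi(0)+1$ and construct $f' = f \circ \mathrm{BlockwisePar}: (\zo^\ell)^n \to \zo$, whose circuit is of polynomial size and is produced in polynomial time. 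I will then run the certification algorithm on $f'$ at an arbitrary input $z^\star$ to obtain a certificate $S$, and output YES iff $|S| > \Phi(0)$.

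For correctness, the two cases behave as follows. In the NO case, $f$ is constant, hence so is $f'$, so $\Cert(f') = 0$ and the returned certificate satisfies $|S| \le \Phi(0)$, giving the output NO. In the YES case, $\Cert(f, x) \ge \Psi(n) \ge 1$ for every $x$, and reusing the block-parity amplification argument from~\Cref{lem:sqrt-gapped-cert-is-np-complete} yields
\[
\Cert(f', z) \;\ge\; \ell \cdot \Cert(f, \mathrm{BlockwisePar}(z)) \;\ge\; \ell \cdot \Psi(n) \;\ge\; \ell
\]
for every $z \in (\zo^\ell)^n$. Since $S$ must be a valid certificate for $z^\star$, we have $|S| \ge \Cert(f', z^\star) \ge \ell = \Phi(0)+1 > \Phi(0)$, giving the output YES.

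The construction of $f'$, the threshold comparison, and the single call to the certification algorithm all run in polynomial time, so the whole reduction is polynomial. I do not foresee a real obstacle: the block-parity amplification already did the heavy lifting in~\Cref{lem:sqrt-gapped-cert-is-np-complete}, and the only additional ingredient I need is the trivial fact that every valid certificate returned for $z^\star$ has size at least $\Cert(f', z^\star)$. The only mild subtlety worth flagging is that the argument implicitly needs $\Psi(n) \ge 1$ for the YES and NO cases to actually be separated by the promise; if $\Psi(n) = 0$ the promise is vacuous and the reduction's output on such $n$ may be arbitrary, which is consistent with the definition of a promise problem.
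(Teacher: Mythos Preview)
Your proof is correct, but it takes a different route from the paper's.  The paper does not amplify at all: it simply runs the certification algorithm directly on $f$ at an arbitrary $x^\star$, obtains a certificate $S$, and outputs YES iff $|S|\ge\Psi(n)$.  The two-sided bound $\Cert(f,x^\star)\le|S|\le\Phi(\Cert(f))$ then separates the cases: if $|S|\ge\Psi(n)$ then $\Phi(\Cert(f))\ge\Psi(n)$, which (for $n$ large enough that $\Psi(n)>\Phi(0)$) forces $\Cert(f)>0$ and rules out the NO case; if $|S|<\Psi(n)$ then $\Cert(f,x^\star)<\Psi(n)$, ruling out the YES case.

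Your approach instead reuses the BlockwisePar amplification from \Cref{lem:sqrt-gapped-cert-is-np-complete} to push the YES case above the fixed constant $\Phi(0)$, so that the threshold you compare against is $\Phi(0)$ rather than $\Psi(n)$.  This costs you the extra step of constructing $f'$, but in return you avoid the paper's ``for $n$ large enough'' caveat: your argument works uniformly for every $\Psi$ with $\Psi(n)\ge 1$, whereas the paper's direct argument implicitly needs $\Psi(n)>\Phi(0)$ eventually.  Both approaches are perfectly valid for the application (establishing \Cref{fact:intractable} via $\Psi(n)=\sqrt{n}$), and yours is arguably a bit cleaner at the boundary, but the paper's is more direct.
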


\begin{proof}
Suppose such an algorithm exists for the certification problem. Let $\Psi:\N\to\N$ be an arbitrary growth function and let $f:\zo^n\to\zo$ be an instance of $\textsc{GappedCert}(0,\Psi)$. Let $S\sse [n]$ be a certificate obtained by running the algorithm for the certification problem on $f$ for any input $x^\star\in\zo^n$. Output ``NO'' if $|S|<\Psi(n)$ and ``YES'' if $|S|\ge \Psi(n)$. 

The correctness of our output follows from the observation that $\Cert(f,x)\le |S|\le \Phi(\Cert(f))$ and so
\begin{align*}
    |S|&\ge \Psi(n)\quad \Rightarrow \quad\Psi(n)\le \Phi(\Cert(f))\\
    |S|&<\Psi(n)\quad \Rightarrow \quad\Cert(f,x)<\Psi(n).
\end{align*}
Assuming $n$ is large enough, the first case implies $\Cert(f)>0$ which allows us to rule out being in the \textnormal{NO} case of $\textsc{GappedCert}(0,\Psi)$. The second case rules out being in the \textnormal{YES} case of $\textsc{GappedCert}(0,\Psi)$.
\end{proof}

\Cref{fact:intractable} follows as an immediate consequence of \Cref{lem:sqrt-gapped-cert-is-np-complete,lem:certification-yields-gapped-cert-algo}: any polynomial-time algorithm for the certification problem yields a polynomial-time algorithm for $\textsc{GappedCert}(0,\sqrt{n})$ which implies $\NP=\P$. Likewise, any randomized polynomial-time algorithm for the certification problem yields a randomized polynomial-time algorithm for $\textsc{GappedCert}(0,\sqrt{n})$ which implies $\NP\sse \mathsf{BPP}$.

\subsection{Certificate complexity, minimal certificates, and monotonicity} 

In this section, we prove \Cref{fact:monotone-minimal} and \Cref{fact:minimal-certificate-vs-c(f)-gap}. We start with a formal definition of \textit{minimal certificates}.

\begin{definition}[Minimal certificates\footnote{Minimal certificates are also called minterms/maxterms as in e.g. \cite{Jukna12}. A minterm is a minimal $0$-certificate and a maxterm is a minimal $1$-certificate.}]
A certificate $S\sse [n]$ for $f:\zo^n\to\zo$ on $x\in\zo^n$ is a \textit{minimal} certificate if no $S'\subsetneq S$ is a certificate for $f$ on $x$. 
\end{definition}

\paragraph{Proof of \Cref{fact:monotone-minimal}}{
Suppose $S\sse [n]$ is a minimal $1$-certificate for $f$'s value on a string $x^\star\in\zo^n$ (if $S$ is a $0$-certificate the argument is symmetric). First, we observe that $S$ can only contain $1$-coordinates of $x^\star$. Indeed, if $S$ contained a $0$-coordinate, the certificate $S'$ obtained from $S$ by removing that $0$-coordinate would still certify $x^\star$ by monotonicity and would therefore contradict the minimality of $S$. Consider the string $x'\in\zo^n$ formed by setting all the coordinates in $S$ to $1$ and all the coordinates outside $S$ to $0$. Then, $|S|=\Cert(f,x')\le \Cert(f)$. Specifically, any minimal certificate for $f$'s value on $x'$ must be a subset of its $1$-coordinates and hence a subset of $S$. The minimality of $S$ implies no $S'\subsetneq S$ can certify $f$'s value on $x'$.\hfill\qed
}

\paragraph{Proof of \Cref{fact:minimal-certificate-vs-c(f)-gap}.}{
This fact is witnessed by the \textit{addressing function}, $\textsc{Address}_r:[r]\times\zo^{r}\to\zo$ defined as $\textsc{Address}_r(x,y)=y_x$. Viewing $\textsc{Address}_r$ as a Boolean function on $\log r+r$ bits, we have $\Cert(\textsc{Address}_r)=\log r+1$ but any fixing of the $r$ bits of $y$ constitutes a minimal certificate.\hfill\qed
}


\section{A structural result for functions with low certificate complexity}
\label{sec:structural}

{\bf Notation and useful definitions.} 
All distributions over $\zo^n$ are uniform unless otherwise specified. We use {\bf boldface} to denote random variables (e.g. $\bx\sim\zo^n$) and we write ``w.h.p." to mean with probability $\ge 1-1/n^{\omega(1)}$.

\begin{definition}[Variance] 
    For any function $f: \zo^n \to \zo$, the variance of $f$ is defined as
    \begin{equation*}
        \Var(f) \coloneqq \underset{\bx\sim\zo^n}{\mathrm{Var}}
        [f(\bx)] = \Prx_{\bx \sim \zo^n}[f(\bx) = 0] 
        \cdot \Prx_{\bx \sim \zo^n}[f(\bx) = 1].
    \end{equation*}
\end{definition}

\begin{definition}[Influence]
    For any functions $f: \zo^n \to \zo$, we define the \emph{influence} of the coordinate $i \in [n]$ to be
    \begin{equation*}
        \Inf_i(f) \coloneqq \Prx_{\bx \sim \zo^n}[f(\bx) \neq f(\bx^{\oplus i})]
    \end{equation*}
    where $x^{\oplus i} \in \zo^n$ is the unique point that differs from $x$ in only the $i^{\text{th}}$ coordinate. We also define the \emph{total influence} as
    \begin{equation*}
        \Inf(f) \coloneqq \sum_{i \in [n]} \Inf_i(f),
    \end{equation*}
    and the \emph{maximum influence} as
    \begin{equation*}
        \maxinf(f) \coloneqq \max_{i \in [n]} \left\{  \Inf_i(f)\right\}.
    \end{equation*}
\end{definition}

%
%
%
%
%
%
\paragraph{Main structural result.} In this section we prove the following structural result:  iteratively and randomly restricting a function $f$ by its most influential variable causes it to become constant w.h.p.~after \violet{$O(\Cert(f)^4)$} iterations. To do so, we analyze the following family of decision trees. 
 
\begin{definition}
\label{def:inf-max-tree}
For a function $f : \zo^n\to\zo$, the {\sl influence-maximizing tree of depth $d$ for $f$}, 
denoted $T_f(d)$, is the complete tree of depth $d$
such that for each internal node $v$, 
the variable $i(v)$ queried is the one with the largest influence in the 
subfunction $f_v$: 
\[ \Inf_{i(v)}(f_v) \ge \Inf_j(f_v) \quad \text{for all $j\in [n]$}. \] 
Ties are broken arbitrarily, 
and for convenience, variables are still queried even if the function is already constant before depth $d$. 
\end{definition}

\violet{See \Cref{fig:path-in-idt} for an illustration of the influence-maximizing tree. We interpret the leaves and internal nodes of $T_f(d)$ as functions obtained from $f$ by restricting each variable corresponding to an edge in that node's path.} Note that a random leaf $\bell \sim T_f(d)$ corresponds a random restriction where we iteratively and randomly restrict the most influential variable of $f$ for $d$ iterations.

\begin{lemma}
\label{lem:most-leaves-constant}
Let $f$ be a function with certificate complexity $\Cert(f) \le k$. 
Then, for a leaf $\bell$ drawn uniformly at random from  $T_f(d)$ for $d = O(k^4 + k^3 \log (1/\eps))$, we have 
\[ \Prx_{\bell}\big[ \text{$f_{\bell}$ is constant}\,\big] \ge 1-\eps.  \] 
\end{lemma}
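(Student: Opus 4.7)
The plan is to prove \Cref{lem:most-leaves-constant} by a potential-function drift argument along the restriction path from the root of $T_f(d)$ to a uniformly random leaf $\bell$. The natural potential to track is the variance $\Var(f_v)$ at each node $v$ of $T_f(d)$, where $f_v$ is the subfunction obtained by applying the restrictions along the path to $v$. A direct calculation yields the per-step identity
\[
\Var(f_v) \;-\; \tfrac{1}{2}\Var\bigl((f_v)_{x_{i^*} = 0}\bigr) \;-\; \tfrac{1}{2}\Var\bigl((f_v)_{x_{i^*} = 1}\bigr) \;=\; \tfrac{1}{4}(p_0 - p_1)^2,
\]
where $p_b := \Prx_{\bx}[(f_v)_{x_{i^*} = b}(\bx) = 1]$ and $i^* = i^*(v)$ is the max-influence variable queried at $v$. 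So the expected per-step drop in variance equals $\tfrac{1}{4}(p_0 - p_1)^2$, and the goal is to lower-bound this by $\Omega(\Var(f_v)/\poly(k))$, which would give exponential decay $\mathop{\mathbb{E}}_\bell[\Var(f_\bell)] \le \Var(f)\cdot (1 - 1/\poly(k))^d$ by the tower property.

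The crux is the following structural lemma: for any non-constant $f$ with $\Cert(f) \le k$, the max-influence variable $i^*$ satisfies $(p_0(i^*) - p_1(i^*))^2 \ge \Omega(\Var(f)/k^{O(1)})$. Via the Fourier identity $(p_0(i) - p_1(i))^2 = 4\,\widehat{f}(i)^2$, this is a lower bound on the largest squared level-$1$ Fourier coefficient of $f$. The natural attack combines (i) a lower bound on the level-$1$ Fourier mass $W^{=1}(f) := \sum_i \widehat{f}(i)^2$ for functions with small certificate complexity---using that such functions have small real-polynomial degree (at most $\Cert(f)^2$ via $D(f) \le C_0(f)\cdot C_1(f)$) and applying a level-$1$ inequality for low-degree functions---with (ii) a bound on the number of ``Fourier-significant'' variables via a subcube decomposition of $f^{-1}(0)$ and $f^{-1}(1)$ into size-$\le k$ certificates, so that averaging yields some $i^*$ with $\widehat{f}(i^*)^2 \ge W^{=1}(f)/\poly(k)$.

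Composing the per-step drift across all $d$ levels of $T_f(d)$ gives, by the tower property, an expected variance at a uniformly random leaf bounded by $\Var(f)\cdot (1 - 1/k^{O(1)})^d$. To convert this into the probability that $f_\bell$ is constant, I would switch to the integer-valued potential $\Cert(f_\bell)$, which inherits the same multiplicative drift form (since $\Var \ge 2^{-\Cert - 1}$ on non-constant functions). Markov's inequality plus the integrality of $\Cert$ then gives
\[
\Prx_{\bell}[f_\bell \text{ non-constant}] \;=\; \Prx_{\bell}[\Cert(f_\bell) \ge 1] \;\le\; \mathop{\mathbb{E}}_{\bell}[\Cert(f_\bell)] \;\le\; k\cdot (1 - 1/k^{O(1)})^d,
\]
which is at most $\eps$ for $d = O(k^4 + k^3 \log(1/\eps))$.

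The main obstacle is the structural lemma. Parity on $k$ variables shows that no lower bound on $(p_0 - p_1)^2$ in terms of $\maxinf(f)$ alone can hold: $\mathrm{PAR}_k$ has $\maxinf = 1$ but $p_0 - p_1 \equiv 0$. The proof must therefore genuinely exploit the low-certificate-complexity structure rather than just max influence. The cleanest resolution is likely a hybrid potential of the form $\Var(f)\cdot 2^{-\alpha \Cert(f)}$, or a two-case analysis separating ``parity-like'' nodes (where $(p_0 - p_1) \approx 0$ but $\Cert(f_v)$ deterministically drops because restricting $i^*$ eliminates a relevant variable) from ``bias-dominated'' nodes (where the variance potential genuinely shrinks). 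Engineering all the $\poly(k)$ factors to fit inside the stated $O(k^4 + k^3 \log(1/\eps))$ budget will be the delicate part.
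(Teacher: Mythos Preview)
Your variance-potential approach hits a genuine wall, and you half-recognize it but do not resolve it. Parity on $k$ variables has $\Cert(\mathrm{PAR}_k) = k$, so it is a legitimate instance of the hypothesis, yet every variable has $p_0 = p_1$ and hence the per-step drop $\tfrac{1}{4}(p_0 - p_1)^2$ is identically zero. This kills not only the ``in terms of $\maxinf$ alone'' version of your structural lemma but the full version too: there is no bound of the form $(p_0(i^*) - p_1(i^*))^2 \ge \Omega(\Var(f)/\poly(k))$ for functions with $\Cert(f)\le k$, and likewise $W^{=1}(\mathrm{PAR}_k) = 0$ so the level-1 Fourier route is a dead end. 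Your proposed workarounds are not carried out and do not obviously close: in the ``parity-like'' branch of the two-case analysis you appeal to a deterministic drop in $\Cert(f_v)$, but $\Cert$ can only drop $k$ times in total, which gives no multiplicative decay over $\poly(k)$ steps. The final claim that $\E_\bell[\Cert(f_\bell)]$ ``inherits the same multiplicative drift form'' is unsupported, since you never established any drift for the variance potential in the first place.

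The paper's fix is to change the potential from variance to \emph{total influence}: set $\phi(d) := \E_{\bell \sim T_f(d)}[\Inf(f_\bell)]$. The identity $\Inf(g) - \Inf_i(g) = \tfrac{1}{2}\bigl(\Inf(g_{x_i=0}) + \Inf(g_{x_i=1})\bigr)$ gives a per-step drop of exactly $\E_\bell[\maxinf(f_\bell)]$, which is strictly positive whenever $f_\bell$ is non-constant (and equals $1$ for parity, where your drop was $0$). Multiplicative decay then follows by chaining OSSS, $\maxinf(g) \ge 4\Var(g)/\Depth(g) \ge 4\Var(g)/k^2$, with the upper bound $\Inf(g) \le 4\Var(g)\cdot\Cert(g) \le 4k\Var(g)$, giving $\maxinf(g) \ge \Inf(g)/k^3$ and hence $\phi(d) \le k\,(1 - 1/k^3)^d$. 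Choosing $d = O(k^4 + k^3\log(1/\eps))$ makes $\phi(d) < \eps\cdot 2^{-2k}$; Markov plus the granularity of variance for certificate-$k$ functions (non-constant implies $\Var \ge 2^{-k}(1-2^{-k})$, hence $\Inf \ge 4\Var > 2^{-2k}$) finishes the argument.
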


\begin{figure}[h!]
    \centering
    \begin{tikzpicture}[tips=proper]
        \node[isosceles triangle,
            draw,
            isosceles triangle apex angle=60,
            rotate=90,
            minimum size=6cm] (T1) at (0,0){};
        \draw[black,dotted] (T1.east) .. controls ([xshift=-0.1cm]T1.357) .. ([yshift=-0.5cm]T1.east) node[] (N1) {{}};
        \draw[black,dotted] ([yshift=-0.5cm]T1.east) .. controls ([xshift=0.4cm]T1.40) and (T1.350) .. (T1.center) node[fill=white,pos=0.8] (N1) {$\pi$};
        
        \draw[{Stealth[scale=1]}-{Stealth[scale=1]}] ([xshift=-5cm]T1.east) to node[midway,fill=white!30,scale=1] {depth ${O}(k^4)$} ([xshift=-5cm]T1.west);
        
        \draw[color=black] (T1.center) node [below,fill=white] {$x_i$};
        \node[draw,circle,fill=black,inner sep=1pt] (x) at (T1.center) {};
        

        
        \draw[{Stealth[scale=1]}-] (x) to ([xshift=1.5cm]T1.right side);
        
        \draw[] ([xshift=1.5cm]T1.right side) node [right,black,fill=white, text width=6cm] {For every path $\pi$:\\ $\Inf_i(f_\pi)=\maxinf(f_\pi)$};
        
        \draw [black,decorate,decoration={brace,mirror,raise=1pt,amplitude=3pt}] (T1.left corner) -- ([xshift=2.4cm]T1.left corner) node [black,pos=0.5,yshift=-0.5cm] {\small $\le 2^{-k}$-fraction};
        
        \draw [] ([xshift=2.3cm]T1.left corner) -- (T1.right corner) node [black,pos=0.5,below] {\small $~0~1~1~0~1~0~0~1~0~1~1~0~0~1~1$};
    \end{tikzpicture}
    \caption{Illustration of a decision tree constructed in a top-down greedy fashion by iteratively querying variables that maximizes influence. All but a $2^{-k}$-fraction of paths in such a tree have depth ${O}(k^4)$ where $k=\Cert(f)$.}
    \label{fig:path-in-idt}
\end{figure}
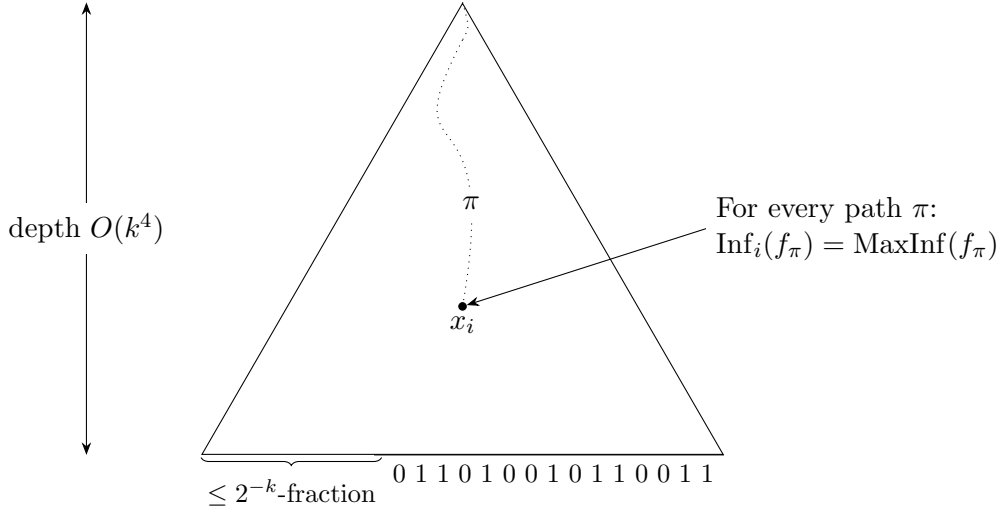


\subsection{Useful facts for the proof  of~\Cref{lem:most-leaves-constant}}
The proof of \Cref{lem:most-leaves-constant} is by a potential function argument,
where the potential function is the average total influence 
of leaf functions $f_{\bell}$ in $T_f(d)$: 
\[\phi(d) := \E_{{\bell} \sim T_f(d)}[\Inf(f_{\bell})].\]
In the remainder of this subsection, we will prove an upper bound 
on the initial value $\phi(0)$ and a lower bound 
on the drop in $\phi$ as we increment $d$. 
To do so, we use some basic facts about certificates and query complexity, and 
a well-known inequality of \cite{OSSS05}.

\newcommand{\Depth}{\mathrm{Depth}}

\begin{definition}[Deterministic query complexity]
The depth, or deterministic query complexity of $f$ is the depth of the minimum-depth decision tree computing $f$: 
\[\Depth(f) := \min_{T \text{ computing } f} \max_{\ell \in T} [\mathrm{depth}(\ell)].\]
\end{definition}

\violet{Certificate complexity and query complexity are well known to be within a polynomial factor of one another. See e.g. \cite{BdW02}.}
\begin{fact}
\violet{For any $f: \zo^n \to \zo$,} $\Cert(f) \le \Depth(f) \le \Cert(f)^2$.
\end{fact}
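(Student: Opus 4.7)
The plan is to prove the two inequalities separately.

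For the lower bound $\Cert(f) \le \Depth(f)$, I would take a minimum-depth decision tree $T$ computing $f$, of depth $d = \Depth(f)$. For any $x \in \zo^n$, the root-to-leaf path that $x$ induces in $T$ queries at most $d$ variables; call this set $S_x$. Every $y \in \zo^n$ with $y_{S_x} = x_{S_x}$ takes the same path in $T$ and thus reaches the same leaf, so $f(y) = f(x)$. Hence $S_x$ is a certificate for $f$'s value at $x$ with $|S_x| \le d$, giving $\Cert(f,x) \le \Depth(f)$ for every $x$, and therefore $\Cert(f) \le \Depth(f)$.

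For the upper bound $\Depth(f) \le \Cert(f)^2$, let $k = \Cert(f)$; I would exhibit an adaptive query algorithm for $f$ making at most $k^2$ queries. The key combinatorial lemma is the \emph{certificate-intersection} property: if $S$ is a $1$-certificate for $f$ at some $1$-input $x$ and $T$ is a $0$-certificate for $f$ at some $0$-input $y$, then there exists $i \in S \cap T$ with $x_i \ne y_i$. Otherwise one could define $z \in \zo^n$ that agrees with $x$ on $S$ and with $y$ on $T$ (consistently, by the assumption), forcing $f(z) = 1$ via $S$ and $f(z) = 0$ via $T$, a contradiction.

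The algorithm maintains a restriction $\rho$, initially empty. In each round: if $f_\rho$ has no $1$-input, output $0$ and halt; otherwise, fix any $1$-input $x$ of $f_\rho$ and a minimum $1$-certificate $S$ for $f_\rho$ at $x$ (so $|S| \le k$), and query all coordinates in $S$. If the returned values match $x_S$, output $1$; otherwise update $\rho$ to incorporate the answers and proceed to the next round. The central invariant to verify is that each ``failed'' round strictly decreases the maximum $0$-certificate size of the current restricted function: for any $0$-input $w$ of the updated $f_{\rho'}$, take a minimum $0$-certificate $T$ of $f$ at the full input corresponding to $w$; the intersection lemma (applied with $x$) gives some coordinate in $T \cap S$ on which $w$ and $x$ disagree, which means $T \cap S \ne \emptyset$ and the answers on $T \cap S$ are already forced to match $w$ by $\rho'$. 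Hence $T \setminus S$ is a valid $0$-certificate for $w$ in $f_{\rho'}$, of size at most $k-1$.

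Starting from maximum $0$-certificate size at most $k$, this invariant forces termination within $k$ failed rounds; together with the final (successful or trivially-$0$) round, and at most $k$ queries per round, the total query count is at most $k^2$, yielding $\Depth(f) \le k^2$. The main obstacle is carefully verifying the strict-decrease invariant on the $0$-certificate complexity of $f_\rho$; the intersection lemma is what makes this work, and everything else in the argument is bookkeeping by induction on that quantity.
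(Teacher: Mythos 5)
Your proof is correct, and it is worth noting that the paper does not actually prove this fact at all: it cites it as well known (pointing to the Buhrman--de Wolf survey), so your write-up supplies exactly the standard argument that the paper leaves to the reference, namely $\Cert(f) \le \Depth(f)$ via the path/certificate observation and $\Depth(f) \le \Cert_0(f)\cdot\Cert_1(f) \le \Cert(f)^2$ via the certificate-intersection lemma. Two small points would tighten the second half. First, state the intersection step inside the restricted function: $S$ is a $1$-certificate of $f_\rho$, so to conclude that a $0$-certificate loses a coordinate you should take $T$ to be a $0$-certificate of $f_\rho$ itself (equivalently, restrict a $0$-certificate of $f$ to the live coordinates, which is legitimate since the full input is consistent with $\rho$); with that choice the lemma applies to two certificates of the same function, and the per-round decrement $\Cert_0(f_{\rho'}) \le \Cert_0(f_\rho) - 1$ composes inductively, rather than only giving the bound $k-1$ at every round. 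Second, the accounting ``$k$ failed rounds plus a final round, at most $k$ queries each'' literally reads as $k^2+k$; to get $k^2$ observe that if all $k$ failed rounds occur then $\Cert_0(f_\rho)=0$, so $f_\rho$ is constant and the final round is query-free (either there is no $1$-input, or the minimum $1$-certificate is empty), while a successful round that actually spends queries can only follow at most $k-1$ failures. With these adjustments your argument gives exactly $\Depth(f) \le \Cert(f)^2$.
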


\violet{We'll need to lower and upper bound $\Inf(f)$ as a function of $\Var(f)$.}
\begin{proposition}
\label{lem:basics}
\violet{For any $f: \zo^n \to \zo$,} $\violet{4}\Var(f) \le \Inf(f) \le 4\Var(f) \cdot \Cert(f)$.
\end{proposition}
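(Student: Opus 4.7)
The plan is to prove the two inequalities separately; each follows from a short, essentially textbook argument. Write $p := \Prx_{\bx\sim\zo^n}[f(\bx)=1]$ and $k := \Cert(f)$, so $\Var(f) = p(1-p)$.

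For the lower bound $4\Var(f) \le \Inf(f)$, I plan to rescale to the $\pm 1$-valued function $g := 1-2f$. The variance transforms as $\Var(g) = 4p(1-p) = 4\Var(f)$, while the disagreement-based definition of influence is preserved: $\Inf_i(g) = \Inf_i(f)$, since $g$ and $f$ disagree at exactly the same pairs of neighbors. The inequality then becomes $\Var(g) \le \Inf(g)$, the Poincar\'e inequality on the Boolean cube, for which the cleanest proof is the one-line Fourier calculation
\[
\Var(g) \;=\; \sum_{S\ne\emptyset}\hat g(S)^2 \;\le\; \sum_S |S|\,\hat g(S)^2 \;=\; \Inf(g),
\]
using $|S|\ge 1$ for every nonempty $S$.

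For the upper bound $\Inf(f) \le 4\Var(f)\cdot\Cert(f)$, the key ingredient is that the sensitivity $\mathrm{sens}(f,x) := |\{i : f(x) \ne f(x^{\oplus i})\}|$ is at most $\Cert(f,x) \le k$: any sensitive coordinate at $x$ must appear in every certificate at $x$, since otherwise flipping it could not change $f$'s value. The plan is then to exploit the fact that $\bx$ and $\bx^{\oplus i}$ are identically distributed, so by symmetry $\Pr[f(\bx)=1, f(\bx^{\oplus i})=0] = \Pr[f(\bx)=0, f(\bx^{\oplus i})=1]$, and hence
\[
\Inf(f) \;=\; 2\sum_i \Pr[f(\bx)=1, f(\bx^{\oplus i})=0]\;=\; 2\,\mathbb{E}_{\bx}\!\left[f(\bx)\cdot\mathrm{sens}(f,\bx)\right]\;\le\; 2pk.
\]
Running the same argument with $1-f$ in place of $f$ gives the symmetric bound $\Inf(f) \le 2(1-p)k$, so $\Inf(f) \le 2k\min(p,1-p)$. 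Combining with the elementary inequality $\min(p,1-p)\le 2p(1-p)=2\Var(f)$ (WLOG $p\le 1/2$, in which case $1-p\ge 1/2$) yields $\Inf(f)\le 4k\Var(f)$, as desired.

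Neither inequality is deep, and I do not anticipate any real obstacles. The only subtlety is keeping track of the constant $4$: it arises from the $\zo\!\to\!\{\pm 1\}$ rescaling for the lower bound, and from the product of two factors of $2$ (the symmetrization step and the $\min(p,1-p)\le 2p(1-p)$ step) for the upper bound.
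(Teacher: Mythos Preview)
Your proposal is correct and follows essentially the same route as the paper's proof: the lower bound is the Poincar\'e inequality (the paper simply cites it, you spell out the one-line Fourier argument after the $\zo\to\{\pm 1\}$ rescaling), and the upper bound is the same symmetrization argument, using $\mathrm{sens}(f,x)\le \Cert(f)$ to bound $\Inf(f)$ by $2k\min(p,1-p)$ and then $\min(p,1-p)\le 2p(1-p)$.
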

\begin{proof}
The first inequality is the well-known Poincaré inequality for the 
Boolean cube \cite{ODBook}. For the second, let $\Sens(f,x)$ be the number sensitive coordinates of $x$, i.e. the number distinct $i \in [n]$ for which $f(x) \neq f(x^{\oplus i})$. We rewrite the definition for the total influence of $f$,
\begin{align*}
    \Inf(f) &= \Ex_{\bx \sim \zo^n}[\Sens(f,\bx)] \\
    &=\Ex_{\bx \sim \zo^n}[\Sens(f,\bx) \cdot \Ind[f(\bx) = 0] + \Sens(f,\bx) \cdot \Ind[f(\bx) = 1]]. 
\end{align*}
Every sensitive edge (i.e. $(x, x^{\oplus i})$ where $f(x) \neq f(x^{\oplus i})$) must contain one endpoint classified as $0$ and one endpoint classified as $1$. Therefore, the two terms in the above sum are equal. Furthermore, any certificate for $x$ must include all $\Sens(f,x)$ sensitive coordinates, so $\mathrm{Sens}(f,x) \leq \Cert(f,x) \leq \Cert(f)$. We can therefore bound, 
\begin{align*}
    \Inf(f) &= 2 \cdot \min\big(\Ex[\Sens(f, \bx) \cdot \Ind[f(\bx) = 0]], \Ex[\Sens(f, \bx) \cdot \Ind[f(\bx) = 1]] \big) \\
     &\leq 2 \cdot \Cert(f) \cdot \min(\Prx[f(\bx) = 0], \Prx[f(\bx)= 1]) \\
     &\leq 4\cdot \Cert(f) \cdot \Prx[f(\bx) = 0] \cdot \Prx[f(\bx)= 1] = 4 \cdot \Cert(f) \cdot \Var(f). \qedhere
\end{align*} 

\end{proof}

Using the bound $\Var(f) \leq \frac{1}{4}$ for any $f$ with range $\zo$, we obtain the following corollary.
\begin{corollary}
\label{cor:total-influence-ub}
\violet{For any $f: \zo^n \to \zo$,} $\Inf(f) \le \Cert(f)$. 
\end{corollary}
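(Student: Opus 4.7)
The plan is to apply the second inequality of Proposition \ref{lem:basics}, namely $\Inf(f) \le 4\Var(f) \cdot \Cert(f)$, together with the elementary fact that $\Var(f) \le \tfrac{1}{4}$ for every Boolean-valued function $f : \zo^n \to \zo$. This variance bound is immediate from the definition: $\Var(f) = \Pr_{\bx}[f(\bx) = 0] \cdot \Pr_{\bx}[f(\bx) = 1]$ is a product of two nonnegative reals summing to $1$, which is maximized at $1/4$ when both factors equal $1/2$.

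Substituting $\Var(f) \le \tfrac{1}{4}$ into the upper bound from Proposition \ref{lem:basics} then gives
\[
\Inf(f) \;\le\; 4 \cdot \tfrac{1}{4} \cdot \Cert(f) \;=\; \Cert(f),
\]
which is precisely the statement of the corollary. There is no real obstacle here: the substantive work has already been carried out in the proof of Proposition \ref{lem:basics}, where the upper bound on $\Inf(f)$ in terms of $\Var(f)$ and $\Cert(f)$ was obtained from the observation that every sensitive coordinate of a point $x$ must be contained in any certificate for $f$ at $x$, so $\Sens(f, x) \le \Cert(f)$. The corollary is a one-step consequence of combining that proposition with the trivial variance bound.
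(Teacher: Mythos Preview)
Your proof is correct and matches the paper's approach exactly: the paper derives the corollary by combining the bound $\Inf(f) \le 4\Var(f)\cdot\Cert(f)$ from Proposition~\ref{lem:basics} with the elementary fact $\Var(f) \le \tfrac14$.
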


The corollary provides the upper bound on $\phi(0)$, since $\phi(0)$ is just $\Inf(f)$. To lower bound the drop in $\phi$ as we increment $d$,
we need the following two claims. Together, they lower bound the drop in influence of subfunctions if you query the most influential variable
in a function with low query complexity.

\begin{theorem}
[Corollary of Theorem 1.1 from \cite{OSSS05}\footnote{In \cite{OSSS05}, they do not have a factor of $4$ because their function has range $\bits$.}]
    \label{thm:OSSS}
    For any $f:\zo^n \to \zo$,
    \begin{align*}
        \maxinf(f) \geq \frac{4\Var(f)}{\Depth(f)}.
    \end{align*}
\end{theorem}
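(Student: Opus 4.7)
The plan is to derive this bound as an essentially immediate corollary of the OSSS inequality in its standard form, with two cosmetic adjustments: converting between the $\zo$-valued and $\bits$-valued conventions, and passing from individual influences in the OSSS sum to the maximum influence.

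First I would recall the OSSS inequality in the form it is usually stated: for any $g:\zo^n\to\bits$ and any decision tree $T$ that computes $g$,
\[
\Var(g)\ \le\ \sum_{i=1}^n \delta_i(T)\cdot \Inf_i(g),
\]
where $\delta_i(T)$ is the probability that $T$ queries coordinate $i$ on a uniform random input. Given an $f:\zo^n\to\zo$, I would apply this to $g \coloneqq 1-2f$. The map $f\mapsto g$ does not change which pairs $(x,x^{\oplus i})$ are sensitive, so $\Inf_i(g)=\Inf_i(f)$ for every $i$; it also does not change the decision tree or the quantities $\delta_i(T)$, since $g$ is computed by the same tree as $f$ with its leaf labels remapped. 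On the other hand, a direct computation shows $\Var(g)=4\Var(f)$ (this is exactly the factor-of-$4$ mismatch the footnote alludes to). So OSSS applied to $g$ becomes
\[
4\Var(f)\ \le\ \sum_{i=1}^n \delta_i(T)\cdot \Inf_i(f).
\]

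Next I would choose $T$ to be an optimal-depth decision tree for $f$, so that $T$ makes at most $\Depth(f)$ queries on every input. Taking expectations over a uniform $\bx$, the expected number of queries is $\sum_i \delta_i(T)$, hence
\[
\sum_{i=1}^n \delta_i(T)\ \le\ \Depth(f).
\]
Finally, upper bounding each $\Inf_i(f)$ on the right-hand side by $\maxinf(f)$ gives
\[
4\Var(f)\ \le\ \maxinf(f)\cdot \sum_{i=1}^n \delta_i(T)\ \le\ \maxinf(f)\cdot \Depth(f),
\]
and rearranging yields the claimed bound $\maxinf(f)\ge 4\Var(f)/\Depth(f)$.

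There is no real obstacle here — the content is entirely in OSSS itself, which I would cite as a black box. The only place to be careful is in bookkeeping the $\bits$-vs-$\zo$ conversion, since forgetting it would lose the factor of $4$ that is exactly what lets \Cref{lem:basics} feed into the potential-function argument for \Cref{lem:most-leaves-constant}.
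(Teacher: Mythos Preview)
The paper does not supply a proof of this statement at all; it is quoted directly as a corollary of Theorem~1.1 of \cite{OSSS05}, with the footnote accounting for the factor of $4$. Your derivation is exactly the standard one-paragraph unpacking of that citation (OSSS inequality $\Var(g)\le\sum_i\delta_i(T)\Inf_i(g)$, the $\zo\to\bits$ rescaling giving $\Var(g)=4\Var(f)$, the bound $\sum_i\delta_i(T)\le\Depth(f)$ for an optimal tree, and $\Inf_i\le\maxinf$), and it is correct.
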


\begin{proposition}
\label{prop:avg-inf}
For any function $f$ and variable $x_i$, 
\[\Inf(f) - \Inf_i(f) = \lfrac{1}{2}(\Inf(f_{x_i=0}) + \Inf(f_{x_i=1})).\]
\end{proposition}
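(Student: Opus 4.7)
The plan is to prove the identity by decomposing each coordinate influence of $f$ according to the value of $x_i$, then summing over coordinates $j \neq i$. The key observation is that for any $j \neq i$, a uniform draw $\bx \sim \zo^n$ can be sampled by first choosing $\bx_i \sim \zo$ uniformly and then drawing the remaining coordinates uniformly, which corresponds to running the influence experiment on one of the two restrictions $f_{x_i=0}$ or $f_{x_i=1}$.

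Concretely, first I would observe that for any $j \neq i$, conditioning on $\bx_i$ gives
\[
\Inf_j(f) \;=\; \tfrac{1}{2}\Prx_{\bx \sim \zo^n}\big[f(\bx) \neq f(\bx^{\oplus j}) \,\big|\, \bx_i = 0\big] + \tfrac{1}{2}\Prx_{\bx \sim \zo^n}\big[f(\bx) \neq f(\bx^{\oplus j}) \,\big|\, \bx_i = 1\big],
\]
and each conditional probability is exactly $\Inf_j(f_{x_i=b})$ for the corresponding $b \in \zo$, since flipping coordinate $j \neq i$ leaves $\bx_i$ unchanged. Thus $\Inf_j(f) = \tfrac{1}{2}\big(\Inf_j(f_{x_i=0}) + \Inf_j(f_{x_i=1})\big)$ for every $j \neq i$.

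Next I would sum over $j \neq i$. Since each restriction $f_{x_i=b}$ is a function on the $n-1$ coordinates indexed by $[n] \setminus \{i\}$, its total influence is $\Inf(f_{x_i=b}) = \sum_{j \neq i} \Inf_j(f_{x_i=b})$. Therefore
\[
\sum_{j \neq i} \Inf_j(f) \;=\; \tfrac{1}{2}\big(\Inf(f_{x_i=0}) + \Inf(f_{x_i=1})\big),
\]
and the left-hand side equals $\Inf(f) - \Inf_i(f)$ by definition of total influence. There is no real obstacle here: the only thing to be careful about is the bookkeeping that $\Inf_j(f_{x_i=b})$ is well-defined only for $j \neq i$ and that this matches exactly the set of coordinates being summed over after removing the $i$-th term from $\Inf(f)$.
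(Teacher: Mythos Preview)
Your proof is correct and is the natural argument for this identity. The paper actually states \Cref{prop:avg-inf} without proof, treating it as a basic fact; your conditioning-on-$x_i$ decomposition is exactly the standard way to verify it.
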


Finally, in addition to these statements about the behavior of $\phi$,
we also need a condition under which we can be assured
that a leaf is constant.
\begin{fact}[Granularity of variance]
\label{fact:granularity}
Let $f$ be a function with certificate complexity $k$. 
Then if $\Var(f) < 2^{-k} - 2^{-2k}$, it must be the case that $\Var(f) = 0$,
and thus $f$ is constant.
\end{fact}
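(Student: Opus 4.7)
The plan is to argue via the contrapositive: I will show that if $f$ is non-constant, then $\Var(f) \geq 2^{-k} - 2^{-2k}$. The key ingredient is a simple lower bound on the bias of a non-constant function in terms of its certificate complexity, which follows directly from the definition of a certificate.

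First, I would establish the following claim: if $f:\zo^n \to \zo$ has $\Cert(f) \le k$ and $f$ takes the value $b \in \zo$ on at least one input, then $\Pr_{\bx \sim \zo^n}[f(\bx) = b] \ge 2^{-k}$. To prove this, pick any $x$ with $f(x) = b$ and let $S \subseteq [n]$ be a certificate for $f$'s value at $x$ of size $|S| \le k$. By definition of a certificate, every one of the $2^{n - |S|} \ge 2^{n-k}$ inputs $y$ with $y_S = x_S$ also satisfies $f(y) = b$, so $\Pr[f(\bx) = b] \ge 2^{n-k}/2^n = 2^{-k}$.

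Now suppose $f$ is non-constant, so both $b = 0$ and $b = 1$ are attained. Writing $p := \Pr_{\bx}[f(\bx) = 1]$, the claim gives $p \in [2^{-k}, 1-2^{-k}]$. The variance $\Var(f) = p(1-p)$ is a concave function of $p$ on this interval, so it attains its minimum at the endpoints, where its value is $2^{-k}(1-2^{-k}) = 2^{-k} - 2^{-2k}$. Therefore $\Var(f) \ge 2^{-k} - 2^{-2k}$ whenever $f$ is non-constant. Contrapositively, $\Var(f) < 2^{-k} - 2^{-2k}$ forces $f$ to be constant, in which case $\Var(f) = 0$.

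There is no real obstacle here; the proof is essentially two lines once one observes that a $b$-certificate of size $k$ witnesses $2^{n-k}$ inputs of value $b$, and that $p(1-p)$ is minimized at the extremes of the allowed interval for $p$. The only minor care is to note that the minimum of $p(1-p)$ over $p \in [2^{-k}, 1-2^{-k}]$ is achieved symmetrically at both endpoints and equals exactly the threshold in the statement.
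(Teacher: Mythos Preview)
Your proof is correct and follows essentially the same argument as the paper: both establish that a non-constant $f$ with $\Cert(f)\le k$ satisfies $\min(\Pr[f=0],\Pr[f=1])\ge 2^{-k}$ via the subcube witnessed by a size-$\le k$ certificate, and then observe that $p(1-p)$ on $[2^{-k},1-2^{-k}]$ is minimized at the endpoints with value $2^{-k}-2^{-2k}$.
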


\begin{proof}
$\Var(f)$ is defined as $\Pr[f(\bx)=0] \cdot \Pr[f(\bx)=1]$.
If $f(x) = 0$ for some $x$, then there must be a 0-certificate for $x$;
i.e. there must be a subcube of codimension $\le k$ for which 
$f$ is constant 0. 
Thus, $\Pr[f(\bx) = 0] \ge 2^{-k}$, and more generally,
if $f$ is not constant then $\min(\Pr[f(\bx)=0], \Pr[f(\bx)=1]) \ge 2^{-k}$. 
The function $p(1-p)$, for $p$ ranging from $2^{-k}$ to $1 - 2^{-k}$,
is minimized at the endpoints of that interval, 
where it takes the value $2^{-k} - 2^{-2k}$.
\end{proof}

\subsection{Proof  of~\Cref{lem:most-leaves-constant}}

With the claims presented in the previous subsection in hand,
we can now prove \Cref{lem:most-leaves-constant}.
A lower bound on the drop in $\phi$ in terms of variable influences in $T_f(d-1)$ follows directly from \Cref{prop:avg-inf}:
\begin{corollary}[Corollary of \Cref{prop:avg-inf}]
\label{cor:avg-inf}
For any function $f$ and depth $d \le n$, 
\[\phi(d) = \phi(d-1) - \E_{\bell \in T_f(d-1 )}[\Inf_{i(\bell)}(f_{\bell})],\]
where $i(\ell)$ is the variable queried at $\ell$ in $T_f(d-1)$.
\end{corollary}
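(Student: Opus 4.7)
The plan is to derive the one-step identity for $\phi$ by (i) decomposing the uniform distribution over the leaves of $T_f(d)$ in terms of the uniform distribution over the leaves of $T_f(d-1)$, and then (ii) applying \Cref{prop:avg-inf} pointwise at each such leaf and taking expectations. Since $T_f(d)$ is the complete tree obtained from $T_f(d-1)$ by appending, at every leaf $\ell$ of $T_f(d-1)$, a query to the variable $i(\ell) \in [n]$ that maximizes $\Inf_{i(\ell)}(f_\ell)$, the $2^d$ leaves of $T_f(d)$ pair up into $2^{d-1}$ sibling pairs, one pair per leaf of $T_f(d-1)$.

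The first step is to observe that sampling $\bell' \sim T_f(d)$ uniformly is the same as sampling $\bell \sim T_f(d-1)$ uniformly and then sampling a uniform child of $\bell$ in $T_f(d)$; that child corresponds to the restriction $f_{\bell, x_{i(\bell)} = \bb}$ for $\bb \sim \zo$ uniform. Therefore
\begin{equation*}
\phi(d) \;=\; \Ex_{\bell' \sim T_f(d)}[\Inf(f_{\bell'})] \;=\; \Ex_{\bell \sim T_f(d-1)}\!\left[\tfrac{1}{2}\Inf(f_{\bell,\, x_{i(\bell)}=0}) + \tfrac{1}{2}\Inf(f_{\bell,\, x_{i(\bell)}=1})\right].
\end{equation*}

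The second step is to apply \Cref{prop:avg-inf} to the function $f_\bell$ with the coordinate $i(\bell)$, which gives, for every realization of $\bell$,
\begin{equation*}
\tfrac{1}{2}\bigl(\Inf(f_{\bell,\, x_{i(\bell)}=0}) + \Inf(f_{\bell,\, x_{i(\bell)}=1})\bigr) \;=\; \Inf(f_\bell) - \Inf_{i(\bell)}(f_\bell).
\end{equation*}
Plugging this identity into the previous display and using linearity of expectation yields
\begin{equation*}
\phi(d) \;=\; \Ex_{\bell \sim T_f(d-1)}[\Inf(f_\bell)] - \Ex_{\bell \sim T_f(d-1)}[\Inf_{i(\bell)}(f_\bell)] \;=\; \phi(d-1) - \Ex_{\bell \sim T_f(d-1)}[\Inf_{i(\bell)}(f_\bell)],
\end{equation*}
as claimed.

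There is essentially no obstacle here beyond bookkeeping: the only thing to be careful about is that $i(\bell)$ is a deterministic function of $\bell$ (defined by \Cref{def:inf-max-tree}), so the conditional expectation over the uniform child is well-defined and factors cleanly. The assumption $d \le n$ only enters to ensure that the influence-maximizing variable $i(\bell)$ at each depth-$(d-1)$ leaf is a genuine variable of $f_\bell$; if $f_\bell$ is already constant then $\Inf_{i(\bell)}(f_\bell) = 0$ and the identity holds trivially at that leaf.
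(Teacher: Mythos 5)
Your proof is correct and matches the paper's intent: the paper states \Cref{cor:avg-inf} as following directly from \Cref{prop:avg-inf}, and your argument—decomposing a uniform leaf of $T_f(d)$ as a uniform leaf of $T_f(d-1)$ together with a uniform child (valid since the tree is complete), then applying \Cref{prop:avg-inf} pointwise and using linearity of expectation—is exactly the intended bookkeeping. Your closing remark about constant leaves is also consistent with \Cref{def:inf-max-tree}, which stipulates that variables are still queried at such nodes, where the queried variable contributes zero influence.
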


\begin{lemma}[Upper bound on $\phi(d)$]
\label{lem:phi-d-ub}
For a function $f$ with certificate complexity $k$,
and any depth $d \le n$, we have 
\[\phi(d) \le k\cdot \bigg( 1 - \frac{1}{k^3} \bigg)^d\]
\end{lemma}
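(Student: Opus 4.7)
The plan is to prove the bound by induction on $d$, showing that $\phi(d) \le (1 - 1/k^3)\,\phi(d-1)$ and that $\phi(0) \le k$. The base case is immediate from \Cref{cor:total-influence-ub}: $\phi(0) = \Inf(f) \le \Cert(f) \le k$.

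For the inductive step, I would start from \Cref{cor:avg-inf}, which gives
\[ \phi(d) = \phi(d-1) - \E_{\bell \sim T_f(d-1)}\bigl[\Inf_{i(\bell)}(f_{\bell})\bigr]. \]
By definition of the influence-maximizing tree, $\Inf_{i(\ell)}(f_\ell) = \maxinf(f_\ell)$ at every node $\ell$, so the task reduces to lower bounding $\maxinf(f_\ell)$ in terms of $\Inf(f_\ell)$ for each leaf function $f_\ell$ at depth $d-1$.

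This is where the two main ingredients from the previous subsection combine. First, note that certificate complexity can only decrease under restrictions, so $\Cert(f_\ell) \le k$ and hence $\Depth(f_\ell) \le k^2$. \Cref{thm:OSSS} then gives $\maxinf(f_\ell) \ge 4\Var(f_\ell)/\Depth(f_\ell) \ge 4\Var(f_\ell)/k^2$. Second, the upper bound in \Cref{lem:basics} applied to $f_\ell$ yields $\Inf(f_\ell) \le 4\Var(f_\ell)\cdot k$, i.e., $\Var(f_\ell) \ge \Inf(f_\ell)/(4k)$. Chaining these:
\[ \maxinf(f_\ell) \;\ge\; \frac{4\Var(f_\ell)}{k^2} \;\ge\; \frac{\Inf(f_\ell)}{k^3}. \]
Taking expectations over $\bell \sim T_f(d-1)$ and plugging back into the recursion gives $\phi(d) \le \phi(d-1)\bigl(1 - 1/k^3\bigr)$, which unwinds to the claimed bound.

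I do not foresee any genuine obstacle: all the heavy lifting (OSSS, the Poincar\'e-style variance/influence/certificate inequalities, and the decomposition of $\Inf(f) - \Inf_i(f)$ into child influences) has already been isolated in the preceding subsection. The only thing to be slightly careful about is the monotonicity of certificate complexity under restriction, which we need in order to apply the bound $\Depth(f_\ell) \le \Cert(f_\ell)^2 \le k^2$ uniformly over all leaves $\ell$ of $T_f(d-1)$; this is immediate because any certificate for a restriction $f_\ell$ lifts to a certificate for $f$ of the same size.
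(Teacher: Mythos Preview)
Your proof is correct and follows essentially the same route as the paper: the recursion from \Cref{cor:avg-inf}, OSSS combined with $\Depth(f_\ell)\le k^2$ to get $\maxinf(f_\ell)\ge 4\Var(f_\ell)/k^2$, then \Cref{lem:basics} to convert to $\Inf(f_\ell)/k^3$, and \Cref{cor:total-influence-ub} for the base case. One small slip: your parenthetical justification for $\Cert(f_\ell)\le k$ is backwards---a certificate for $f_\ell$ lifts to a certificate for $f$ only after \emph{adding} the restricted coordinates, so the correct argument goes the other way (a size-$k$ certificate for $f$ at the lifted input, intersected with the unrestricted coordinates, certifies $f_\ell$); the fact itself is of course true.
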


\begin{proof}
Each depth-$d$ node of $T_f(d)$ maximizes influence
in the corresponding leaf function $f_\ell$ of $T_f(d-1)$.
By \Cref{thm:OSSS}, this means its influence on $f_\ell$ 
is at least $4\Var(f_\ell)/k^2$.
Because $f_v$ has certificate complexity at most $k$,
\Cref{lem:basics} guarantees that $4\Var(f_v)/k^2 \ge \Inf(f_v)/k^3$. 
We now apply \Cref{cor:avg-inf}:
\begin{align*}
    \phi(d) &= \phi(d-1) - \E_{\bell \in T_f(d-1 )}[\Inf_{i(\bell)}(f_{\bell})] \tag{\Cref{cor:avg-inf}} \\
    &\le \phi(d-1) - \E_{\bell \in T_f(d-1 )}[\Inf(f_{\bell})/k^3] \tag{\Cref{thm:OSSS}} \\
    &\le \phi(d-1) - \frac{\phi(d-1)}{k^3} \tag{definition of $\phi$}\\
    &\le \phi(0) \cdot \bigg(1 - \frac{1}{k^3} \bigg)^d. \tag{solution to recurrence relation}
\end{align*}
The lemma follows from the fact that $\phi(0) \le k$, which is a
consequence of \Cref{cor:total-influence-ub}. \end{proof}

\begin{proof}[Proof of \Cref{lem:most-leaves-constant}]
Let $d = k^3(2k + \ln k + \log(1/\eps))$, which is $O(k^4 + k^3\log(1/\eps))$.
We begin with the fact that 
\begin{align*}
    \phi(d) &\le k \cdot \bigg(\bigg(1 - \frac{1}{k^3} \bigg)^{k^3}\bigg)^{2k + \ln k + \log(1/\eps)} \\
    &\le k \cdot e^{-(2k + \ln k + \log(1/\eps))} \\
    &< \eps \cdot 2^{-2k}.
\end{align*}

By Markov's inequality, at least a $1 - \eps$ fraction
of the leaf functions must have influence at most $2^{-2k}$. 
By \Cref{lem:basics}, this implies that their variance is also at most $2^{-2k}$.
Since each leaf function has certificate complexity at most $k$, 
from \Cref{fact:granularity} we may infer that these leaves are constant,
which concludes the proof.
\end{proof}

\subsection{A robust version of~\Cref{lem:most-leaves-constant}}
\label{sec:robust} 

Algorithmically, we need not build the influence-maximizing tree
in order for the potential function argument to go through.
Suppose we estimate influences to within a factor of two instead.
Then the influence of the queried variable is at least $2\Var(f_v)/k^2$,
and the multiplicative drop in $\phi(d)$ becomes $(1 - 1/2k^3)$.
Thus, to achieve the $1-\eps$ probability guarantee
with 2-approximate influences, it suffices to grow the tree to 
twice the depth. 
We state the ``robust'' version of this statement as a corollary.

\begin{corollary}[Robust version of \Cref{lem:most-leaves-constant}]
\label{cor:robust-most-leaves-constant}
Let $f$ be a function with certificate complexity at most $k$, and $T_f$ be any decision tree of depth $O(k^4 + k^3 \log (1/\eps))$ such that for each internal node $v$, the variable $i(v)$ queried has influence within a factor of 2 of the largest influence in the subfunction $f_v$:
\[\Inf_{i(v)}(f_v) \ge \lfrac{1}{2}\maxinf(f_v).\]
Then, for a leaf $\bell$ drawn uniformly at random from $T_f$, we have 
\[\Prx_{\bell}[\,f_{\bell} \text{ is constant}\,] \ge 1 - \eps.\]
\end{corollary}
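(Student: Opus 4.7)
\medskip

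\noindent\textbf{Proof plan.} The plan is to repeat the potential-function argument used to prove \Cref{lem:most-leaves-constant}, tracking the single place where the factor of $2$ slack in the queried variable's influence enters. Define, exactly as before, the potential
\[
\phi(d) \coloneqq \Ex_{\bell \sim T_f^{(d)}}[\Inf(f_{\bell})],
\]
where $T_f^{(d)}$ denotes the truncation of the given tree $T_f$ to depth $d$. The initial value still satisfies $\phi(0) = \Inf(f) \le k$ by \Cref{cor:total-influence-ub}, and \Cref{cor:avg-inf} still yields the identity
\[
\phi(d) = \phi(d-1) - \Ex_{\bell \sim T_f^{(d-1)}}\!\big[\Inf_{i(\bell)}(f_{\bell})\big],
\]
since the identity holds regardless of \emph{which} variable is queried at each internal node.

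\medskip

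\noindent The only step that changes is the lower bound on $\Inf_{i(\bell)}(f_{\bell})$. At each internal node $v$, \Cref{thm:OSSS} combined with the fact that $\Depth(f_v) \le \Cert(f_v)^2 \le k^2$ gives $\maxinf(f_v) \ge 4\Var(f_v)/k^2$. The assumption on $T_f$ then yields
\[
\Inf_{i(v)}(f_v) \;\ge\; \tfrac{1}{2}\maxinf(f_v) \;\ge\; \frac{2\Var(f_v)}{k^2} \;\ge\; \frac{\Inf(f_v)}{2k^3},
\]
where the last inequality uses $\Inf(f_v) \le 4\Var(f_v)\cdot\Cert(f_v) \le 4k\Var(f_v)$ from \Cref{lem:basics}. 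Plugging this into the identity above produces the recurrence
\[
\phi(d) \;\le\; \phi(d-1)\cdot \Big(1 - \tfrac{1}{2k^3}\Big),
\]
and hence $\phi(d) \le k\cdot \big(1 - \tfrac{1}{2k^3}\big)^d$, i.e.\ the same bound as in \Cref{lem:phi-d-ub} but with $k^3$ replaced by $2k^3$ in the exponent base.

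\medskip

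\noindent Choosing $d = 2k^3(2k + \ln k + \log(1/\eps))$, which is $O(k^4 + k^3\log(1/\eps))$, and using $(1 - 1/(2k^3))^{2k^3} \le e^{-1}$, we get $\phi(d) < \eps\cdot 2^{-2k}$. Markov's inequality then implies that at most an $\eps$-fraction of leaves $\bell$ satisfy $\Inf(f_{\bell}) > 2^{-2k}$. For the remaining $\ge (1-\eps)$-fraction, the Poincar\'e direction of \Cref{lem:basics} gives $\Var(f_{\bell}) \le 2^{-2k}$, and since each $f_{\bell}$ still has certificate complexity at most $k$, \Cref{fact:granularity} forces $\Var(f_{\bell}) = 0$, i.e.\ $f_{\bell}$ is constant, completing the proof.

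\medskip

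\noindent\textbf{Main obstacle.} There is no real obstacle: the proof is a mechanical re-run of \Cref{lem:most-leaves-constant} with a factor-of-$2$ worse multiplicative drop per level, absorbed into a factor-of-$2$ deeper tree. The only point that deserves a sentence of care is noting that \Cref{cor:avg-inf}, which comes from \Cref{prop:avg-inf}, is a purely algebraic identity in the queried variable and therefore applies to \emph{any} tree, not only to $T_f(d)$ built by the greedy exact-influence rule; this is what allows the argument to apply to the given $T_f$ without modification.
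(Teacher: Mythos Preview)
Your proposal is correct and follows exactly the approach the paper takes: rerun the potential argument of \Cref{lem:most-leaves-constant}, noting that \Cref{cor:avg-inf} applies to any tree, and absorb the factor-$2$ slack into a $(1 - 1/(2k^3))$ multiplicative drop and a doubled depth. The paper's own justification is precisely this two-sentence sketch, so you have effectively filled in the details it omits.
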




\section{Balanced influences and an algorithmic version of~\Cref{cor:robust-most-leaves-constant}}

In this section, we will prove \Cref{lem:certify arbitrary input}, restated below.
\begin{lemma}[Formal version of \Cref{lem:certify arbitrary input}]
    \label{lem:restrict-formal}
    For any $n \in \N$ and $\delta \in (0,1)$, given any $\poly(n)$-sized circuit $f: \zo^n \to \zo$ and an $\NP$ oracle, there is a randomized algorithm running in $\poly(n, \log(1/\delta))$-time that, with probability at least $1 - \delta$, finds a restriction $\pi$ to $O(\Cert(f)^4)$ variables such that $f_\pi$ is a constant function.
\end{lemma}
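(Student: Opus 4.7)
The plan is to implement the greedy influence-maximization strategy of \Cref{cor:robust-most-leaves-constant} using only the approximate \emph{balanced}-influence estimates that the NP oracle supplies through \Cref{lem:est-bal-inf}. Concretely, the algorithm maintains a partial restriction $\pi$, initially empty, and repeats the following: (i) use the NP oracle to test whether $f_\pi$ is constant by checking satisfiability of $f_\pi$ and of $1-f_\pi$, halting and outputting $\pi$ if so; (ii) for each unrestricted variable $i$, invoke the estimator of \Cref{lem:est-bal-inf} to obtain $\boldsymbol{\eta}_i = \balinf_i(f_\pi) \pm \tau$ with $\tau := 1/(16n^2)$; (iii) set $i^\star = \argmax_i \boldsymbol{\eta}_i$, draw $\bb\sim\zo$ uniformly, and extend $\pi$ by the assignment $x_{i^\star}\leftarrow \bb$.

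The analysis rests on one clean identity:
\[ \balinf_i(f) \;=\; \frac{\Inf_i(f)}{4\,\Var(f)}. \]
I would prove it by partitioning the sensitive edges $(\bx,\bx^{\oplus i})$ according to the label of $\bx$ and using the fact that each sensitive edge has exactly one $0$-endpoint and one $1$-endpoint, so $\Prx_{\bx\sim f^{-1}(0)}[f(\bx)\ne f(\bx^{\oplus i})] = (\Inf_i(f)/2)/\Pr[f=0]$ and symmetrically for the $1$-side; averaging the two halves of $\mathcal{D}_{\bal}^{(f)}$ then collapses to the displayed formula. Because the rescaling factor $1/(4\Var(f_\pi))$ is the same for every variable, maximizing balanced influence is \emph{exactly} the same optimization problem as maximizing ordinary influence --- which is what makes balanced influence the ``right'' notion to estimate with an NP oracle.

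It remains to check that additive error $\tau$ in balanced influence is enough to yield the factor-$2$ multiplicative guarantee demanded by \Cref{cor:robust-most-leaves-constant}. Whenever $f_\pi$ is non-constant, \Cref{thm:OSSS} combined with $\Cert(f_\pi)\le k$ gives $\maxinf(f_\pi)\ge 4\Var(f_\pi)/k^2$, which after dividing by $4\Var(f_\pi)$ becomes $\max_i\balinf_i(f_\pi)\ge 1/k^2\ge 1/n^2$. A standard $2\tau$-suboptimality bound then yields $\balinf_{i^\star}(f_\pi)\ge \max_i\balinf_i(f_\pi) - 2\tau \ge \tfrac12\max_i\balinf_i(f_\pi)$, and hence $\Inf_{i^\star}(f_\pi)\ge \tfrac12\maxinf(f_\pi)$, so the random path the algorithm traces through restrictions is exactly a root-to-leaf walk in a tree satisfying the hypotheses of \Cref{cor:robust-most-leaves-constant}.

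Putting it all together, conditional on every estimator call being accurate, \Cref{cor:robust-most-leaves-constant} applied with $\eps = \delta/2$ guarantees that the algorithm halts with constant $f_\pi$ within depth $O(\Cert(f)^4 + \Cert(f)^3\log(1/\delta))$ with probability at least $1-\delta/2$; standard repetition of a constant-error inner loop $O(\log(1/\delta))$ times drops the $\log(1/\delta)$ term so as to match the cleaner $O(\Cert(f)^4)$ bound stated in the lemma. A union bound over the $\poly(n,\log(1/\delta))$ estimator calls, each boosted individually to polynomially small failure probability in $\delta$, absorbs the remaining slack within the $\poly(n,\log(1/\delta))$ runtime promised by \Cref{lem:est-bal-inf}. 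The conceptual heart of the argument --- and the step I anticipate as the main obstacle --- is the identity $\balinf_i = \Inf_i/(4\Var)$: once it is in hand, the NP oracle is used purely as a blackbox for uniform generation and everything else piggybacks on the already-established structural \Cref{cor:robust-most-leaves-constant}.
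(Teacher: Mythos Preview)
Your proposal is correct and follows essentially the same approach as the paper: the paper proves the identity $\balinf_i(f)=\Inf_i(f)/(4\Var(f))$ (its Lemma~5.3), combines it with \Cref{thm:OSSS} to lower-bound the maximum balanced influence by $1/n$ (its Corollary~5.4), and then argues that \textsc{Find-Restriction} traces a random root-to-leaf path in a tree satisfying \Cref{cor:robust-most-leaves-constant}, succeeding with probability $\ge 1/2$; finally it repeats $O(\log(1/\delta))$ times and keeps the shortest output. Your parameters differ cosmetically (you take accuracy $1/(16n^2)$ and use the bound $\max_i\balinf_i\ge 1/k^2\ge 1/n^2$ via $\Depth\le\Cert^2$ rather than the paper's $\Depth\le n$), but the structure and all key steps are identical.
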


\begin{figure}[h!]
  \captionsetup{width=.9\linewidth}
\begin{tcolorbox}[colback = white,arc=1mm, boxrule=0.25mm]
\vspace{3pt}

$\textsc{Find-Restriction}(f)$:
\begin{itemize}[align=left]
	\item[\textbf{Given:}] A circuit representation of $f:\zo^n\to\zo$ and an $\NP$ oracle. 
	\item[\textbf{Output:}] A  restriction $\pi$ such that $f_\pi$ is constant.
\end{itemize}
\begin{enumerate} 
\item Initialize $\pi \leftarrow \varnothing$.
\item While $f_\pi$ is not constant (which is checked using the $\NP$ oracle), \{
\begin{enumerate}
    \item Using \Cref{lem:est-bal-inf}, estimate $\balinf_i(f_\pi)$ for each $i \in [n]$ to within $\pm \frac1{4n}$ with success probabilities $\geq 1 - \frac{1}{4n^2}$ and let $i^\star$ be the index with the largest estimate.
    \item Update $\pi$ with a random restriction to $x_{i^\star}$: $\pi \leftarrow \pi \cup \{x_{i^\star} \leftarrow \bb\}$ where $\bb \sim \zo$ is uniform random.
\end{enumerate}
\}
\item Return $\pi$.
\end{enumerate} 
\vspace{1pt} 
\end{tcolorbox}
\caption{Algorithm for finding a restriction that fixes $f$ to a constant.}
\label{fig:cert-any}
\end{figure}

\Cref{lem:restrict-formal} is an easy consequence of \Cref{lemma:restrict-low-success}: The algorithm in \Cref{lem:restrict-formal} simply repeats \textsc{Find-Restriction}$(f)$ $\log(1/\delta)$ times and returns the shortest certificate it outputted.

\begin{lemma}
    \label{lemma:restrict-low-success}
    The algorithm \textsc{Find-Restriction}$(f)$ from \Cref{fig:cert-any}, with probability at least $\frac{1}{2}$, returns a restriction $\pi$ of length at most $O(\Cert(f)^4)$.
\end{lemma}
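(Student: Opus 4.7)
My plan is to adapt the potential-function analysis underlying~\Cref{cor:robust-most-leaves-constant} to the actual random trajectory of \textsc{Find-Restriction}. The first key step is to translate between balanced and ordinary influence: a direct calculation writing out the density of $\mathcal{D}_{\bal}^{(g)}$ and counting sensitive edges yields the identity
\[
\balinf_i(g) \;=\; \frac{\Inf_i(g)}{4\Var(g)} \qquad (\Var(g) > 0),
\]
so the argmax over $i$ of $\balinf_i(g)$ agrees with the argmax of $\Inf_i(g)$, and the $\pm \tfrac{1}{4n}$ window from~\Cref{lem:est-bal-inf} corresponds to additive slack $\tfrac{2\Var(f_\pi)}{n}$ in $\Inf_i(f_\pi)$ relative to $\maxinf(f_\pi)$. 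The next step is to show this slack is harmless: since $\Depth(f_\pi) \le n$, \Cref{thm:OSSS} gives $\maxinf(f_\pi) \ge 4\Var(f_\pi)/n$, so the slack is at most $\tfrac{1}{2}\maxinf(f_\pi)$. Consequently, whenever the $n$ balanced-influence estimates at a given iteration are all accurate---an event of conditional probability $\ge 1 - \tfrac{1}{4n}$ by a union bound over the per-call failure probability $\tfrac{1}{4n^2}$---the chosen index $i^\star$ satisfies $\Inf_{i^\star}(f_\pi) \ge \tfrac{1}{2}\maxinf(f_\pi)$, which is the hypothesis consumed by the analysis of~\Cref{cor:robust-most-leaves-constant}.

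With this in hand I would run the potential-function argument of~\Cref{lem:most-leaves-constant} along the algorithm's random trajectory, tracking $\phi_t := \E[\Inf(f_{\boldsymbol{\pi}_t})]$ where $\boldsymbol{\pi}_t$ is the restriction after $t$ iterations and the expectation is over all of the algorithm's randomness. Conditioning on the step-$(t{+}1)$ accuracy event and chaining the guarantee above with $\Inf_{i^\star}(f_{\boldsymbol{\pi}_t}) \ge \tfrac{1}{2}\maxinf(f_{\boldsymbol{\pi}_t}) \ge \Inf(f_{\boldsymbol{\pi}_t})/(2\Cert(f)^3)$ (from~\Cref{thm:OSSS} and~\Cref{lem:basics}) via~\Cref{prop:avg-inf} yields the recurrence
\[
\phi_{t+1} \;\le\; \phi_t\cdot\paren{1 - \frac{1 - 1/(4n)}{2\Cert(f)^3}},
\]
identical to the recurrence in the proof of~\Cref{lem:most-leaves-constant} up to the benign $(1 - 1/(4n))$ multiplicative tax accounting for inaccurate estimates. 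Iterating for $d = O(\Cert(f)^4)$ steps drives $\phi_d \le \tfrac{1}{2}\cdot 2^{-2\Cert(f)}$, and Markov's inequality combined with~\Cref{lem:basics} and~\Cref{fact:granularity} then yields $\Pr[f_{\boldsymbol{\pi}_d}\text{ is constant}] \ge \tfrac{1}{2}$. Since the \texttt{while} loop terminates as soon as $f_\pi$ becomes constant, this is exactly the desired probability bound on $|\pi|$.

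The main obstacle is the first step: identifying \emph{balanced} influence as the right notion to estimate. It is the unique variant that simultaneously (i) admits efficient estimation via uniform generation (hence is accessible with an $\mathsf{NP}$ oracle) and (ii) has a clean, variance-free relationship with ordinary influence that preserves both ordering and, modulo the scalar $4\Var$, magnitudes, letting the OSSS-based analysis of~\Cref{lem:most-leaves-constant} be reused essentially verbatim. Without such an identity, either the algorithmic access (via uniform generation) or the structural analysis (via OSSS) would break down.
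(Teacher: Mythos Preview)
Your proposal is correct and follows essentially the same approach as the paper: the identity $\balinf_i(g)=\Inf_i(g)/(4\Var(g))$, the OSSS lower bound $\maxinf(f_\pi)\ge 4\Var(f_\pi)/n$, and the conclusion that accurate estimates force $\Inf_{i^\star}(f_\pi)\ge\tfrac12\maxinf(f_\pi)$ are exactly the paper's ingredients. The only bookkeeping difference is that the paper takes a single union bound over all $\le n^2$ estimates (success probability $\ge 3/4$) and then invokes \Cref{cor:robust-most-leaves-constant} with $\eps=1/4$ as a black box, whereas you absorb the per-iteration failure probability $1/(4n)$ directly into the recurrence as a multiplicative tax; both routes yield the same $O(\Cert(f)^4)$ bound with success probability $\ge 1/2$.
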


Before proving \Cref{lemma:restrict-low-success} we discuss why it uses balanced influences rather than influences. As we are applying \Cref{cor:robust-most-leaves-constant}, we wish to determine a variable with influence at least half as large as $\maxinf(f)$. The definition of influence immediately suggests an algorithm for estimating influences: Randomly sample $\bx \sim \zo^n$ and then compute whether $f(\bx) \neq f(\bx^{\oplus i})$. Using $\poly(1/\eps)$ iterations of that procedure, we can estimate influences to \emph{additive} accuracy $\pm \eps$. To guarantee we pick a variable with influence at least half of $\maxinf(f)$, we would want to estimate influences to accuracy in $\pm\, \frac{\maxinf(f)}{4}$. Unfortunately, the naive influence estimator requires $1 / \poly(\maxinf(f))$ queries to do so, which is intractable when $\maxinf$ is too small. 

A first hope is that the bound of $\maxinf(f) \geq 4\Var(f)/\Depth(f)$ from \Cref{thm:OSSS} will ensure that $\maxinf$ is not too small. Unfortunately, in the proof of \Cref{lem:most-leaves-constant}, we can have $\Var$ as small as $2^{-\Cert(f)}$. When $\Var(f)$ is small, almost all inputs are labeled the same way by $f$, so it is unlikely that a random edge $(\bx, \bx^{\oplus i})$ will be labeled differently. \emph{Balanced influences} (\Cref{def:balanced-influences}) correct for this effect by ensuring that the initial point, $\bx$, is equally likely to be labeled $0$ or $1$ by $f$.

If $f$ is already balanced, meaning it is equally likely to output $0$ or $1$, then $\mathcal{D}_{\bal}^{(f)}$ is just the uniform distribution and $\balinf_i(f) = \Inf_i(f)$. Otherwise, as we show in the following Lemma, balanced influences are proportional to influences scaled by variance.
\begin{lemma}[Balanced influences proportional to influences]
    \label{lem:balanced-influences-var}
    For any non-constant function $f: \zo^n \to \zo$ and coordinate $i \in [n]$,
    \begin{equation*}
        \balinf_i(f) = \frac{\Inf_i(f)}{4\Var(f)}.
    \end{equation*}
\end{lemma}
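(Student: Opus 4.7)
The plan is to unfold the definition of $\balinf_i(f)$ in terms of the mixture $\mathcal{D}_{\bal}^{(f)}$, then to rewrite each conditional probability as a joint probability under the uniform distribution using Bayes' rule, and finally to exploit the fact that the bit-flip map $\bx \mapsto \bx^{\oplus i}$ is a measure-preserving involution on $\zo^n$ to collapse the two terms into a single multiple of $\Inf_i(f)$.

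\textbf{Step 1 (unfold the mixture).} Let $p_b \coloneqq \Prx_{\bx \sim \zo^n}[f(\bx) = b]$ for $b \in \zo$, so that $\Var(f) = p_0 p_1$; these are both nonzero since $f$ is non-constant. Directly from the sampling procedure for $\mathcal{D}_{\bal}^{(f)}$,
\[
\balinf_i(f) = \tfrac{1}{2}\Prx_{\bx \sim f^{-1}(0)}\!\big[f(\bx^{\oplus i}) = 1\big] + \tfrac{1}{2}\Prx_{\bx \sim f^{-1}(1)}\!\big[f(\bx^{\oplus i}) = 0\big].
\]

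\textbf{Step 2 (convert to joint probabilities).} For each $b$, the distribution $f^{-1}(b)$ equipped with the uniform measure is the conditional of $\bx \sim \zo^n$ given $f(\bx) = b$. So letting
\[
q_b \coloneqq \Prx_{\bx \sim \zo^n}\!\big[f(\bx) = b \ \text{and}\ f(\bx^{\oplus i}) = 1-b\big],
\]
we have $\Prx_{\bx \sim f^{-1}(b)}[f(\bx^{\oplus i}) = 1-b] = q_b/p_b$, and therefore
\[
\balinf_i(f) = \tfrac{1}{2}\paren{\frac{q_0}{p_0} + \frac{q_1}{p_1}}.
\]

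\textbf{Step 3 (symmetry).} The key observation is that the bit-flip map $\sigma_i : \bx \mapsto \bx^{\oplus i}$ is a bijection $\zo^n \to \zo^n$ that preserves the uniform distribution. Applying $\sigma_i$ inside the expectation defining $q_0$ swaps the roles of $\bx$ and $\bx^{\oplus i}$, yielding $q_0 = q_1$. Call this common value $q$. Then by definition $\Inf_i(f) = q_0 + q_1 = 2q$.

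\textbf{Step 4 (combine).} Substituting $q_0 = q_1 = \Inf_i(f)/2$ into the expression from Step~2,
\[
\balinf_i(f) = \frac{\Inf_i(f)}{4}\paren{\frac{1}{p_0} + \frac{1}{p_1}} = \frac{\Inf_i(f)}{4} \cdot \frac{p_0 + p_1}{p_0 p_1} = \frac{\Inf_i(f)}{4\Var(f)},
\]
where the last step uses $p_0 + p_1 = 1$ and $\Var(f) = p_0 p_1$, completing the proof.

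There is no real obstacle here; the content is entirely definitional once one notices the symmetry in Step~3, which is what causes the factor of $4$ (rather than $2$) in the denominator.
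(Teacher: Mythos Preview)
Your proof is correct and follows essentially the same approach as the paper: unfold the balanced-influence definition as a sum over the two fibers $f^{-1}(0)$ and $f^{-1}(1)$, convert to joint (uniform) probabilities, and use the symmetry of the bit-flip involution to equate the two terms. The paper phrases the symmetry step as ``$f$ classifies exactly half the inputs in $S_i$ as $0$ and half as $1$,'' which is exactly your $q_0 = q_1$.
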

\begin{proof}
    Let $S_i \coloneqq \{x \in \zo^n \mid f(x)\neq f(x^{\oplus i})\}$ be the set of all inputs to $f$ that are sensitive at the $i^{\text{th}}$ coordinate. Then,
    \begin{equation*}
        \Inf_i(f) = \Prx_{\bx \sim \zo^n}\left[\bx \in S_i\right] = \frac{|S_i|}{2^n}.
    \end{equation*}
    Similarly,
    \begin{equation*}
        \balinf_i(f) = \frac{1}{2}\left(\Prx_{\bx \sim f^{-1}(0)}[\bx \in S_i] + \Prx_{\bx \sim f^{-1}(1)}[\bx \in S_i] \right).
    \end{equation*}

    The key observation is that for every $x \in S_i$ it is also true that $x^{\oplus i} \in S_i$, so $f$ classifies exactly half the inputs in $S_i$ as $0$, and the other half as $1$. Therefore,
    \begin{equation*}
        \balinf_i(f) = \frac{1}{2}\left(\frac{\frac{|S_i|}{2}}{|f^{-1}(0)|} + \frac{\frac{|S_i|}{2}}{|f^{-1}(1)|} \right).
    \end{equation*}
    For each $b \in \zo$, we use $p_b$ as shorthand for $\Prx_{\bx \sim \zo^n}[f(\bx) = b]$. Then,
    \begin{align*}
        \balinf_i(f) &= \frac{1}{4}\left(\frac{|S_i|}{2^n p_0} + \frac{|S_i|}{2^n p_1}\right) \\
        &= \frac{|S_i|}{2^n} \cdot \frac{p_1 + p_0}{4p_0p_1}\\
        &= \Inf_i(f) \cdot \frac{1}{4p_0p_1} \tag{$\Inf_i(f) = \frac{|S_i|}{2^n}, p_1 + p_0 = 1$} \\
        &= \frac{\Inf_i(f)}{4\Var(f)}. \tag{$\Var(f) = p_0 p_1$}
    \end{align*}
\end{proof}
The following is a direct corollary of \Cref{lem:balanced-influences-var}, \Cref{thm:OSSS}, and the fact that $\Depth(f) \leq n$ for any $f:\zo^n \to \zo$.
\begin{corollary}
    \label{cor:bal-inf-OSSS}
    For any non-constant function $f: \zo^n \to \zo$, there is a coordinate $i \in [n]$ satisfying
    \begin{equation*}
        \balinf_i(f) \geq \frac{1}{\Depth(f)} \geq \frac{1}{n}.
    \end{equation*}
\end{corollary}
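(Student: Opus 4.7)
The plan is to combine the two main tools directly: first apply the OSSS inequality (\Cref{thm:OSSS}) to produce a coordinate with large \emph{ordinary} influence, and then convert that bound into a statement about balanced influence using the proportionality identity of \Cref{lem:balanced-influences-var}. Since $f$ is non-constant, $\Var(f) > 0$, so the identity $\balinf_i(f) = \Inf_i(f)/(4\Var(f))$ is well-defined and can be divided through without any boundary issues.

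More concretely, I would proceed as follows. Let $i^\star \in [n]$ be an index achieving $\Inf_{i^\star}(f) = \maxinf(f)$. By \Cref{thm:OSSS},
\begin{equation*}
    \Inf_{i^\star}(f) \;=\; \maxinf(f) \;\geq\; \frac{4\Var(f)}{\Depth(f)}.
\end{equation*}
Now invoke \Cref{lem:balanced-influences-var}, which applies since $f$ is non-constant, to rewrite
\begin{equation*}
    \balinf_{i^\star}(f) \;=\; \frac{\Inf_{i^\star}(f)}{4\Var(f)} \;\geq\; \frac{1}{\Depth(f)}.
\end{equation*}
The second inequality in the corollary then follows from the trivial bound $\Depth(f) \leq n$ (any function on $n$ variables can be computed by querying all $n$ of them).

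There is really no obstacle here: the corollary is essentially a one-line algebraic manipulation once the two prior results are in hand. The only subtlety worth flagging explicitly is the non-constancy hypothesis, which is exactly what guarantees $\Var(f) > 0$ and hence the validity of the division in \Cref{lem:balanced-influences-var}. It is also worth noting, as a sanity check, that this means the \emph{same} coordinate $i^\star$ maximizes both $\Inf_i(f)$ and $\balinf_i(f)$, since the two quantities differ only by the coordinate-independent factor $4\Var(f)$; this is implicitly used when we say the $i^\star$ that achieves $\maxinf(f)$ also achieves the stated lower bound on balanced influence.
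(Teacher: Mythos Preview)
Your proposal is correct and matches the paper's approach exactly: the paper states the corollary as a direct consequence of \Cref{lem:balanced-influences-var}, \Cref{thm:OSSS}, and the trivial bound $\Depth(f) \leq n$, which is precisely the combination you carry out.
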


We are now able to prove \Cref{lemma:restrict-low-success} contingent on \Cref{lem:est-bal-inf} stating that balanced influences can be efficiently estimated, which will appear in the next subsection.
\begin{proof}[Proof of \Cref{lemma:restrict-low-success}]
    \textsc{Find-Restriction} makes at most $n^2$ estimates of balanced influence, so with probability at least $\frac{3}{4}$, all those estimates are within $\pm \frac{1}{4n}$ of the true balanced influences. By \Cref{cor:bal-inf-OSSS}, there is always a variable with balanced influence at least $\frac{1}{n}$, so as long as all estimates succeed, \textsc{Find-Restriction} will always choose an $i^\star$ such that $\balinf_{i^\star}(f) \geq \max_i(\balinf_i(f))$. By \Cref{lem:balanced-influences-var}, that also implies that $\inf_{i^\star}(f) \geq \maxinf(f)/2$.

    As a result, \textsc{Find-Restriction} builds a uniformly random path of the tree $T_f$ described in \Cref{cor:robust-most-leaves-constant}. All but $\frac{1}{4}$-fraction of paths in $T_f$ at depth $O(\Cert(f)^4)$ reach a restriction $\pi$ for which $f_{\pi}$ is constant. Hence, by a union bound, the probability \textsc{Find-Restriction} does not terminate with $|\pi| \leq O(\Cert(f)^4)$ is at most $\frac{1}{4} + \frac{1}{4} \leq \frac{1}{2}$.
\end{proof}


\subsection{\violet{Estimating balanced influences}}
\Cref{def:balanced-influences} suggests a procedure for estimating balanced influences: Randomly sample $\bx \sim \mathcal{D}_{\bal}^{(f)}$ and compute whether $f(\bx) \neq f(\bx^{\oplus i})$. Using $\poly(1/\eps)$ iterations of that procedure is sufficient to estimate balanced influences to additive accuracy $\pm \eps$. 

The challenge in sampling from $\mathcal{D}_{\bal}^{(f)}$ is that when $\Var(f)$ is small, the $\mathcal{D}_{\bal}^{(f)}$ is far from the uniform distribution. Here, we utilize an $\NP$ oracle for a \emph{uniform generation} algorithm.\footnote{\cite{JVV86} gave an algorithm for \emph{approximate} uniform generation, which would have also sufficed for our purpose. For simplicity, we cite the more recent work that gives \emph{exact} uniform generation.}
\begin{theorem}[Uniform generation with an $\NP$ oracle, \cite{BGP00}]
    \label{thm:uniform-generation}
    There is an efficient randomized algorithm $\mcA$ which, given a satisfiable poly-sized circuit $f: \zo^n \to \zo$ and $\NP$ oracle, with high probability, outputs a uniform $\bx \sim f^{-1}(1)$. In the failure case, $\mcA$ outputs $\perp$.
\end{theorem}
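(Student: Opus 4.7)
The plan is to use pairwise independent hashing together with $\NP$ queries, following the classical paradigm of Sipser, Stockmeyer, and JVV. Let $N = |f^{-1}(1)|$, which we do not know in advance but which we may assume is at least $1$ since $f$ is given to be satisfiable.

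The first step is \emph{approximate counting}. For each $m \in \{0, 1, \ldots, n+1\}$, let $\mathcal{H}_m$ be a family of pairwise independent hash functions $h : \zo^n \to \zo^m$. For a random $\bh \sim \mathcal{H}_m$, the expected size of $T_{\bh} \coloneqq f^{-1}(1) \cap \bh^{-1}(0^m)$ is $N/2^m$, and by the second moment method this size is concentrated around its mean provided $N/2^m$ is a sufficiently large constant. Using the $\NP$ oracle we can decide, for any fixed $\bh$ and threshold $t$, whether $|T_{\bh}| \geq t$: we ask the $\NP$ oracle to guess $t$ distinct strings and verify each is in $T_{\bh}$. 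By binary searching over $m$ (using $O(\log n)$ values and repeating to drive down error), we identify, with high probability, an $m^\star$ such that $|T_{\bh}|$ is between, say, $C$ and $4C$ for a suitable constant $C$ (say $C = 8$) under a fresh draw $\bh \sim \mathcal{H}_{m^\star}$.

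The second step is to \emph{enumerate} $T_{\bh}$ explicitly. Having fixed $m^\star$ and drawn $\bh$, we repeatedly query the $\NP$ oracle for an element of $T_{\bh}$ distinct from those already found; since $|T_{\bh}|$ is $O(1)$, this terminates in $O(1)$ queries and yields the full list $T_{\bh} = \{x_1, \ldots, x_s\}$ with $s = |T_{\bh}|$.

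The third step is to turn this into an \emph{exactly uniform} sample using the BGP coin-flipping trick. Conditioning on $\bh$, every $x \in f^{-1}(1)$ lies in $T_{\bh}$ with the same probability $2^{-m^\star}$. If we simply return a uniformly random element of $T_{\bh}$, the probability of returning $x$ would be $\Ex_{\bh}[\mathbf{1}[x \in T_{\bh}]/|T_{\bh}|]$, which depends on $x$ through the distribution of $|T_{\bh}|$ conditioned on $x \in T_{\bh}$. To remove this dependence, we accept the sample with probability $|T_{\bh}| / (4C)$ (and output $\perp$ otherwise); then the probability of returning $x$ becomes $\Ex_{\bh}[\mathbf{1}[x \in T_{\bh}]]/(4C) = 2^{-m^\star}/(4C)$, which is a constant independent of $x$, so the conditional distribution given non-rejection is exactly uniform on $f^{-1}(1)$. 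The overall success probability is $\Omega(1)$ since $|T_{\bh}|$ is $\Omega(C)$ with constant probability, and standard amplification reduces the $\perp$ probability to any desired inverse polynomial.

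The main obstacle is the third step: getting \emph{exact} uniformity rather than merely approximate. Approximate uniform generation is immediate from concentration of $|T_{\bh}|$, but exactness requires the careful rejection step above so that all elements of $f^{-1}(1)$ are treated symmetrically regardless of the shape of the typical hash bucket containing them. The remaining steps are standard applications of pairwise independent hashing and $\NP$-oracle self-reduction.
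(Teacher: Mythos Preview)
The paper does not prove this theorem; it is quoted from \cite{BGP00} and used as a black box (the accompanying footnote even remarks that the weaker approximate-uniform-generation result of \cite{JVV86} would already have sufficed for the application). So there is no in-paper proof to compare against, but your sketch merits comment on its own.

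Steps one and two are fine. The gap is in step three: your identity $\Pr[\text{output } x] = \Ex_{\bh}[\mathbf{1}[x\in T_{\bh}]]/(4C)$ silently assumes the acceptance probability $|T_{\bh}|/(4C)$ never exceeds~$1$. Under pairwise-independent hashing the event $|T_{\bh}|>4C$ has positive probability, and there the algorithm must output~$\perp$; the true output probability is then $2^{-m^\star}\Pr\big[\,|T_{\bh}|\le 4C \mid x\in T_{\bh}\,\big]/(4C)$. With an affine family $h(z)=Az+b$, conditioning on $h(x)=0$ leaves $|T_{\bh}|-1$ equal to the number of $x'\ne x$ with $A(x'-x)=0$, whose distribution depends on the linear structure of the shifted set $f^{-1}(1)-x$: the first two moments coincide for every $x$ by pairwise independence, but higher moments---and hence the tail---need not. (Concretely, for $f^{-1}(1)=\{0000,1000,0100,1100,0010\}$ and $m^\star=1$, one gets $\Pr[|T_{\bh}|\le 3\mid 0000\in T_{\bh}]=\tfrac{12}{16}$ versus $\Pr[|T_{\bh}|\le 3\mid 0010\in T_{\bh}]=\tfrac{14}{16}$.) Your procedure therefore achieves approximate uniform generation---essentially \cite{JVV86}---but not the exact uniformity that is the contribution of \cite{BGP00}; their argument handles the overflow case with additional care that your sketch omits.
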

As an easy corollary, we can sample from $\mathcal{D}_{\bal}^{(f)}$.
\begin{corollary}[Sampling from $\mathcal{D}_{\bal}^{(f)}$]
    \label{cor:sample-balanced}
    There is an efficient randomized algorithm $\mcA$ which, given a poly-sized non-constant circuit $f: \zo^n \to \zo$, with high probability outputs a uniform $\bx \sim \mathcal{D}_{\bal}^{(f)}$. In the failure case, $\mcA$ outputs $\perp$.
\end{corollary}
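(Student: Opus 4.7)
The plan is to reduce this directly to \Cref{thm:uniform-generation} by the two-step sampling recipe built into the definition of $\mathcal{D}_{\bal}^{(f)}$: first flip a coin to pick a target value $\bb \in \zo$, and then sample uniformly from the corresponding preimage $f^{-1}(\bb)$ via the uniform generation algorithm. The only thing to check is that this second step is always legal, i.e.~that the circuit handed to the uniform generator is poly-sized and satisfiable.

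Concretely, I would proceed as follows. First sample $\bb \sim \zo$ uniformly. If $\bb = 1$, invoke the algorithm of \Cref{thm:uniform-generation} on $f$ itself. If $\bb = 0$, invoke it on the circuit $\neg f$, which is obtained from the circuit for $f$ by appending a single output-negating gate and therefore remains of polynomial size. Both $f$ and $\neg f$ are satisfiable by the hypothesis that $f$ is non-constant, so in each case the uniform generator's preconditions are met. Output whatever it returns; if the underlying call returns $\perp$, pass $\perp$ through.

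For correctness, conditioned on no failure, the output of \Cref{thm:uniform-generation} is uniform on $f^{-1}(\bb)$, and since $\bb$ is an independent uniform coin, the overall distribution of the output is exactly $\mathcal{D}_{\bal}^{(f)}$ by \Cref{def:balanced-influences}. Since the underlying uniform generator succeeds with high probability and we make only a single call to it, a union bound gives the claimed high-probability guarantee. Running time is polynomial because we invoke the uniform generator once on a circuit of size $O(|f|)$.

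There is essentially no technical obstacle here: the only conceptual step is observing that ``sample from $f^{-1}(0)$'' is the same problem as ``sample from $(\neg f)^{-1}(1)$,'' so that a single primitive for uniform generation of satisfying assignments handles both halves of the balanced distribution. The non-constancy hypothesis in the corollary is exactly what is needed to ensure both $f$ and $\neg f$ are satisfiable, so that the precondition of \Cref{thm:uniform-generation} is met in whichever branch is taken.
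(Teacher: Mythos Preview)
Your proposal is correct and is essentially the same as the paper's own proof: sample $\bb \sim \zo$, run the uniform generator of \Cref{thm:uniform-generation} on $f$ if $\bb=1$ and on $\neg f$ if $\bb=0$, and appeal to a union bound for the failure probability. If anything, you are slightly more explicit than the paper in noting that the non-constancy hypothesis is exactly what guarantees satisfiability in both branches.
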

\begin{proof}
    $\mcA$ first samples a uniform $\bb \sim \zo$. If $\bb$ is $1$, it uses the uniform generation algorithm from \Cref{thm:uniform-generation} on $f$. Otherwise, it uses that uniform generation algorithm on $\lnot f$, which is still a poly-sized circuit.

    By union bound, the failure probability (of outputting $\perp$) of $\mcA$ is still small. When it does not fail, $\mcA$ outputs a uniform sample from $\mathcal{D}_{\bal}^{(f)}$.
\end{proof}

Finally, we note that \Cref{lem:est-bal-inf} is a direct consequence of \Cref{cor:sample-balanced} and \Cref{def:balanced-influences}: The algorithm with failure probability $1 - \delta$ takes $\poly(1/\eps, \log(1/\delta))$ samples $\bx \sim \mathcal{D}_{\bal}^{(f)}$ and counts the fraction of those samples for which $f(\bx) \neq f(\bx^{\oplus i})$.

    


\violet{\subsection{Technical remarks}} 
\label{subsec:technical remarks}

\paragraph{\violet{Comparison} with \cite{BKLT22}.}
    \cite{BKLT22} solved the certification problem for \emph{monotone} $f$, and also computed influences over a certain balanced distribution as a key step. For any non-constant monotone $f: \zo^n \to \zo$, there is guaranteed to be some $p^\star \in (0,1)$, called the \emph{critical probability}, satisfying,
    \begin{equation*}
        \Ex_{\bx \sim \mathcal{D}_{p^\star}}[f(\bx)] = \frac{1}{2}
    \end{equation*}
    where $\bx \sim \mathcal{D}_{p^\star}$ means each $\bx_i$ is independently $1$ with probability $p^\star$ and $0$ otherwise. \cite{BKLT22} are able to show that there is a coordinate with high influence at the critical probability, meaning $\Prx_{\bx \sim \mathcal{D}_{p^\star}}[f(\bx) \neq f(\bx^{\oplus i})]$ is large.

    While the ways in which \cite{BKLT22} and this work use the existence of a coordinate with high influence is quite different, we find it intriguing that both rely on finding such a coordinate on a distribution on which $f$ is balanced. \violet{A natural avenue for future work is to establish a formal connection between the two works.} Is there some definition of ``balancing a distribution for $f$," that encompasses both this work and \cite{BKLT22}'s techniques, and is sufficient for certification?

\paragraph{An alternative approach through approximate counting.}
    Given a poly-sized circuit $f:\zo^n \to \zo$ and coordinate $i \in [n]$, we can construct the poly-sized circuit $g_i(x) \coloneqq \Ind[f(x) \neq f(x^{\oplus i})$, which measures whether $f$ is sensitive in the $i^{\text{th}}$ direction at the input. In the proof of \Cref{lem:balanced-influences-var}, we saw that $\Inf_i(f) = \frac{|g_i^{-1}(1)|}{2^n}$. This gives an alternative approach to estimating the coordinate with largest influence: Approximately count the number of accepting inputs of $g_i$ for each $i \in [n]$ and output the $i$ maximizing that count. This approximate count just needs to be accurate to a constant \emph{multiplicative} accuracy (i.e., $\mathrm{count}_i \in [\frac{3}{4} \cdot |g_i^{-1}(1)|, \frac{5}{4} \cdot |g_i^{-1}(1)|]$).

    Classical results of \cite{sipser83, stockmeyer83} show that approximate counting can be done efficiently and deterministically given a $\Sigma_2$ oracle, and \cite{JVV86} observed that implicit in those results, an $\NP$ oracle is sufficient if we allow for randomized algorithms. Furthermore, they proved an equivalence between (almost) uniform generation and approximate counting, so it's not surprising that our algorithm has two alternative approaches, one through uniform generation and one through approximate counting.

    Given two equivalent approaches, we chose to highlight that through uniform generation as we find it more intuitive. It also illuminates the possible connection, detailed earlier, with \cite{BKLT22}'s algorithm for certifying monotone functions.

\section{Our certification algorithm: \violet{Proof of~\Cref{thm:upper bound intro}}}


In this section, we show how any procedure that takes a function as input and outputs a certificate for an \textit{arbitrary} input can be converted into a procedure that takes a function and a string as inputs and outputs a certificate for the function's value on that string. Using this construction and the certification procedure from \Cref{sec:structural}, we obtain our main certification algorithm.

\begin{figure}[h!]
  \captionsetup{width=.9\linewidth}
\begin{tcolorbox}[colback = white,arc=1mm, boxrule=0.25mm]
\vspace{3pt}

$\textsc{Find-Certificate}(f,x^\star)$:
\begin{itemize}[align=left]
	\item[\textbf{Given:}] A circuit representation of $f:\zo^n\to\zo$ and an input $x^\star\in \zo^n$. 
	\item[\textbf{Output:}] A certificate $S\sse [n]$ for $f$'s value on $x^\star$
\end{itemize}
\begin{enumerate}
    \item Initialize $S\leftarrow \varnothing$
    \item While $S$ is not a certificate for $f$'s value on $x^\star$:
    \begin{enumerate}
        \item $S\leftarrow S\cup \textsc{Restriction}(f_{S\leftarrow x^\star})$
    \end{enumerate}
    \item Return $S$
\end{enumerate}
\vspace{1pt} 
\end{tcolorbox}
\caption{Algorithm for the certification problem using {\sc Restriction} as a subroutine.}
\label{fig:algorithm-to-compute-certificate-of-a-specific-input}
\end{figure}

The procedure $\textsc{Restriction}(f)$ in \Cref{fig:algorithm-to-compute-certificate-of-a-specific-input} is a (possibly randomized) procedure that takes as input a circuit representation of $f:\zo^n\to\zo$ outputs a set of coordinates $S\sse [n]$ such that there is some $u\in \zo^{|S|}$ such that $f_{S\leftarrow u}$ is a constant function (equivalently $S$ is a certificate for $f$'s value on some input). 

The algorithm works by iteratively building a certificate $S$ using $\textsc{Restriction}(\cdot)$ as a subroutine. While $S$ is not yet a certificate for $f$'s value on $x^\star$, the algorithm calls $\textsc{Restriction}(\cdot)$ on the subfunction obtained by restricting $f$ according to $S$ and $x^\star$. It then augments the candidate certificate with the output from $\textsc{Restriction}(\cdot)$. 

\begin{lemma}[Solving the certification problem using $\textsc{Restriction}(\cdot)$]
\label{lem:findcert-from-certificate}
The algorithm in \Cref{fig:algorithm-to-compute-certificate-of-a-specific-input} runs in polynomial-time with access to an $\NP$ oracle and outputs a certificate of size at most $2\cdot\Cert(f)\cdot\gamma(\Cert(f))$ for $f$'s value on $x^\star$ where $\gamma:\N\to\N$ is any nondecreasing function satisfying $|\textsc{Restriction}(f)|\le \gamma(\Cert(f))$. 
\end{lemma}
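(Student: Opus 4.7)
The plan is to track the algorithm via two non-increasing potential functions and show that at each iteration exactly one of them strictly decreases, depending on which ``type'' of certificate \textsc{Restriction} happens to return. Let $k = \Cert(f)$, fix a minimum certificate $T^\star$ for $f$'s value on $x^\star$ (so $|T^\star|\le k$), and let $g_t \coloneqq f_{S_t \leftarrow x^\star_{S_t}}$ denote the restricted function at the start of iteration $t$ (the loop continues while $g_t$ is non-constant). I will use
\[
\psi_t \coloneqq |T^\star \setminus S_t|, \qquad \phi_t \coloneqq \max\{\Cert(g_t,z) : g_t(z) = 1 - f(x^\star)\},
\]
with the convention $\phi_t \coloneqq 0$ if $g_t$ has no $(1 - f(x^\star))$-input. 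Both start at $\le k$, both are non-increasing (since $S_t$ only grows and restricting a function never increases certificate complexities), and each of $\psi_t = 0$ or $\phi_t = 0$ forces loop termination: $\psi_t = 0$ means $T^\star\subseteq S_t$ so $S_t$ is already a certificate, while $\phi_t = 0$ means $g_t$ is constant $f(x^\star)$.

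At each iteration I would then analyze the set $R = \textsc{Restriction}(g_t)$, which by definition certifies $g_t$ to some constant value $c \in \zo$. The analysis splits on $c$, invoking the basic fact that any $0$-certificate and any $1$-certificate of a function share at least one variable (proved by a standard swapping argument). In the ``easy'' case $c = 1 - f(x^\star)$, the set $R$ is a $(1-f(x^\star))$-certificate of $g_t$ while $T^\star \setminus S_t$ is an $f(x^\star)$-certificate of $g_t$ at $x^\star_{[n]\setminus S_t}$, so the two share a variable (necessarily outside $S_t$) and $\psi_{t+1} \le \psi_t - 1$.

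The hard part will be the complementary case $c = f(x^\star)$, in which I must show $\phi_{t+1} \le \phi_t - 1$. Here, for every $(1-f(x^\star))$-input $z$ of $g_t$, the shared-variable fact forces any minimum certificate $T^z$ for $g_t$ at $z$ to intersect $R$. To convert this into a bound on $\Cert(g_{t+1},\cdot)$ I would argue as follows: given any $(1-f(x^\star))$-input $z'$ of $g_{t+1}$, set $z \coloneqq z' \cup x^\star_R$ (the natural extension into $g_t$'s domain), so that $g_t(z) = 1-f(x^\star)$ and $z_R = x^\star_R$. Because $T^z$ certifies $g_t$ at $z$ and its $R$-coordinates already agree with $x^\star_R$, the set $T^z \setminus R$ remains a certificate for $g_{t+1}$ at $z'$; combined with $T^z \cap R \ne \emptyset$ this yields $\Cert(g_{t+1}, z') \le \Cert(g_t, z) - 1$. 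Taking a max over $z'$ (and using that $z' \mapsto z' \cup x^\star_R$ maps into the set over which $\phi_t$ is taken) gives $\phi_{t+1} \le \phi_t - 1$.

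Combining the two cases, the loop performs at most $k$ iterations of each type, for a total of at most $2\Cert(f)$ iterations. Since each iteration adds $|R| \le \gamma(\Cert(g_t)) \le \gamma(\Cert(f))$ variables to $S$ (using monotonicity of $\gamma$ and that restriction cannot increase certificate complexity), the final set has size at most $2\Cert(f)\cdot \gamma(\Cert(f))$, as claimed. Polynomial-time efficiency with an $\NP$ oracle is then immediate: each while-loop test ``$S$ is a certificate for $f$'s value on $x^\star$'' is equivalent to asking whether $g_t$ is constant, a $\coNP$ question decided with one $\NP$-oracle call, and each invocation of \textsc{Restriction} is polynomial-time with an $\NP$ oracle by assumption.
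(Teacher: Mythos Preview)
Your proof is correct and rests on the same underlying fact as the paper's --- that any $0$-certificate and any $1$-certificate of a function must share a variable --- but packages the bookkeeping differently. The paper tracks a single potential, $\Cert_0(g_t)+\Cert_1(g_t)$, and cites a proposition from \cite{BKLT22} asserting that whenever one restricts by the variable set of \emph{any} certificate (for either value), this sum drops by at least $1$; this immediately bounds the number of iterations by $\Cert_0(f)+\Cert_1(f)\le 2\,\Cert(f)$. You instead split the accounting into two potentials: $\phi_t=\Cert_{1-f(x^\star)}(g_t)$, which matches one summand of the paper's potential, and $\psi_t=|T^\star\setminus S_t|$ for a fixed optimal certificate $T^\star$, which is a proxy for (but not equal to) the other summand. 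You then prove the relevant drop inline in each case rather than importing a lemma. The advantage of your route is that it is self-contained; the paper's route is a bit cleaner and more symmetric once the cited proposition is in hand. Both yield the identical $2\,\Cert(f)\cdot\gamma(\Cert(f))$ bound.
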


We write $\Cert_0(f)=\max_{x\in f^{-1}(0)}\{\Cert(f,x)\}$ to denote the $0$-certificate complexity of $f$ and $\Cert_1(f)=\max_{x\in f^{-1}(1)}\{\Cert(f,x)\}$ for the $1$-certificate complexity of $f$. Our proof relies on the following fact which states that the $0$-certificate complexity of a function decreases when restricting the coordinates of a $1$-certificate to any set of values and vice versa for a $1$-certificate. For the proof of this result, see \cite[Theorem 6]{BKLT22}. The main idea underlying the proof is that every $0$-certificate has a nonempty intersection with every $1$-certificate.

\begin{proposition}[See {\cite[Theorem 6]{BKLT22}}]
\label{prop:one-sided-certificate-complexity-decreases}
Let $f:\zo^n\to\zo$, $S\sse[n]$ and $u\in\zo^{|S|}$. If $f_{S\leftarrow u}(x)$ is the constant function for all $x\in \zo^{n-|S|}$, then
$$
\Cert_1(f)+\Cert_0(f)-1\ge \Cert_0(f_{S\leftarrow u})+\Cert_1(f_{S\leftarrow u})
$$
for all $u'\in \zo^{|S|}$.
\end{proposition}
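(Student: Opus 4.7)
The plan is to split any smallest 0- and 1-certificate of $f$ according to the partition $[n] = S \cup \bar S$ (with $\bar S \coloneqq [n]\setminus S$), restrict each to its $\bar S$-part to obtain candidate certificates of $g \coloneqq f_{S\leftarrow u'}$, and then use the constant-making assignment in the hypothesis to squeeze out exactly one coordinate of savings. Denote by $u^\star$ the assignment from the hypothesis and, without loss of generality, assume $f_{S\leftarrow u^\star}\equiv 0$ (the other case is symmetric) and that $f$ is nonconstant (otherwise there is nothing to certify and the statement degenerates). The case where $g$ is constant can be disposed of first: both $\Cert_0(g)$ and $\Cert_1(g)$ are $0$ under the usual convention on empty maxima, so the inequality collapses to $\Cert_0(f)+\Cert_1(f)\ge 1$, which holds since $f$ is nonconstant.

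Assuming $g$ is nonconstant, pick arbitrary $y_0,y_1\in\zo^{n-|S|}$ with $g(y_0)=0$ and $g(y_1)=1$, so $(u',y_0)$ and $(u',y_1)$ are a 0- and 1-input of $f$. For each $b\in\zo$, let $C_b\sse [n]$ be a smallest $b$-certificate for $f$ at $(u',y_b)$, so $|C_b|\le \Cert_b(f)$, and decompose $C_b = C_b^S \cup C_b^{\bar S}$ with $C_b^S\sse S$ and $C_b^{\bar S}\sse \bar S$. A one-line check shows $C_b^{\bar S}$, equipped with the values $y_b|_{C_b^{\bar S}}$, is already a $b$-certificate of $g$ at $y_b$: any $y$ agreeing with $y_b$ on $C_b^{\bar S}$ satisfies the $C_b$-constraints when paired with $u'$, so $g(y)=f(u',y)=b$. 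This gives $\Cert_b(g)\le |C_b|-|C_b^S|\le \Cert_b(f)-|C_b^S|$, and summing over $b$,
\begin{equation*}
\Cert_0(g) + \Cert_1(g) \;\le\; \Cert_0(f) + \Cert_1(f) - (|C_0^S|+|C_1^S|).
\end{equation*}

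The main obstacle, and the whole point of the hypothesis, is showing $|C_0^S|+|C_1^S|\ge 1$. The argument uses $u^\star$ as follows: if $C_1^S$ were empty, then $C_1\sse\bar S$, and picking any $y$ with $y|_{C_1}=y_1|_{C_1}$ and pairing it with $u^\star$ produces an input $(u^\star,y)$ that meets every constraint of $C_1$; by the 1-certificate property this forces $f(u^\star,y)=1$, contradicting $f_{S\leftarrow u^\star}\equiv 0$. Hence $|C_1^S|\ge 1$, completing the proof. Conceptually, the hypothesis supplies a full 0-hypercube $\{(u^\star,y):y\in\zo^{n-|S|}\}$ from which any 1-certificate of $f$ extending $u'$ must separate itself, and (since $u'\ne u^\star$ in the case that matters) the only way to do so is through at least one coordinate in $S$—this is precisely the folklore fact that every 0-certificate and every 1-certificate must share a coordinate, applied to the 1-certificate $C_1$ and the trivial 0-certificate $(S,u^\star)$.
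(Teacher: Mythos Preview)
Your proof is correct and uses exactly the idea the paper highlights (via its reference to \cite[Theorem~6]{BKLT22}): every 0-certificate intersects every 1-certificate. Your fleshed-out argument---restricting smallest $b$-certificates of $f$ at $(u',y_b)$ to their $\bar S$-parts to certify $g$, then invoking the 0-certificate $(S,u^\star)$ to force $C_1^S\neq\emptyset$---is precisely the intended proof.
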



\begin{proof}[Proof of \Cref{lem:findcert-from-certificate}]
We claim that the algorithm depicted in \Cref{fig:algorithm-to-compute-certificate-of-a-specific-input} satisfies the lemma statement. 

Each step of the algorithm can be implemented efficiently using an $\NP$ oracle. In particular, determining if $S\sse [n]$ is a certificate for $f$'s value on $x^\star$ is equivalent to checking if $f_{S\leftarrow x^*}$ is the constant function. This task can be accomplished by restricting the circuit for $f$ to obtain a circuit for $f_{S\leftarrow x^*}$ and checking whether the restricted circuit is satisfiable (or, possibly unsatisfiable depending on whether $f(x^\star)=1$). Moreover, the correctness of the algorithm follows immediately from the condition of the while loop. Therefore, it suffices to bound the runtime.

If $S=\textsc{Restriction}(f)$, then
\begin{align*}
    \Cert_1(f)+\Cert_0(f)-1 &\ge\Cert_1(f_{S\leftarrow u})+\Cert_0(f_{S\leftarrow u})\tag{\Cref{prop:one-sided-certificate-complexity-decreases}}\\
    &>0
\end{align*}
for all $u$ such that $f_{S\leftarrow u}$ is nonconstant. Therefore, each step of the algorithm decreases the sum $\Cert_1(f_{S\leftarrow x^*})+\Cert_0(f_{S\leftarrow x^*})$ by at least $1$ and terminates when this quantity reaches $0$. It follows by induction that the main loop terminates after at most $2\cdot \Cert(f)\ge \Cert_0(f)+\Cert_1(f)$ calls to $\textsc{Restriction}(\cdot)$. Each loop iteration adds
\begin{align*}
    |\textsc{Restriction}(f_{S\leftarrow x^*})|&\le \gamma(\Cert(f_{S\leftarrow x^\star}))\\
    &\le \gamma(\Cert(f))\tag{$\Cert(f_{S\leftarrow x^*})\le \Cert(f)$}
\end{align*}
coordinates to the candidate certificate. Hence, the overall size of the final certificate is bounded by $2\cdot\Cert(f)\cdot\gamma(\Cert(f))$. 
\end{proof}

\paragraph{Proof of \Cref{thm:upper bound intro} using \Cref{lem:findcert-from-certificate,lem:certify arbitrary input}.}{
By \Cref{lem:certify arbitrary input}, there is a randomized polynomial-time algorithm with an $\NP$ oracle that implements $\textsc{Restriction}(f)$. This algorithm has the guarantee that $|\textsc{Restriction}(f)|\le {O}(\Cert(f)^4)$. Therefore, the algorithm in \Cref{fig:algorithm-to-compute-certificate-of-a-specific-input} satisfies \Cref{thm:upper bound intro} by \Cref{lem:findcert-from-certificate}: it runs in polynomial-time with $\NP$ oracle access and w.h.p. (over the randomness of $\textsc{Restriction}(f)$) outputs a certificate of size ${O}(\Cert(f)^5)$ for $f$'s value on a given input.\hfill\qed
}

\section{\violet{Hardness of instance-wise guarantees: Proof of~\Cref{thm:hardness of approximation}}}
\label{sec:lowerbounds}


\Cref{thm:hardness of approximation} follows as a consequence of the following lemma.

\begin{lemma}
\label{lem:reduction-from-mmww-to-certification}
There is a polynomial-time algorithm that takes a co-nondeterministic circuit $C:\zo^n\times\zo^n\to\zo$ accepting a nonempty monotone set and a parameter $m\ge n$ and outputs a circuit representation of a function $f:\zo^{n+m}\to\zo$ which satisfies the following.
\begin{enumerate}
    \item If $C$ accepts an input with $\le k$ ones, then $\Cert(f,(1^n,1^m))\le k$.
    \item If every $x$ that $C$ accepts has $>k'$ ones, then $|S|>\min\{k',m-n-\log^c(m)\}$ for every certificate $S\sse[m+n]$ of $f$'s value on $(1^n,1^m)$
\end{enumerate}
where $k'>k$ and $c>1$ is an absolute constant. 
\end{lemma}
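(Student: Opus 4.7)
The plan is to implement the sketch from the proof overview, using a zero-error bit-fixing disperser $D : \zo^m \to \zo^n$ and setting $f(x, z) := C(x, D(z))$. Concretely, I would invoke the Gabizon--Shaltiel construction~\cite{GS12}, which in $\poly(m, n)$ time produces a circuit for $D$ with the property that for every $S \subseteq [m]$ of size at most $m - n - \log^c(m)$ and every $u \in \zo^{|S|}$, the image of $D_{S \leftarrow u}$ is all of $\zo^n$. Composing this circuit with the given circuit for $C$ yields a $\poly(m, n)$-size circuit for $f$, so the reduction runs in polynomial time. The point to certify is $(1^n, 1^m)$; since $C$'s accepted set is non-empty and upward-closed, $1^n$ lies in it, so $C(1^n, y) = 1$ for all $y$ and in particular $f(1^n, 1^m) = C(1^n, D(1^m)) = 1$.

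For property (1) (the YES case), suppose $C$ accepts some $x^\star$ with at most $k$ ones, and let $S \subseteq [n]$ be its set of $1$-coordinates, viewed as a subset of the $x$-block of $f$'s variables. For any $(x, z)$ with $x_S = 1^{|S|}$ we have $x \succeq x^\star$ coordinatewise, so monotonicity of the accepted set yields $C(x, y) = 1$ for all $y$, and hence $f(x, z) = C(x, D(z)) = 1 = f(1^n, 1^m)$. Thus $S$ is a certificate of size $\le k$.

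For property (2) (the NO case), let $T \subseteq [n+m]$ be any certificate for $f$ at $(1^n, 1^m)$ and decompose it as $T = T_x \cup T_z$ with $T_x \subseteq [n]$ and $T_z \subseteq [m]$. Assume for contradiction that $|T| \le \min\{k', m - n - \log^c(m)\}$. Since $|T_z| \le |T| \le m - n - \log^c(m)$, the disperser guarantee applied with $u = 1^{|T_z|}$ gives, for every $y \in \zo^n$, a $z^\star \in \zo^m$ with $z^\star_{T_z} = 1^{|T_z|}$ and $D(z^\star) = y$. Taking $x^\star \in \zo^n$ to be the indicator vector of $T_x$, the pair $(x^\star, z^\star)$ agrees with $(1^n, 1^m)$ on all coordinates of $T$, so the certificate property forces $f(x^\star, z^\star) = 1$, i.e.\ $C(x^\star, y) = 1$. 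Since $y$ was arbitrary, $C$ accepts $x^\star$ in the co-nondeterministic sense, and $x^\star$ has exactly $|T_x|$ ones; by the NO hypothesis this forces $|T_x| > k'$, contradicting $|T| \le k'$.

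The only nontrivial ingredient is the disperser itself: once it is in hand, the rest is an elementary unfolding of the certificate definition together with monotonicity for property (1) and the bit-fixing guarantee for property (2). The main obstacle is therefore importing the Gabizon--Shaltiel construction with parameters that yield the precise bound $m - n - \log^c(m)$ stated in the lemma; a simpler and self-contained fallback would be the blockwise-parity disperser from Section~\ref{sec:techniques}, which satisfies the same structural properties but only with $m = \Theta(n^{2-\eps})$ and would give an inapproximability ratio of $n^{1/2 - \eps}$ rather than the optimal $n^{1 - \eps}$.
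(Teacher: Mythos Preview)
Your proposal is correct and follows essentially the same approach as the paper: define $f(x,z)=C(x,D(z))$ using the Gabizon--Shaltiel zero-error bit-fixing disperser, handle property~(1) via monotonicity of the accepted set, and handle property~(2) by splitting the certificate into its $x$- and $z$-parts and using the disperser's full-image guarantee. The only cosmetic difference is that you package property~(2) as a single contradiction from the assumption $|T|\le\min\{k',\,m-n-\log^c m\}$, whereas the paper does an explicit two-case split on whether $|S_m|$ exceeds $m-(n+\log^c m)$; the underlying argument is identical.
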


The proof of this lemma involves zero-error bit-fixing dispersers. At a high level, a disperser is a nearly-surjective function that remains nearly-surjective even after restricting all but a small number of the input bits. More specifically, an $\eps$-error disperser with entropy threshold $\lambda$ is an efficiently constructible function $D:\zo^m\to\zo^n$ such that the image of the function after restricting all but $\lambda$ input bits is at least an $(1-\eps)$-fraction of $\zo^n$. For our application, we require $\eps=0$ so that the function remains fully surjective after restricting the input bits. 

\begin{definition} [Zero-error bit-fixing disperser with entropy threshold $\lambda$] \label{def:disperser} 
Consider a function $D: \zo^m \rightarrow \zo^n$. Let $S\subseteq [m]$ be a subset of size $|S|\le m - \lambda$ and let $u = \zo^{|S|}$ be an assignment to that subset. Then $D_{S\leftarrow u}: \zo^{m-|S|} \rightarrow \zo^n$ is the function $D$ with the variables in $S$ fixed according to $u$. We say that $D$ is a zero-error bit-fixing disperser with entropy threshold $\lambda$ if for all such $S$ and $u$, the image of $D_{S\leftarrow u}$ is $\zo^n$. 
\end{definition}

We use the following key result due to \cite{GS12} about explicit zero-error bit-fixing dispersers. 

\begin{theorem} [Explicit constructions of zero-error bit-fixing dispersers {\cite[Theorem 9]{GS12}}] \label{thm:dispersers}
There exists a constant $c>1$ such that for large enough $m$ and $\lambda \geq \log^c m$, there is an explicit zero-error bit-fixing disperser $D: \zo^m \rightarrow \zo ^{\lambda -\log^c m}$ with entropy threshold $\lambda$. 
\end{theorem}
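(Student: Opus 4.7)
The plan is to define $f(x, z) \coloneqq C(x, D(z))$, where $D : \zo^m \to \zo^n$ is the explicit zero-error bit-fixing disperser from \Cref{thm:dispersers} instantiated with entropy threshold $\lambda = n + \log^c m$ (so that the output length $\lambda - \log^c m$ equals $n$, and the hypothesis $\lambda \ge \log^c m$ holds for large enough $m$). Since $C$ is given as a circuit and $D$ is explicit, composing them yields a polynomial-size circuit representation of $f : \zo^{n+m} \to \zo$ in polynomial time. Before analyzing the two cases I would observe that $C$ accepts a nonempty monotone set, hence accepts $1^n$; therefore $C(1^n, y) = 1$ for every $y \in \zo^n$, so in particular $f(1^n, 1^m) = 1$ and we are certifying a value of $1$.

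For property~(1), suppose $C$ accepts some $x^\star \in \zo^n$ with at most $k$ ones. Let $S \subseteq [n]$ be the set of $1$-coordinates of $x^\star$, viewed as a subset of the $x$-coordinates of $f$. I would argue that $S$ is a certificate for $f$'s value on $(1^n, 1^m)$: any $(x, z)$ with $x_S = 1^{|S|}$ satisfies $x \succeq x^\star$, so by monotonicity of $C$'s accepted set $C$ accepts $x$, i.e.~$C(x, y) = 1$ for every $y$, and in particular $C(x, D(z)) = 1$. Thus $f(x, z) = 1$ on the whole subcube fixed by $S$, giving $\Cert(f, (1^n, 1^m)) \le |S| \le k$.

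For property~(2), let $S \subseteq [n+m]$ be any certificate for $f$'s value on $(1^n, 1^m)$, and partition $S = S_x \sqcup S_z$ into its $x$- and $z$-coordinates. If $|S_z| > m - n - \log^c m$ we are immediately done, so assume $|S_z| \le m - n - \log^c m$. By the disperser guarantee applied to the restriction $z_{S_z} \leftarrow 1^{|S_z|}$, the map $D_{S_z \leftarrow 1^{|S_z|}} : \zo^{m - |S_z|} \to \zo^n$ is surjective. Now consider the input $x \in \zo^n$ with $x_{S_x} = 1^{|S_x|}$ and $x_{[n] \setminus S_x} = 0^{n - |S_x|}$, which has exactly $|S_x|$ ones. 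For every $y \in \zo^n$, pick $z$ extending $1^{|S_z|}$ with $D(z) = y$; since $(x, z)$ agrees with $(1^n, 1^m)$ on all of $S$, the certificate property forces $f(x, z) = 1$, i.e.~$C(x, y) = 1$. Hence $C$ accepts $x$, so by hypothesis $|S_x| > k'$, and therefore $|S| \ge |S_x| > k'$. Combining the two sub-cases, $|S| > \min\{k', m - n - \log^c m\}$.

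The conceptual core of the argument — and the step I expect to take the most care — is in property~(2): one must arrange the disperser parameters so that fixing any small set of $z$-coordinates still leaves \emph{every} $y \in \zo^n$ realizable, which is exactly what a \emph{zero-error} disperser provides and which is what allows us to convert a certificate of $f$ into a certificate for $C$'s acceptance of a specific low-weight $x$. Property~(1) is then essentially immediate from monotonicity of $C$'s accepted set together with the fact that the $x$-coordinates alone already determine $C$'s acceptance.
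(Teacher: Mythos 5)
The statement you were asked to prove is \Cref{thm:dispersers} itself: the existence, for some constant $c>1$, of an \emph{explicit} zero-error bit-fixing disperser $D:\zo^m\to\zo^{\lambda-\log^c m}$ for every entropy threshold $\lambda\ge\log^c m$. Your proposal does not address this statement at all. What you wrote is a proof of the downstream reduction (\Cref{lem:reduction-from-mmww-to-certification}): you define $f(x,z)=C(x,D(z))$, verify the YES case via monotonicity of $C$'s accepted set, and verify the NO case by splitting a certificate into its $x$- and $z$-parts and invoking surjectivity of $D$ after fixing few $z$-coordinates. That argument is essentially identical to the paper's proof of \Cref{lem:reduction-from-mmww-to-certification}, but it is a different statement, and as a proof of \Cref{thm:dispersers} it is circular: your very first step is to take ``the explicit zero-error bit-fixing disperser from \Cref{thm:dispersers},'' i.e.\ you assume the theorem you were supposed to establish.

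A genuine proof of \Cref{thm:dispersers} is a pseudorandomness construction, not a reduction. In the paper this result is imported as a black box (it is \cite[Theorem 9]{GS12}); Gabizon and Shaltiel obtain it by building seedless extractors/dispersers for bit-fixing sources (sources in which all but $\lambda$ of the $m$ bits are adversarially fixed), and then trading output length for error to drive the error to zero --- losing only an additive $\log^c m$ in the output length, which is exactly the $\lambda-\log^c m$ in the statement. None of this machinery (or any alternative construction, such as the elementary blockwise-parity disperser, which only achieves much weaker parameters $m=\Theta(\lambda\cdot n)$ rather than output length $\lambda-\log^c m$) appears in your write-up. So the gap is not a flawed step but a missing proof: to prove the stated theorem you must either reproduce (or cite and sketch) the Gabizon--Shaltiel construction and its analysis, whereas what you supplied is the application of that theorem, which the paper proves separately.
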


With this theorem in hand, we are able to prove \Cref{lem:reduction-from-mmww-to-certification}.

\begin{proof}[Proof of \Cref{lem:reduction-from-mmww-to-certification}]
Let $C:\zo^n\times \zo^n\to\zo$ be co-nondeterministic circuit that accepts a non-empty monotone set and let $m\ge n$ be a parameter. We define the function $f:\zo^{n}\times \zo^{m}\to\zo$ as
$$
f(x,z)\coloneqq C(x,D(z))
$$
where $D:\zo^{m}\to\zo^n$ is a zero-error bit-fixing disperser as in~\Cref{thm:dispersers}. See \Cref{fig:disperser} for an illustration of the circuit construction for $f$. Note that $D$'s entropy threshold is $\lambda=n+\log ^c(m)$. We show that solving the certification problem for $f$'s value on the input $\vec{1}=(1^n,1^{m})$ satisfies the lemma statement. We first consider the case where $C$ accepts a low Hamming weight input and then consider the case where all accepted strings have large Hamming weight.

\paragraph{$C$ accepts an input with $\le k$ ones.}{
Suppose $C$ accepts an input $x\in\zo^n$ with $\le k$ ones. Then, by the monotonicity of $C$, the set $S=\{i:x_i=1\}\sse [n+m]$ consisting of the indices on which $x$ is one is a certificate of size $\le k$ for $f$'s value on the input $\vec{1}$. 
}

\paragraph{Every $x$ that $C$ accepts has $>k'$ ones.}{
In this case, we show that every certificate $S$ for $f$'s value on $\vec{1}$ has at least $\min\{k',m-n-\log^c(m)\}$ coordinates. Let $S\sse [n+m]$ be an arbitrary certificate for $f$'s value on $\vec{1}$. We partition $S$ into two sets $S=S_n\cup S_m$ where $S_n\sse [n]$ consists of the indices from the first $n$-coordinates of $(1^n,1^m)$ and $S_m\sse [m]$ consists of the indices from the last $m$-coordinates of $(1^n,1^m)$. We split into two cases depending on the size of $S_m$. In the first case, we lower bound $|S|$ by $k'$ and in the second case we lower bound $|S|$ by $m-n-\log^c(m)$. Since the two cases are exhaustive, the lemma follows.
\begin{enumerate}
    \item[$\circ$]\textbf{$S_m$ is small: $|S_m|\le m-(n+\log ^c(m))$.} The image of $D_{S_m\leftarrow \vec{1}}$ is $\zo^n$ because $D$ is a disperser with entropy threshold $\lambda=n+\log^c(m)$. It follows that for every $x\in\zo^{n-|S_n|}$ and every $y\in \zo^n$,
    $$
    C_{S_n\leftarrow \vec{1}}(x,y)=f(\vec{1}).
    $$
    Therefore, $S_n$ alone is a certificate for $f$'s value on $\vec{1}$. But this implies $|S|\ge |S_n|>k'$ by our assumption that every $x$ that $C$ accepts has at least $k'$ ones.
    \item[$\circ$]\textbf{$S_m$ is large: $|S_m|>m-(n+\log ^c(m))$.} In this case, we have $|S|\ge |S_m|>m-(n+\log ^c(m))$ which gives the desired lower bound.
\end{enumerate}
}

Finally, we observe that our algorithm is efficient since a circuit for $f$ can be constructed by adding wires from the output gates of $D$ to the appropriate input gates of $C$. See also \Cref{fig:disperser}.
\end{proof}

\begin{figure}[h!]
    \centering
    \begin{tikzpicture}[tips=proper]
        \node[isosceles triangle,
            draw,
            isosceles triangle apex angle=80,
            rotate=90,
            minimum size=3cm] (T1) at (-3.75,0){};
        \node[isosceles triangle,
            draw,
            isosceles triangle apex angle=80,
            rotate=90,
            minimum size=3cm] (T2) at (3.75,0){};
        \node[trapezium,
            draw,
            trapezium stretches = false,
            minimum width = 4.5cm, 
            minimum height = 1.5cm] (TR) at ([yshift=-1.04cm,xshift=0.52cm]T2.220){};
            
         \draw [-{>[scale=1.5]},
            line join=round,
            decorate, decoration={
                zigzag,
                segment length=10,
                amplitude=2,post=lineto,
                post length=3pt
            }]  (-0.5,0) -- (0.5,0);
            
        \draw[color=black] (T1.center) node [fill=white] {\Large $C$};
        \draw[color=black] (T2.center) node [fill=white] {\Large $C$};
        
        \draw[color=black] (TR.center) node [fill=white] {\Large $D$};

        \draw[{Stealth[scale=0.5]}-{Stealth[scale=0.5]}] ([yshift=-0.3cm]T1.left corner) to node[midway,fill=white!30,scale=1] {\footnotesize $x\in\{0,1\}^n$} ([yshift=-0.3cm,xshift=-0.0cm]T1.west);
        \draw[{Stealth[scale=0.5]}-{Stealth[scale=0.5]}] ([yshift=-0.3cm,xshift=0.0cm]T1.west) to node[midway,fill=white!30,scale=1] {\footnotesize $y\in\{0,1\}^n$} ([yshift=-0.3cm]T1.right corner);
        
        \draw[{Stealth[scale=0.5]}-{Stealth[scale=0.5]}] ([yshift=-0.6cm,xshift=-0.1cm]TR.200) to node[midway,fill=white!30,scale=1] {\footnotesize $z\in\{0,1\}^m$} ([yshift=-0.6cm,xshift=0.1cm]TR.-20);
        
        \draw[{Stealth[scale=0.5]}-{Stealth[scale=0.5]}] ([yshift=-0.3cm]T2.left corner) to node[midway,fill=white] {\footnotesize $x\in\{0,1\}^n$} ([yshift=-0.3cm,xshift=-0.1cm]T2.west);
        
    \end{tikzpicture}
    \caption{Key construction for the proof of \Cref{lem:reduction-from-mmww-to-certification}. The co-nondeterministic circuit $C:\zo^n\times\zo^n\to\zo$ is augmented with a zero-error bit-fixing disperser $D:\zo^m\to\zo^n$. This construction yields a function $f:\zo^n\times\zo^m\to\zo$ whose gapped certificate complexity reflects the gap in the $\mmww$ problem associated with $C$.}
    \label{fig:disperser}
\end{figure}
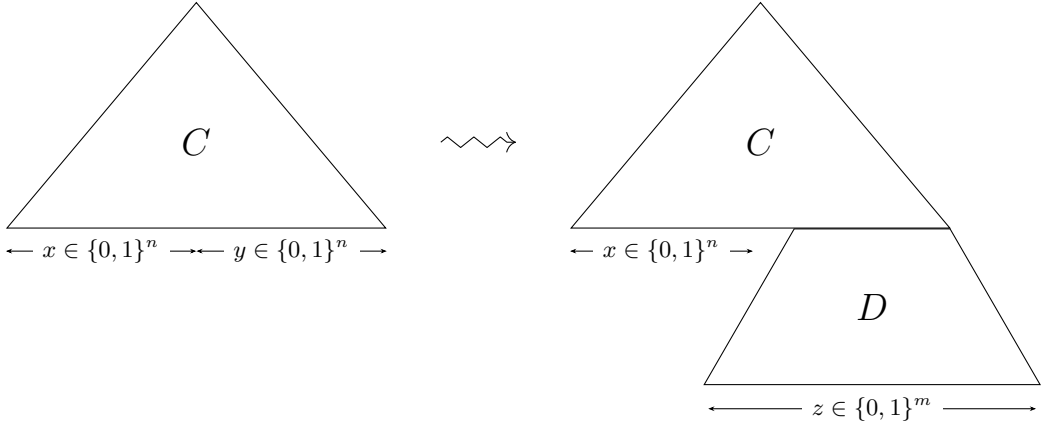

\subsection{Putting it all together: proof of~\Cref{thm:hardness of approximation}}

{ 
Suppose the theorem were false. That is, there is an efficient algorithm that can distinguish between inputs having certificate size $k$ versus requiring size at least $k\cdot n^{1-\eps}$ for some fixed constant $\eps>0$. Then, we will obtain a contradiction to \Cref{thm:umanslb} by solving $\mmww$ with a gap of $n^{1-\eps}$. 

Let $(C,k)$ be an instance of $\mmww$. Let $f:\zo^{n+m}\to\zo$ and $x^\star=\vec{1}$ be the function and input obtained from \Cref{lem:reduction-from-mmww-to-certification} for $m=3n$. Run the certification algorithm on $(f,\vec{1})$ and output YES if the algorithm outputs YES and NO otherwise. To show correctness, we consider the two cases of \Cref{thm:umanslb}: either $C$ accepts an input of length $k$ or every accepted input has length at least $k\cdot n^{1-\eps}$.

\paragraph{$C$ accepts an input with $k$ ones.}{
By the guarantee of \Cref{lem:reduction-from-mmww-to-certification}, there is a certificate of size $\le k$ for $f$'s value on $\vec{1}$. Therefore, our algorithm correctly outputs YES.
}
\paragraph{All strings $C$ accepts have at least $k\cdot n^{1-\eps}$ ones.}{
Using $k'=kn^{1-\eps}$ and $m=3n$, \Cref{lem:reduction-from-mmww-to-certification} guarantees that all certificates $|S|$ for $f$'s value on $\vec{1}$ have size at least $\min\{kn^{1-\eps},2n-\log^c(3n)\}$. We observe
$$
2n-\log^c(3n)> n\ge k\cdot n^{1-\eps}
$$
where the last step follows by our assumption on $C$ (trivially, any input that $C$ accepts can have at most $n$ ones so $n<kn^{1-\eps}$ would be impossible). Therefore,
$$
|S|>\min\{kn^{1-\eps},2n-\log^c(3n)\}=kn^{1-\eps}
$$
for all certificates $S$ of $f$'s value on $\vec{1}$ and therefore our algorithm correctly outputs NO. 
}\hfill\qed
}

\newpage 
\section*{Acknowledgments}

We thank the ITCS reviewers for their useful comments and feedback. 

Guy, Caleb, Carmen, and Li-Yang are supported by NSF awards 1942123, 2211237, and 2224246. Jane is supported by NSF Award 2006664. Caleb is also supported by an NDSEG fellowship.

\bibliography{bibtex}
\bibliographystyle{alpha}

\appendix
\section{Proof of \Cref{lem:n-gapped-cert-is-np-complete}}
\label{appendix:gappedcert-hardness}

The following proof uses dispersers. See \Cref{sec:lowerbounds} for the requisite definitions.

\paragraph{Reduction.}{
We obtain our hardness via a reduction from \textsc{SAT}. Let $\varphi:\zo^n\to\zo$ be a Boolean formula. If $\varphi(1^n)=1$ output ``satisfiable''. Otherwise, let $D:\zo^m\to\zo^n$ be a disperser obtained from \Cref{thm:dispersers} where $m\ge n$ will be chosen later. Construct a circuit for the function $f=\varphi\circ D$ by constructing $D$ and adding wires from the output gates of $D$ to the inputs of $\varphi$. We view $(f,x^\star=1^m)$ as an instance of \textsc{GappedCert}. Output ``unsatisfiable'' if the \textsc{GappedCert} algorithm outputs \textnormal{NO} and otherwise output ``satisfiable''. 
}

\paragraph{Correctness.}{We show that if $\varphi$ is unsatisfiable then our algorithm correctly outputs ``unsatsfiable'' and otherwise we correctly output ``satsfiable''.

\begin{enumerate}
    \item[$\circ$]\textbf{$\varphi$ is unsatisfiable.} If $\varphi$ is unsatisfiable, then $\Cert(\varphi)=0=\Cert(f)$. In this case, \textsc{GappedCert} outputs NO and our algorithm correctly returns ``unsatisfiable''.
    \item[$\circ$]\textbf{$\varphi$ is satisfiable.} Since $\varphi(1^n)=0$, $\varphi$ is a nonconstant function. The entropy threshold for $D$ is $\lambda=n+\log^c(m)$ which implies
\begin{align*}
    \Cert(f,x)=\Cert(\varphi\circ D,x)&\ge m-\lambda\tag{$\varphi$ is nonconstant}\\
    &=m-n-\log^c(m)
\end{align*}
for all $x\in\zo^m$. Therefore, any algorithm for \textsc{GappedCert}$(0,\Psi(m))$ where $m-n-\log^c(m)\ge \Psi(m)$ will output YES, in which case our algorithm correctly outputs ``satsfiable''. Choosing $m=\Theta(n)$ so that $\Psi(m)=\Omega(n)$ completes the proof.\hfill\qed
\end{enumerate}
}

\end{document}